\documentclass[11pt,a4paper]{article}

\usepackage[a4paper]{geometry}
\usepackage{fullpage}

\usepackage{amssymb,latexsym,mathtools,amsfonts,amsthm}
\usepackage{amsbsy}
\usepackage{bm}
\usepackage{mathrsfs}
\usepackage{extarrows}
\usepackage{enumerate}
\usepackage{enumitem}
\usepackage{float}
\usepackage{multirow}

\usepackage{graphicx}
\usepackage{overpic}
\usepackage{tikz}
\usetikzlibrary{arrows.meta,calc,decorations,decorations.markings}

\usepackage{color}
\usepackage{cite}
\usepackage{hyperref}

\hypersetup{
    hidelinks,
    colorlinks=true,
    linkcolor=blue,
    citecolor=blue
}

\newtheorem{theorem}{Theorem}[section]
\newtheorem{lemma}[theorem]{Lemma}

\newtheorem{assume}[theorem]{Assumption}
\newtheorem{RHP}[theorem]{RH Problem}

\theoremstyle{remark}
\newtheorem{remark}[theorem]{Remark}

\numberwithin{equation}{section}
\allowdisplaybreaks
\newcommand{\Ca}{\mathcal{C}}
\newcommand{\R}{\mathbb{R}}
\newcommand{\C}{\mathbb{C}}

\newcommand{\ii}{\mathrm{i}}

\newcommand{\e}{\mathrm{e}}

\newcommand{\q}{\mathbf{q}}

\newcommand{\h}{\mathbf{h}}
\newcommand{\bu}{\mathbf{u}}
\newcommand{\bTheta}{\mathbf{\Theta}}
\newcommand{\bbu}{\boldsymbol{u}}
\newcommand{\G}{\mathbf{G}}
\newcommand{\bA}{\mathbf{A}}
\newcommand{\bQ}{\mathbf{Q}}
\newcommand{\ce}{\boldsymbol{e}}
\newcommand{\bJ}{\mathbf{J}}
\newcommand{\bbX}{\mathbf{X}}
\newcommand{\bPhi}{\mathbf{\Phi}}
\newcommand{\m}{\mathbf{m}}
\newcommand{\w}{\mathbf{w}}

\newcommand{\M}{\boldsymbol{M}}
\newcommand{\T}{\boldsymbol{T}}
\newcommand{\I}{\boldsymbol{I}}
\newcommand{\N}{\boldsymbol{N}}
\newcommand{\bH}{\boldsymbol{H}}
\newcommand{\Y}{\boldsymbol{Y}}
\newcommand{\E}{\boldsymbol{E}}
\newcommand{\V}{\boldsymbol{V}}
\newcommand{\bV}{\widetilde{\boldsymbol{V}}}
\newcommand{\bv}{\boldsymbol{v}}
\newcommand{\sig}{\boldsymbol{\sigma}}
\newcommand{\bw}{\boldsymbol{w}}
\newcommand{\bL}{\boldsymbol{L}}
\newcommand{\bX}{\boldsymbol{X}}

\newcommand{\A}{\boldsymbol{\mathcal{A}}}
\newcommand{\B}{\mathbf{B}}
\newcommand{\cD}{\mathcal D}

\newcommand{\X}{\mathcal{X}}
\newcommand{\cP}{\mathcal{P}}
\newcommand{\rP}{\mathrm{P}}
\newcommand{\rz}{\mathsf{z}}
\newcommand{\cE}{\boldsymbol{\mathcal{E}}}
\newcommand{\xcE}{\mathcal{E}}

\newcommand{\eps}{\epsilon}

\DeclareMathOperator{\diag}{diag}
\DeclareMathOperator*{\Res}{Res}
\DeclareMathOperator*{\Ai}{Ai}
\DeclareMathOperator*{\re}{Re}
\DeclareMathOperator*{\im}{Im}

\newcommand{\bPi}{\mathbf{\Pi}}

\def\be{\begin{equation}}
\def\ee{\end{equation}}
\def\bse{\begin{subequations}}
\def\ese{\end{subequations}}
\def\bpm{\begin{pmatrix}}
\def\epm{\end{pmatrix}}
\def\bi{\begin{itemize}}
\def\ei{\end{itemize}}

\title{\bf
Painlev\'e-type asymptotics for the defocusing Manakov system with nonzero boundary conditions}
\date{}

\author{\hspace{0.6 cm}{Haibing Zhang$^{a}$, Xianguo Geng$^{a,b}$\footnote{\footnotesize
 Corresponding author. {\sl Email address}: xggeng@zzu.edu.cn}, Ruomeng Li$^{a}$, Huan Liu$^{a}$}\\
\leftline{\hspace{0.6 cm}{\small{\sl $^{a}$ School of Mathematics and Statistics, Zhengzhou University, 100 Kexue Road, Zhengzhou, }}}\\
\leftline{\hspace{0.6 cm}{\small{\sl \quad Henan 450001, People's Republic of China}}}\\
\leftline{\hspace{0.6 cm}{\small{\sl $^{b}$ Institute of Mathematics, Henan Academy of Sciences, Zhengzhou, Henan 450046,}}}\\
\leftline{\hspace{0.6 cm}{\small{\sl \quad People's Republic of China}}}}

\begin{document}
\maketitle{}
\begin{abstract} The defocusing Manakov system is the two-component vector analogue of the defocusing nonlinear Schr\"odinger equation. In this paper, we study this system under parallel nonzero boundary conditions \[ \ii \q_t+\q_{xx}+2(q_0^2-\|\q\|^2)\q=0,\qquad \q(x,t)\to \q_\pm,\quad x\to\pm\infty, \] where \(\|\q_\pm\|=q_0\) and \(\q_+=\e^{\ii(\theta_+-\theta_-)}\q_-\), with \(\theta_\pm\in[0,2\pi)\). We analyze the long-time asymptotics of the solution in the transition region \[ \cP= \left\{(x,t)\in\R\times\R_+:\ \left|\frac{x}{2t}+q_0 \right|\le C t^{-2/3}\right\}, \] which separates the soliton region from the solitonless region. Under a genericity assumption at the branch point and a sufficiently rapid convergence assumption toward the nonzero background, we derive a uniform asymptotic formula for the solution in \(\cP\). Apart from the modulated background, the first correction is of order \(t^{-1/3}\) and is expressed in terms of the Hastings--McLeod solution of the Painlev\'e II equation, while the error term is of order \(\mathcal{O}(t^{-2/3}\log t)\). The proof is based on the \(3\times3\) inverse-scattering framework for the defocusing Manakov system with nonzero boundary conditions~\cite{BD2015-1} and on a Deift--Zhou steepest descent analysis of the associated \(3\times3\) Riemann--Hilbert problem.

A central feature of the present analysis is that the local parametrix near
\(-q_0\) is not given by the usual Painlev\'e II model alone.  Instead, it is a
coupled Painlev\'e--error-function model: a Painlev\'e II Riemann--Hilbert
problem and an error-function Riemann--Hilbert problem are combined in a single
local construction.  This type of coupled local construction was recently
encountered in the steepest descent analysis of the ``bad'' Boussinesq
equation~\cite{CL2024main} and arises here in the finite-density vector NLS
setting.

\end{abstract}

\vspace{0.2cm}
\noindent{\bf Keywords}\quad defocusing Manakov system; nonzero boundary
conditions; nonlinear steepest descent; Painlev\'e II
equation; error-function parametrix.

\vspace{0.2cm}
\noindent{\bf Mathematics Subject Classification}\quad 35Q55, 35Q15, 35B40,
37K40.

\tableofcontents

\section{Introduction}

The Manakov system~(see, e.g.,~\cite{M1974,APT2004}) is a fundamental example of an integrable vector nonlinear wave equation.  It arises as a two-component generalization of the nonlinear
Schr\"odinger~(NLS) equation and appears, for example, in nonlinear optics and in the
theory of multi-component Bose--Einstein condensates~\cite{AH1981,EG2003,CP1999,W1999,HCHE2011,YCHH2012}.  In this paper, we
consider the defocusing Manakov system with nonzero boundary conditions~(NZBCs)
\be \label{E:demanakovS}
        \ii \q_t+\q_{xx}+2(q_0^2-\|\q\|^2)\q=0,
        \qquad
        \q=(q_1,q_2)^{\top},
\ee
\be \label{E:bjtj}
        \lim_{x\to\pm\infty}\q(x,t)=\q_\pm,\qquad
        \|\q_\pm\|=q_0>0,\qquad
        \q_+=\q_- \e^{\ii(\theta_+-\theta_-)}.
\ee
Note that system~\eqref{E:demanakovS} here differs from the classical Manakov equation~\cite{M1974} by an additional term $2 q_0^2 \q$. One can remove this term by the simple rescaling $\q(x,t) \to \q(x,t) \e^{-2\ii q_0^2 t}$, and this term was added so that the boundary conditions~\eqref{E:bjtj} are independent of time.  The boundary condition~\eqref{E:bjtj} is usually referred to as the \emph{parallel} NZBCs.   If one component of \(\q\) vanishes identically, then
\eqref{E:demanakovS}--\eqref{E:bjtj} reduce to the scalar defocusing NLS
equation with finite-density boundary conditions; see for example~\cite{F1987,CuJe2016}.
Thus, \eqref{E:demanakovS}--\eqref{E:bjtj} may be viewed as a vector
finite-density initial-value problem.

The Manakov system is a completely integrable  system and admits a  Lax-pair representation~\cite{M1974}.  This makes it possible to study the Cauchy
problem by means of the inverse scattering transform~(IST), which may be viewed as a
nonlinear analogue of the Fourier transform.  For rapidly decaying initial data,
the IST for scalar and vector NLS equations is
classical~\cite{ZS1972,M1974,APT2004}. However, the finite-density case is much more delicate because the nonzero background changes the spectral structure,  and dark solitons may occur. The IST for the  scalar NLS equation with NZBCs  was  studied 
in~\cite{ZS1972,F1987,DPVV2012,BK2014}.   Extending the IST from the scalar to the vector case requires additional considerations. Unlike the scalar problem, not all Jost eigenfunctions are analytic in the required domains, and auxiliary eigenfunctions associated with the adjoint spectral problem are needed in order to build a complete analytic basis.  The
IST for the vector NLS system with NZBCs  was developed rigorously only
more recently; see~\cite{PAB2006,BD2015-1,BD2015-2,PBA-2011,BKP-2016}.

\begin{figure}
\centering
\begin{tikzpicture}[scale=0.91,>=latex,font=\small]

\draw[->, line width=0.9pt] (-7.2,0) -- (7.2,0) node[below] {\(x\)};
\draw[->, line width=0.9pt] (0,0) -- (0,3.8) node[left] {\(t\)};

\fill[blue!18] (0,0) -- (5.55,3.1) -- (-5.55,3.1) -- cycle;
\fill[green!18] (0,0) -- (-7,0) -- (-7,3.1)--(-6.65,3.1) -- cycle;
\fill[green!18] (0,0) -- (7,0) -- (7,3.1)--(6.65,3.1) -- cycle;

\draw[dashed, line width=0.9pt] (0,0) -- (5.55,3.1);
\draw[dashed, line width=0.9pt] (0,0) -- (6.65,3.1);
\draw[dashed, line width=0.9pt] (0,0) -- (-5.55,3.1);
\draw[dashed, line width=0.9pt] (0,0) -- (-6.65,3.1);

\draw[line width=1.2pt] (0,0) -- (6.2,3.1);
\draw[line width=1.2pt] (0,0) -- (-6.2,3.1);

\node[align=center] at (0,2.1)
{
soliton region: $|\xi| <q_0$
};

\node[align=center] at (4.85,1.0)
{
solitonless region\\
\(|\xi|>q_0\)
};

\node[align=center] at (-4.85,1.0)
{
solitonless region:\\
\(|\xi|>q_0\)
};

\node[align=center] at (6.55,3.75)
{
transition\\
\(\xi\approx q_0\)
};

\node[align=center] at (-6.55,3.75)
{
transition\\
\(\xi\approx -q_0\)
};

\node[below left] at (0,0) {\(0\)};

\end{tikzpicture}
\caption{Space-time regions for the defocusing Manakov system with \(q_0=1\).
The two solid rays correspond to \(\xi=\pm q_0\), where $\xi=x/(2t)$.  The shaded central
wedge is the soliton region, the outer domains are the
solitonless region, and the narrow bands indicated by dashed
lines represent the transition regions.}
\label{fig:regions}
\end{figure}

In the modern formulation of inverse scattering theory, the inverse problem
is usually recast as a Riemann--Hilbert (RH) problem. In the pioneering
work~\cite{pz93}, Deift and Zhou introduced the nonlinear steepest descent
method for oscillatory RH problems and thereby established the long-time
asymptotics of solutions to the modified KdV equation. Since then, the
nonlinear steepest descent method and its extensions have become fundamental
tools in the long-time asymptotic analysis of integrable systems. In particular, this approach has been successfully applied to the scalar focusing and defocusing NLS equations with vanishing boundary conditions \cite{PZ94,PZ2002,PZto94,DM2008,BJM2018}, as well as to scalar NLS equations with NZBCs \cite{BM2017,BLM2021,DP2019,DP2024,DLM2020,H2003-1,H2002,CuJe2016,WF2022,WF2023}. For comparison, let us recall the asymptotic picture for the defocusing scalar NLS equation with finite-density NZBCs. After normalizing the background amplitude to be \(q_0=1\), the solution admits the following long-time asymptotic descriptions in the
half-plane \(t\geq0\):
\begin{enumerate}
\item In the soliton region \(\left|x/(2t)\right| < 1\), one has
\[
q(x,t)=q_{\rm sol}(x,t)+\mathcal O(t^{-1}),\qquad t \to \infty,
\]
where \(q_{\rm sol}(x,t)\) denotes the modulated dark-soliton term; see
Ref.~\cite{CuJe2016}.
\item In the solitonless region \(\left|x/(2t)\right|>1\), one has
\[
q(x,t)=q_{\rm bg}+t^{-1/2}q_{\rm rad}(x,t)+\mathcal O(t^{-3/4}), \qquad t \to \infty,
\]
where $q_{\rm bg}$ denotes the modulated background, and \(q_{\rm rad}(x,t)\) denotes the modulated linear wave; see
Ref.~\cite{WF2022}.

\item In the transition regions \(\frac{x}{2t}\approx \pm1\), one has
\[
q(x,t)=q_{\rm bg}+t^{-1/3}q_{\rm p}(x,t)+\mathcal O(t^{-1/2}), \qquad t \to \infty,
\]
where \(q_{\rm bg}\) denotes the modulated background, and \(q_{\rm p}(x,t)\) can be expressed in terms of a solution of the Painlev\'e II equation; see Ref.~\cite{WF2023}. We emphasize that the \(\mathcal O(t^{-1/2})\) term is a pure error term: it is not produced by a parabolic-cylinder parametrix or by any other similar local parametrix. In fact, under sufficiently strong regularity assumptions and sufficiently rapid convergence of the initial data to the background, as imposed in the present paper, this error is expected to be improved to \(\mathcal O(t^{-2/3})\).
\end{enumerate}

Despite these advances in the scalar theory, the corresponding long-time asymptotic analysis for the defocusing Manakov system with NZBCs had remained largely open. This problem was also mentioned as an important open direction in Refs.~\cite{BD2015-1,BKP-2016,P2023}. In our recent work~\cite{GZW2025}, we made some progress in this direction.  We
showed that the solution of \eqref{E:demanakovS}--\eqref{E:bjtj} admits an asymptotic picture analogous to that in the scalar finite-density NLS case.  More precisely, the \((x,t)\)-half-plane can be divided into three
regions:
\[
\left|\frac{x}{2t}\right|<q_0,\qquad
\left|\frac{x}{2t}\right|>q_0,\qquad
\frac{x}{2t}\approx \pm q_0,
\]
which are referred to as the soliton region, the solitonless region, and the transition regions, respectively; see Fig.~\ref{fig:regions}.
However, the structure and features of the asymptotic formula in the vector case
are not merely a direct copy of its scalar counterpart.
 In~\cite{GZW2025}, we showed that the long-time behavior in the soliton region differs substantially from that in the scalar case. In particular, the soliton term propagates together with an \(\mathcal O(t^{-1/2})\) radiation term, whereas no such radiation term appears in the corresponding scalar asymptotics. More precisely, for \(|x/(2t)|<q_0\), we obtained
\[
\q(x,t)
=
\q_{\rm sol}(x,t)
+t^{-1/2}\q_{\rm rad}(x,t)
+\mathcal O(t^{-1}\log t),
\qquad t\to\infty .
\]
This radiation term can be interpreted as an intrinsic vectorial effect.
The method developed for the soliton region can be extended to the solitonless region \(|\xi|>q_0\) without introducing any essential new difficulty. Indeed, after introducing suitable conjugations and factorizations, as in the soliton-region analysis, one performs the lens-opening transformation and reduces the corresponding RH problem to a parabolic-cylinder parametrix similar to the one appearing in the soliton region. Consequently, the asymptotic formula in the solitonless region consists of a modulated background together with an \(\mathcal O(t^{-1/2})\) dispersive radiation term, in agreement with the scalar finite-density case.

In order to obtain a global description of the solution for the vector finite-density initial-value problem, it remains to analyze the asymptotic behavior in the two transition regions. This leads naturally to the following questions: is the long-time asymptotic behavior in the transition regions still analogous to that in the scalar case? More importantly, is the underlying Deift--Zhou steepest descent mechanism also analogous?

The main purpose of the present paper is to answer these two questions.  More
precisely, we analyze the long-time asymptotics of the solution in the
transition regions.  Our analysis leads to the following interesting observation. The final long-time asymptotic formula in the
transition regions have the same structural form as in the scalar finite-density NLS equation, but the mechanism behind it is essentially different. 
We explain this point as follows.
For the scalar finite-density NLS equation, the corresponding local parametrix
can be matched directly to the standard Painlev\'e II RH
problem~\cite{WF2023}. In the present vector problem, this is no longer
sufficient. The local construction near \(-q_0\) involves a coupled Painlev\'e--error-function model: one part is governed by the standard Painlev\'e II RH problem and produces the Hastings--McLeod function, while the other part is governed by an error-function RH problem. This is precisely why we say that the mechanism is different. In the large-\(z\)
expansion of the RH solution, as \(t\to\infty\), the
\(1/z\)-coefficient contains, besides the background and the
\(\mathcal O(t^{-1/3})\) Painlev\'e contribution, an additional
\(\mathcal O(t^{-1/2})\) term before the final error term appears. This is in
sharp contrast with the scalar case, where the expansion contains only the
background, the \(\mathcal O(t^{-1/3})\) Painlev\'e contribution, and the error
term. Surprisingly, however, the \(\mathcal O(t^{-1/2})\) contribution produced
by the error-function model does not contribute to the final asymptotic formula for the finite-density vector initial-value problem. 
As a consequence, the final result still has the same form as in the scalar
case.

Interestingly, such a coupled model was recently introduced by Charlier and
Lenells in their steepest descent analysis of the ``bad'' Boussinesq
equation~\cite{CL2024main}.  We show that the same type of local model also
arises in the defocusing Manakov system with NZBCs.  To the best of our
knowledge, after the work of Charlier and Lenells on the Boussinesq equation,
this provides another example in integrable systems where a Painlev\'e II model
and an error-function model are coupled within a single local parametrix.

 Since the left and right transition regions can be treated
in a similar way, we focus on the left transition region
\be \label{E:PL}
\cP=
\left\{(x,t)\in\R\times\R_+:\ |\xi+q_0|\le Ct^{-2/3}\right\},
\qquad
\xi=\frac{x}{2t},
\ee
where \(C>0\) is fixed. The scaling in~\eqref{E:PL} is consistent with the transition scale that appears in the scalar finite-density NLS problem~\cite{WF2023}.
Indeed, as \(\xi\) approaches \(-q_0\), the stationary points of the phase
functions approach the branch point \(-q_0\).  Hence the usual quadratic
approximation of the phase, which leads to parabolic-cylinder parametrices, is no longer sufficient.  Under the scaling
\[
z+q_0=\mathcal O(t^{-1/3}),\qquad
(\xi+q_0)t^{2/3}=\mathcal O(1),
\]
the cubic term in \(z+q_0\) and the perturbation caused by \(\xi+q_0\) are balanced, which gives a Painlev\'e-type local parametrix.

Finally, we mention that the Deift--Zhou steepest descent method for RH
problems has also been used to study the long-time behavior of many other
integrable systems; see, for instance,
\cite{pz93,CLPa2020,HZ2022,WF2023,WXF2025,BIS2010,XYZ2024,HC2020,
WangZhu2025BoussinesqPainleve,WangZhu2025SK}.  The present work follows this
general line, but the finite-density vector setting leads to a different local
analysis in the transition region.

\subsection{Main results} 
We now state the main result. The assumptions used below are formulated precisely in Section~\ref{S:reIST}. In brief, they require rapid decay of the perturbation from the nonzero background and exclude nongeneric degeneracies at the branch points. 

Let us first briefly recall the inverse-scattering framework for the defocusing Manakov system with NZBCs; see Section~\ref{S:reIST} for more details. Starting from initial data satisfying the above assumptions, one obtains a set of scattering data through the direct scattering transform. We denote the scattering data generated by the corresponding initial data by 
\be \label{E:shsjtlyy} 
\sigma_d= \left\{ r_1(z),\ r_2(z),\ r_3(z),\ \{ \zeta_j,\tau_j\}_{j=1}^{N_1},\ \{ z_j,\kappa_j\}_{j=1}^{N_2} \right\}.
 \ee
 The time dependence of these scattering data is determined by the time part of the Lax pair for the Manakov system. The recovery of the potential from the time-dependent scattering data is called the inverse scattering transform. In the modern formulation of the inverse scattering method, this recovery is usually carried out by encoding the scattering data into an appropriate RH problem and then reconstructing the potential from the solution of that RH problem. A main result of~\cite{BD2015-1} shows that the solution of \eqref{E:demanakovS}--\eqref{E:bjtj} can be represented in terms of the solution of a \(3\times3\) RH problem. The main purpose of the present paper is to develop a suitable Deift--Zhou steepest descent analysis for this RH problem in the transition region \(\cP\), and thereby to derive the transition asymptotics for the vector finite-density initial-value problem. 

The main theorem of this paper is stated as follows.

\begin{theorem}[Painlev\'e asymptotics in \(\cP\)]\label{Th:main}
Let \(\q(x,t)\) be a smooth solution of the defocusing Manakov system \eqref{E:demanakovS} satisfying the NZBCs~\eqref{E:bjtj}. Assume that the initial data satisfy Assumptions~\ref{As:gesa} and~\ref{As:1}, and let \(\sigma_d\), given by~\eqref{E:shsjtlyy}, be the corresponding scattering data. Then, as \(t\to+\infty\),
\be \label{E:asygs}
\q(x,t)
=
\e^{\ii \aleph}
\left(1+t^{-1/3}f_{\rm P}(x,t)\right)\q_+
+\mathcal{O}(t^{-2/3}\log t),
\ee
uniformly for \((x,t)\in\cP\).  Here the phase \(\aleph\) is given by
\[
\begin{aligned}
\aleph
=&\frac{1}{2\pi}\int_{0}^{\infty}
s^{-1}
\log\biggl(
1-\frac{1}{\gamma(s)}|r_1(s)|^2-|r_2(s)|^2
\biggr)\,\mathrm{d}s  \\
&+\frac{1}{2\pi}\int_{-\infty}^{-q_0}
s^{-1}
\log\left(
1+\frac{1}{\gamma(s)}|r_3(s)|^2
\right)\,\mathrm{d}s+\sum_{j=1}^{N_1}2\arg \zeta_j
+\sum_{j=1}^{N_2}2\arg z_j 
\end{aligned}
\]
with $\gamma(z)=1-\frac{q_0^2}{z^2}$,
and the bounded function \(f_{\rm P}(x,t)\) is defined by
\[
f_{\rm P}(x,t)
=
\frac{\ii}{2q_0 (\frac{3}{4q_0})^{1/3}}
\left[
\int_y^\infty u_{ HM}^2(s)\,\mathrm{d}s+u_{ HM}(y)
\right],
\qquad
y=2\left(\frac{4q_0}{3}\right)^{1/3}(\xi+q_0)t^{2/3}.
\]
Here \(u_{ HM}\) denotes the Hastings--McLeod solution of the Painlev\'e II
equation
\[
        u''(\tau)=\tau u(\tau)+2u^3(\tau),
\]
characterized by
\be \label{E:uHMbjtj}
u_{HM}(\tau)=
\begin{cases}
\Ai(\tau)(1+o(1)), & \tau\to+\infty,\\[1mm]
\sqrt{-\tau/2}\,(1+o(1)), & \tau\to-\infty.
\end{cases}
\ee
\end{theorem}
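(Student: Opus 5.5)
We follow the Deift--Zhou nonlinear steepest descent scheme for the $3\times3$ RH problem of~\cite{BD2015-1}, in uniformized variable $z$, with jumps on $\R$ and on the circle $|z|=q_0$. The reconstruction formula expresses $\q(x,t)$ through the $1/z$-coefficient of the normalized solution $\M(z;x,t)$. We carry out the steps below in order.

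\textbf{Step 1 (poles and conjugation).} First we remove the discrete spectrum $\{\zeta_j\}$ and $\{z_j\}$. For $\xi$ near $-q_0$ all of these lie in the soliton sector and are exponentially separated from the relevant stationary points. We therefore interpolate the residue conditions into small-circle jumps and conjugate by a Blaschke-type factor. This factor produces the terms $2\arg\zeta_j$ and $2\arg z_j$ in $\aleph$.

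Next we introduce a scalar function $\delta(z)$, solving a scalar RH problem on $(0,\infty)\cup(-\infty,-q_0)$ with jumps $1-\gamma^{-1}|r_1|^2-|r_2|^2$ and $1+\gamma^{-1}|r_3|^2$. We conjugate $\M$ by $\diag(\delta,\delta^{-1}\cdot,\dots)$, arranged so that the jump matrices admit the appropriate upper/lower triangular factorizations on each interval. Here we use the parallel NZBC, the symmetries $z\mapsto\bar z$ and $z\mapsto q_0^2/z$, and the genericity Assumption~\ref{As:gesa}. The value $\delta(0)$, or equivalently the behavior at infinity versus zero, yields the phase $\e^{\ii\aleph}$ given by the stated integrals.

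\textbf{Step 2 (lenses and local scaling).} We open lenses around $\R$ using analytic approximations of $r_j$, obtained by the standard decomposition into analytic part plus small remainder; this is where Assumption~\ref{As:1} enters. Away from $\pm q_0$ the jumps then become exponentially close to the identity. In $\cP$ the two stationary points of the phase $\theta(z)=\tfrac12 z(\ldots)$ coalesce at the branch point $-q_0$. Expanding, we find
\[
t\theta(z)\approx \tfrac43 k^3+ y k,\qquad k=c\,t^{1/3}(z+q_0),
\]
with $c=(3/(4q_0))^{-1/3}$ up to normalization and $y$ as in the theorem. Near $-q_0$ the reflection coefficient behaves generically like $r(z)\to\pm1$ (the branch-point genericity); this is the standard source of the Hastings--McLeod parameter choice $s_1=-s_3=\ii$, $s_2=0$. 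The Step~1 factorizations near $-q_0$ involve two distinct $2\times2$ blocks of the $3\times3$ jump. One block is exactly the Painlev\'e II RH problem. The other block has a jump that is triangular with a coefficient that is constant across $k=0$ but changes across a ray. This block is solved by the error-function model, as in~\cite{CL2024main}.

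\textbf{Step 3 (local parametrix and matching) — the main obstacle.} We build the coupled Painlev\'e--error-function parametrix $\M^{(-q_0)}$ in a disk of radius $\epsilon$, together with the symmetric construction at $q_0$ via $z\mapsto q_0^2/z$. We then verify matching $\M^{(-q_0)}\M^{\infty\,-1}=I+\mathcal O(t^{-1/3})$ on the boundary. Combining the two models consistently in one $3\times3$ matrix, so that the jumps commute appropriately and the mismatch is controlled, is the hardest point. I would first try a product ansatz: embed the Painlev\'e solution in one block, multiply by the error-function solution in another block, and correct the residual jump by a small-norm factor. The error-function model contributes at order $t^{-1/2}$. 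We must show that its $1/z$-coefficient is supported in an entry that does not enter the reconstruction formula for $\q$, or that it cancels between $-q_0$ and its mirror at $q_0$. Either possibility explains why it does not appear in~\eqref{E:asygs}.

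\textbf{Step 4 (small norm and extraction).} The error matrix $\E$ solves a small-norm RH problem. Its jump on the disk boundaries is $I+t^{-1/3}\Delta_1+t^{-2/3}\Delta_2+\cdots$, and it is exponentially small elsewhere. Any $\bar\partial$ or non-analytic remainders give $\mathcal O(t^{-2/3}\log t)$. By residue calculus we expand
\[
\E_1=t^{-1/3}\,\frac{1}{2\pi \ii}\oint \Delta_1+\mathcal O(t^{-2/3}\log t).
\]
The Painlev\'e II data enter through the expansion $\Psi(k)=I+\Psi_1/k+\dots$, which gives $(\Psi_1)_{12}=\tfrac{\ii}{2}u_{HM}(y)$ and $(\Psi_1)_{11}=-\tfrac{\ii}{2}\int_y^\infty u_{HM}^2$. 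Substituting into the reconstruction formula for $\q$ and undoing the conjugations yields $\e^{\ii\aleph}(1+t^{-1/3}f_{\rm P})\q_+$, with the stated constant $\tfrac{\ii}{2q_0(3/(4q_0))^{1/3}}$. Uniformity in $\cP$ follows because $y$ ranges over a compact set, on which $u_{HM}$ is pole-free and bounded.
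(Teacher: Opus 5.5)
Your Step~3 product ansatz (Painlev\'e solution in the $\{1,3\}$ block, error-function solution in the variable $\alpha=(\frac43 q_0)^{1/3}t^{1/6}\beta$ inside a small disk around $\beta=0$, plus a small-norm correction) is essentially how the paper proves Theorem~\ref{Th:asXi}. However, the plan has a genuine gap at $z=0$, which you never address.

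The RH problem is singular there: $\M=z^{-1}\sig_1+\mathcal O(1)$. Moreover, $\theta_{12}$ and $\theta_{32}$ have essential singularities at the origin, and the triangular factorizations on $\R_-$ and $\R_+$ differ, so the lens contours close at $z=0$. Hence the jump away from $-q_0$ is not exponentially small; near $0$ it is only $\mathcal O(t^{-1/2})$ in $L^\infty$.

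The paper removes the pole by the gauge transformation $\M^{(1)}=(\I+\A/z)\N$ with $\A=\sig_1\N_+^{-1}(x,t,0)$ (Lemma~\ref{Th:rel}). Its proof in Appendix~\ref{App:AAAA1} needs the $\hat z$-symmetry, the branch-point conditions, and a continuity argument fixing $(\N_+(x,t,0))_{22}=1$. As a result, the reconstruction uses not only the $1/z$-coefficient $\bL$ of the small-norm solution but also its value at the origin. This enters through $\bigl(\N_{+,2}(x,t,0)\times\N_{+,3}(x,t,0)\bigr)_{11}$ with $\N_+(x,t,0)=\T_\infty\cE(x,t,0)\T^{-1}(0)$.

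This is where most of the answer comes from. $\bL$ supplies only the $u_{HM}(y)$ part of $f_{\rm P}$. The background $\e^{\ii\aleph}\q_+$ (from $\delta_1(0)\rP_1(0)/\delta(0)$) and the $\int_y^\infty u_{HM}^2$ term both come from $\cE(x,t,0)$.

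Controlling $\cE(x,t,0)$ in turn requires the jump to vanish at $z=0$. This rests on two facts:
\begin{itemize}
\item $r_2,r_3=\mathcal O(z^2)$, which is why Assumption~\ref{As:1} includes the third derivative;
\item the trace-formula cancellation $R_1+\tilde r_1/(\delta_1^2(z)\delta_1(\hat z))=\mathcal O(z)$ in the $(2,1)$ entry (Lemma~\ref{L:estoutq0}).
\end{itemize}
Since that entry vanishes only to first order, only columns $2$ and $3$ of $\cE(x,t,0)$ are controlled to $\mathcal O(t^{-2/3}\log t)$. That is exactly what the cross product needs.

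Your Step~4 (``substitute into the reconstruction formula and undo the conjugations'' with an unspecified outer model $\M^\infty$) omits all of this. If $\M^\infty$ is meant to carry the $\sig_1/z$ singularity, you would have to construct it and show that conjugating the jumps through $z=0$ by it keeps them small.

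Your explanation of why the error-function term disappears is also partly wrong. The map $z\mapsto q_0^2/z$ fixes $-q_0$; it does not send it to $q_0$. In $\cP$ no parametrix is needed at $q_0$: all four stationary points $\rz_0,\dots,\rz_3$ of $\Phi_{32},\Phi_{21},\Phi_{31}$ (not two of a single phase) converge to $-q_0$, at rates $t^{-2/3}$ and $t^{-1/3}$. The phases $\theta_{12}$ and $\theta_{32}$ are nondegenerate (quadratic) at $-q_0$ while $\theta_{13}$ is cubic. That is why the $(1,2)$ and $(3,2)$ entries produce an error-function model on the finer scale $t^{-1/2}$ next to the Painlev\'e model. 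So there is no mirror cancellation.

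The actual mechanism is structural. $\M^E_1$ is supported in the second column and $\M^P(y,0)$ fixes the index $2$, so the $t^{-1/2}$ contribution sits in the second column. It drops out of $\bL_{21}$ and $\bL_{31}$. It does appear in $[\cE]_2(x,t,0)$, see~\eqref{E:cE2estin1}, and is eliminated only because in the cross product its third component is multiplied by $\cE_{23}(x,t,0)=\mathcal O(t^{-2/3}\log t)$.

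Minor points:
\begin{itemize}
\item The defocusing problem has jumps only on $\R$; the circle $|z|=q_0$ carries eigenvalues, not a jump.
\item Under Assumption~\ref{As:1} the $r_j$ are analytic in a strip, so no analytic approximation is needed.
\item The $\log t$ comes from the endpoint bound $|\delta(z)-\delta(-q_0)|\le C|z+q_0|\,|\log|z+q_0||$ at $-q_0$, not from $\bar\partial$ remainders.
\end{itemize}
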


The proof of Theorem~\ref{Th:main}  is  presented  in Section~\ref{S:dza}.  The theorem
shows that, apart from the modulated background, the first correction is of order \(t^{-1/3}\), with a scalar coefficient governed by the Painlev\'e II transcendent.  This indicates that
the asymptotic structure of the solution in the vector problem is similar to
that in the scalar case.  Nevertheless, several features of our asymptotic formula are specific to the vector setting and differ from their scalar counterparts.

First, the modulation phase \(\aleph\) contains the term
$
        \sum_{j=1}^{N_2}2\arg z_j 
$,
which has no counterpart in the scalar modulation phase; see
\cite[Eq.~(1.6)]{WF2023}.  The reason is that the discrete spectral points
\(\{z_j\}_{j=1}^{N_2}\) correspond to dark--bright soliton components.  Such
dark--bright soliton solutions are specific to the vector NLS system and do
not occur in the scalar finite-density NLS equation.  Their presence produces
an additional phase modulation, which is precisely the contribution represented
by \(\sum_{j=1}^{N_2}2\arg z_j\).
Second, the error estimate also reveals a distinction between the scalar and
vector transition analyses. In the scalar finite-density NLS problem~\cite{WF2023}, the
transition asymptotics are governed by a pure Painlev\'e II model. 
Under sufficiently strong regularity assumptions and sufficiently rapid
convergence of the initial data to the background, as imposed in the present
paper, the error estimate in~\cite[Eq.~(1.5)]{WF2023} is expected to be
improved to order \(\mathcal O(t^{-2/3})\).   In contrast, the error term obtained in Theorem~\ref{Th:main} for the 
Manakov system is \(\mathcal O(t^{-2/3}\log t)\).  The additional logarithmic factor  comes from the coupled Painlev\'e--error-function
local parametrix near the branch point and reflects the higher-dimensional
\(3\times3\) structure of the Manakov RH problem.

We conclude this section with several remarks.

\begin{remark}
The function \(f_{\rm P}(x,t)\) is uniformly bounded in \(\cP\).  This follows
directly from the fact that the scaled variable $y$
remains bounded for \((x,t)\in\cP\), together with the smoothness of the
Hastings--McLeod solution \(u_{ HM}\).
\end{remark}

\begin{remark}
Theorem~\ref{Th:main} is formulated under a genericity assumption at the branch
point.  If this genericity condition fails, the local model is modified, while
the overall steepest descent strategy remains the same.  In that case, one
obtains an analogous transition formula in which the Hastings--McLeod solution
is replaced by an Ablowitz--Segur solution of the Painlev\'e II equation.
\end{remark}

\noindent{\bf Organization of the paper.}
Section~\ref{S:reIST} recalls the inverse scattering and RH
formulation for the defocusing Manakov system with NZBCs.
Section~\ref{S:mp} is devoted to the coupled Painlev\'e--error-function model
problem.  In Section~\ref{S:dza}, we perform the nonlinear steepest descent
analysis.  The proof of
Theorem~\ref{Th:main} is completed in Section~\ref{S:pthmain}.  Finally, some technical details are presented in  Appendices~\ref{App:AAAA1} to \ref{App:AAA2}.

\noindent{\bf Notation.}
The superscripts \(\top\) and \(\dagger\) denote transpose and Hermitian
conjugation, respectively.  The asterisk denotes complex conjugation.  The
symbols \(C>0\) and \(c>0\) denote generic constants whose values may change
from line to line.  Unless otherwise stated, $\log (z)$ always denotes
the principal branch of the logarithm. We write \(\R_+=(0,+\infty)\), \(\R_-=(-\infty,0)\), and
denote the upper and lower half-planes by \(\C_+\) and \(\C_-\).  For a
matrix-valued function on a contour, all \(L^p\)-norms are understood
entrywise.




\section{The IST for the defocusing Manakov system with NZBCs}\label{S:reIST}
In this section, we  review the IST of the defocusing Manakov system~\eqref{E:demanakovS} with NZBCs~\eqref{E:bjtj}. Since these results have been well established in Ref.~\cite{BD2015-1}, we will omit the proof. Throughout this paper, we assume that the potential $\q(x,t)$ is sufficiently smooth and decays rapidly to the non-zero background~\eqref{E:bjtj}. The potentials we consider constitute merely a subclass of those in~\cite{BD2015-1} for which the inverse scattering analysis can proceed smoothly. This is because our primary objective is to reveal the asymptotic characteristics of the solutions, and thus we do not take into account solutions with low regularity. 

As usual, the IST for an integrable system is based on its formulation in terms of a Lax pair. It is well-known that the defocusing Manakov system~\eqref{E:demanakovS}  possesses a $3\times 3$ matrix Lax pair~\cite{BD2015-1}:
\be \label{E:laxp1}
\bPhi_x=\widehat{\bbX} \bPhi, \qquad \bPhi_t=\widehat{\mathbf{T}} \bPhi,
\ee
where
\begin{align*}
&\widehat{\bbX}(x,t,k)=-\ii k \bJ +\bQ, \qquad \widehat{\mathbf{T}}(x,t,k)=2 \ii k^2 \bJ- \ii \bJ \left(\bQ_x-\bQ^2+q_0^2 \right)-2k \bQ,\\
&\bJ=\bpm 1& \mathbf{0}^{\top}\\
\mathbf{0} &-\I_{2 \times 2}
\epm,  \quad
\bQ=\bpm
0& \q^{\dagger }\\
\q &  \mathbf{0}_{2 \times 2}
\epm.
\end{align*}
Compared with the case of zero boundary conditions, the inverse-scattering
analysis with NZBCs is more involved because the spectral parameter \(k\) is
coupled with the function
\[
        \lambda(k)=\sqrt{k^2-q_0^2}.
\]
The branch points are located at \(k=\pm q_0\).  We take the branch cut to be
\((-\infty,-q_0]\cup[q_0,\infty)\) and choose the branch of \(\lambda\) by the
condition \(\lambda(0)=\ii q_0\).  Following~\cite{BD2015-1}, we introduce the
uniformization variable
\[
        z=k+\lambda.
\]
Then the inverse transformation is given by
\begin{equation}\label{E:intr}
        k=\frac{1}{2}\left(z+\frac{q_0^2}{z}\right),
        \qquad
        \lambda=\frac{1}{2}\left(z-\frac{q_0^2}{z}\right).
\end{equation}
In what follows, the inverse-scattering analysis will be carried out in the
\(z\)-plane.

%
%

Following the notation in Ref.~\cite{BD2015-1}, we denote the orthogonal vector of a two-component complex-valued vector $\bv=(v_1,v_2)$ as $\bv^{\perp}=\left(v_2,-v_1  \right)^{\dagger}$. We then introduce  three matrices
\be \label{E:trmatrix}
\E_{\pm}(z)=
\bpm
1&0&-\frac{\ii q_0}{z}\\
\ii \frac{\q_{\pm}}{z} & \frac{\q_{\pm}^{\perp}}{q_0}&\frac{\q_{\pm}}{q_0}
\epm, \quad
 \mathbf{\Lambda}(z)=\mathrm{diag} \left( -\lambda,  k,  \lambda  \right),\quad
\mathbf{\Omega}(z)=\mathrm{diag} \left( -2k \lambda,  k^2+\lambda^2,  2 k \lambda  \right),
\ee
which satisfy the relation
$$
\E_{\pm}^{-1} \widehat{\bbX}_{\pm} \E_{\pm}=\ii \mathbf{\Lambda},\qquad
\E_{\pm}^{-1} \widehat{\mathbf{T}}_{\pm} \E_{\pm}=-\ii \mathbf{\Omega},
$$
where  $\widehat{\bbX}_{\pm}= \lim_{x \to \pm \infty} \widehat{\bbX}$ and $\widehat{\mathbf{T}}_{\pm}=\lim_{x \to \pm \infty} \widehat{\mathbf{T}}$.
Let
$$
\Delta \bQ_{\pm}(x,t)=\bQ(x,t)-\bQ_{\pm}, \quad \bQ_{\pm}=\bpm
0& \q_{\pm}^{\dagger }\\
\q_{\pm} &  \mathbf{0}_{2 \times 2}
\epm.
$$
We define the Jost eigenfunctions as the unique solutions to the following integral equations:
	\begin{align*}
		&\boldsymbol{\mu}_-(x,t,z) = \E_-(z) + \int_{-\infty}^{x} \E_{-}(z) \e^{\ii   (x-y) \mathbf{\Lambda}(z)} \E^{-1}_{-}(z) \Delta \bQ_{-}(y,t) \boldsymbol{\mu}_-(y,t,z) \e^{-\ii   (x-y) \mathbf{\Lambda}(z)}  \mathrm{d}y, \\
		&\boldsymbol{\mu}_+(x,t,z) = \E_+(z) - \int_{x}^{+\infty} \E_{+}(z) e^{\ii   (x-y) \mathbf{\Lambda}(z)} \E^{-1}_{+}(z) \Delta \bQ_{+}(y,t) \boldsymbol{\mu}_+(y,t,z) \e^{-\ii   (x-y) \mathbf{\Lambda}(z)} \mathrm{d}y, 
	\end{align*}
It was shown in~\cite[Section~2.2]{BD2015-1} that, provided
\(\Delta \mathbf Q_\pm\) decay sufficiently rapidly as \(x\to\pm\infty\), the modified
eigenfunctions \(\boldsymbol{\mu}_\pm\) are well defined at least for
\(z\in\R\setminus\{0,\pm q_0\}\), and certain columns of \(\boldsymbol{\mu}_\pm\) admit
analytic continuations away from the real axis.

Let $ \bPhi_{\pm}(x,t,z)=\boldsymbol{\mu}_{\pm}(x,t,z) \e^{\ii \mathbf{\Lambda}(z)x - \ii \mathbf{\Omega}(z)t}$. Then, $\bPhi_{\pm}(x,t,z)$  are the fundamental solutions of  Lax pair~\eqref{E:laxp1} for $z \in \R \setminus \{0,\pm q_0 \}$. This is because
\be \label{E:degamma}
\mathrm{det}  \boldsymbol{\mu}_{\pm}(x,t,z)=\mathrm{det} \E_{\pm}(z) =1-\frac{q_0^2}{z^2}=:\gamma(z).
\ee
Therefore, there exists a matrix $\bA(z)$ independent of  $x$  and  $t$  such that
\be \label{E:deA}
\bPhi_-(x,t,z)=\bPhi_+(x,t,z) \bA(z), \quad  z \in \R \setminus \{0, \pm q_0 \}.
\ee
 Defining  $\B(z) = \bA^{-1}(z)$ ,  we let $a_{ij}$ and $b_{ij}$ denote the $(ij)$-th entries of the scattering matrices $\bA(z)$ and $\B(z)$, respectively. 

We need to impose some assumptions on the scattering coefficient $a_{11}(z)$ to ensure the subsequent analysis proceeds smoothly.

\begin{assume}\label{As:gesa}
 The spectral function $a_{11}(z)$  possesses the following properties:
\bi
\item[$\mathrm{(a)}$]
The zeros of $a_{11}(z)$ are all simple, finite in number, and none of them lie on the real axis.
\item[$\mathrm{(b)}$]  $a_{11}(z)$  exhibits general behavior as $z$ approaches the branch points $\pm q_0$, specifically
\be \label{E:a11qx}
\lim_{z \to \pm q_0} (z\mp q_0)  a_{11}(z) \ne 0.
\ee
\ei
\end{assume}
We provide some brief comments on the above assumption:
$\mathrm{(i)}.$ The assumption $\mathrm{(a)}$ actually achieves two goals: Firstly, it excludes the existence of the so-called spectral singularities; Secondly, it rules out  higher-order poles solutions of the defocusing Manakov system. Although double-pole  solutions for the system are considered in~\cite[section 4]{BD2015-1}, such cases fall outside the scope of the present work. $\mathrm{(ii)}.$ As pointed out in Ref.~\cite{BD2015-1}, all simple zeros of  $a_{11}(z)$  lie either on the upper semicircle
$
C_0 = \{ z \in \mathbb{C}: \  |z| = q_0,\  \im z > 0 \}
$
or strictly inside it. We denote these two sets of zeros by  $\{ \zeta_j \}_{j=1}^{N_1}$  on  $C_0$ and  $\{z_j \}_{j=1}^{N_2}$ inside $C_0$, respectively. $\mathrm{(iii)}.$ Assume that equation \eqref{E:a11qx} holds, which implies that  $a_{11}(z)$ possesses first-order singularities at the branch points $\pm q_0$.
As indicated in the inverse scattering analysis of scalar finite-density NLS problem, this represents the generic case~(see~\cite[Appendix C]{CuJe2016}).

Now let us define the reflection coefficients  $r_{1}(z)$,  $r_2(z)$ and $r_3(z)$ by
\be \label{E:fsxsr12}
r_1(z)=\frac{a_{21}(z)}{a_{11}(z)}, \quad r_2(z)=\frac{a_{31}(z)}{a_{11}(z)}, \quad
r_{3}(z)=\frac{a_{23}(z)}{a_{33}(z)}.
\ee
In fact, the reflection coefficients $\{r_j \}_{j=1}^3$ depend only on the initial data $\q(x,0)$, since we may set $t = 0$ in~\eqref{E:deA}.  Moreover, one can observe that  we have only two independent reflection coefficients. This is because   $r_1(z) =\frac{\ii q_0}{z} r_3(\hat{z})$, where $\hat{z}=\frac{q_0^2}{z}$.
To simplify the Deift–Zhou steepest-descent analysis in the subsequent sections, we require  that the reflection coefficients admit analytic continuation off the real axis.
This can be achieved if we assume that the initial value $\q(x,0)$ satisfy the following assumption. 

\begin{assume}\label{As:1}
Suppose that there exists a constant $\varepsilon >0$ such that
\be \label{E:aszssj}
\partial_x^j  \left( \q(x,0)-\q_{\pm} \right) \e^{\pm 2 \varepsilon x}   \in L^1(\R_{\pm}), \qquad  j=0,1,2,3.
\ee
\end{assume}

\begin{remark}
For more general initial data, the reflection coefficients in general do not
admit analytic continuations away from the real axis.  This difficulty can be
overcome by rational approximation techniques; see
\cite{pz93,Lenells2017}.  The exponential convergence assumption is therefore
imposed here mainly for technical convenience.
\end{remark}

Under Assumptions~\ref{As:1} and~\ref{As:gesa}, the reflection coefficients enjoy `` good" properties. Define  $S_{\varepsilon }=\{z \in \C \ :\   |\im z| < \varepsilon \}\setminus (B_1 \cup B_2)$ , where  $B_1$  and  $B_2$  denote the disks centered at  $\frac{\ii}{2}$  and  $-\frac{\ii}{2}$, respectively, each with radius  $\frac{1}{2}$.
Then we have the following lemma.
\begin{lemma}\label{L:fsxsxz}
Suppose the initial data $\q(x,0)$ satisfies Assumption~\ref{As:gesa} and~\ref{As:1}.
Then the associated reflection coefficients defined by~\eqref{E:fsxsr12} have the following properties:
\bi
\item
 Each  $r_j(z)$  is analytic for  $z \in S_{\varepsilon } \setminus \{ 0, \pm q_0 \}$, and as  $z \to \infty$ within  $S_{\varepsilon }$, we have
\be
r_{1}(z)=\mathcal{O}(\frac{1}{z}), \quad
r_3(z)=\mathcal{O}(\frac{1}{z}), \quad
r_{2}(z)=\mathcal{O}(\frac{1}{z}).
\ee

\item  
Then the functions $\{r_j(z) \}_{j=1}^3$ have well-defined limits as  $ S_{\varepsilon } \ni z \to 0$.  Furthermore, we have
\be \label{E:r123z0}
r_2(z) = \mathcal{O}(z^2), \quad r_3(z)=\mathcal{O}(z^2), \quad  S_{\varepsilon } \ni z \to 0.
\ee

\item The functions  $\{r_j(z) \}_{j=1}^3$ have well-defined limits at the branch points $\pm q_0$. Furthermore, we have
\be \label{E:asyratzd}
\lim_{z \to \pm q_0} r_2(z)= \mp \ii, \quad \lim_{z \to \pm q_0} r_1(z)=\lim_{z \to \pm q_0} r_3(z)=0.
\ee
\ei
\end{lemma}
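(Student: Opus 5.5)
The plan is to derive all three properties from the Volterra integral equations for the Jost functions $\boldsymbol{\mu}_\pm$, via the representation of the scattering coefficients as Wronskian-type determinants, with Assumption~\ref{As:1} supplying exponential weights. The first step is the analytic extension. The exponents in the kernels are $\e^{\ii(x-y)(\Lambda_{ii}-\Lambda_{jj})}$, with differences such as $\pm 2\lambda(z)$ and $k\pm\lambda$. For $z$ in the strip $|\im z|<\varepsilon$ these grow at most like $\e^{c\varepsilon|x-y|}$. The weight $\e^{\pm2\varepsilon x}$ in \eqref{E:aszssj} dominates this growth, possibly after shrinking $\varepsilon$. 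Hence the Neumann series for every column of $\boldsymbol{\mu}_\pm$ converges uniformly on compact subsets of the strip away from the zeros of $\det\E_\pm$ and the singular points of $\E_\pm^{-1}$, namely $0$ and $\pm q_0$.

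The excised disks $B_1,B_2$ are there to avoid the region near $z=0$ and the circle $|z|=q_0$. There the roles of $\im\lambda$ and $\im k$ change sign, so some column exponents can become too large to control. Consequently every entry $a_{ij}(z)$ extends analytically to $S_\varepsilon\setminus\{0,\pm q_0\}$. The same holds for the auxiliary eigenfunctions built from the adjoint problem, which are needed for $a_{23},a_{33}$.

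Next, $r_1,r_2,r_3$ are quotients. By Assumption~\ref{As:gesa}(a), $a_{11}$ has no real zeros, and its complex zeros are isolated points inside $C_0$. Hence $a_{11}$ is nonvanishing in a thinner strip, which we may take after decreasing $\varepsilon$. For $a_{33}$ we use the symmetry $a_{33}(z)\propto \overline{a_{11}(\bar z^{-1}q_0^2)}$ from \cite{BD2015-1}. The decay $\mathcal O(1/z)$ as $z\to\infty$ comes from integrating by parts once in the Volterra representation of the off-diagonal entries $a_{21},a_{31},a_{23}$. This uses the derivatives $j\le 3$ in \eqref{E:aszssj}, together with $a_{11},a_{33}\to1$. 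The relation $r_1(z)=\frac{\ii q_0}{z}r_3(q_0^2/z)$ then transfers information between $z\to\infty$ and $z\to0$. Thus the behaviour at $0$ is read off from the behaviour at $\infty$ via the symmetries $z\mapsto q_0^2/z$. In particular, $r_3(z)=\mathcal O(z^2)$ and $r_2(z)=\mathcal O(z^2)$ follow from the large-$z$ expansions of the corresponding coefficients under the inversion map, with the explicit factors $\ii q_0/z$ in $\E_\pm$ accounting for the extra powers.

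The main obstacle is the branch points. There $\det\E_\pm=\gamma(z)$ vanishes, and $\E_\pm^{-1}$ has simple poles. We expect every $a_{ij}$ to have at most a simple pole at $\pm q_0$, with residues determined by the degenerate eigenvector structure: at $z=\pm q_0$ the first and third columns of $\E_\pm$ become proportional. By Assumption~\ref{As:gesa}(b), $a_{11}$ has a genuine simple pole. Hence $r_2$ is the ratio of the residues of $a_{31}$ and $a_{11}$. Computing these from the proportionality of $\boldsymbol{\mu}_{\pm,1}$ and $\boldsymbol{\mu}_{\pm,3}$ at $\pm q_0$ (with factor $\mp\ii$ coming from $-\ii q_0/z$) gives $r_2\to\mp\ii$. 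On the other hand, $a_{21}$ involves the second column, which stays regular at $\pm q_0$. So $a_{21}$ is bounded there, and $r_1=a_{21}/a_{11}\to0$. The relation with $r_3$, together with $\pm q_0$ being fixed points of $z\mapsto q_0^2/z$, then gives $r_3\to0$.
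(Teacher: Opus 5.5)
Your overall route (Volterra/Wronskian representations, exponential weights for the continuation into the strip, the $z\mapsto\hat z$ symmetry, and the proportionality of the first and third columns of $\E_\pm$ at $\pm q_0$) is in line with the argument the paper cites from \cite{GZW2025}. Your branch-point computation $r_2\to\mp\ii$ is correct.

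The gap is in the second bullet, the bounds $r_2,r_3=\mathcal O(z^2)$ at the origin, which is exactly what the paper flags as new. You only establish $\mathcal O(1/z)$ decay at infinity. You then assert that inversion, with the explicit factors $\ii q_0/z$ in $\E_\pm$ ``accounting for the extra powers,'' gives $\mathcal O(z^2)$. The bookkeeping goes the other way.
\begin{itemize}
\item For $r_3$: from $r_1(z)=\frac{\ii q_0}{z}r_3(\hat z)$ you get $r_3(w)=-\frac{\ii q_0}{w}\,r_1(q_0^2/w)$. So $r_1=\mathcal O(z^{-m})$ at infinity yields only $r_3=\mathcal O(w^{m-1})$ at $0$. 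With your $m=1$ this is mere boundedness, because the factor costs a power rather than supplying one.
\item For $r_2$: the relation $\bA(\hat z)=\bPi(z)\bA(z)\bPi^{-1}(z)$ gives $r_2(w)=-a_{13}(\hat w)/a_{33}(\hat w)$ with no prefactor at all. So $a_{13}=\mathcal O(1/z)$ would give only $r_2=\mathcal O(w)$.
\end{itemize}

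What you actually need is $r_1(z)=\mathcal O(z^{-3})$ and $a_{13}(z)=\mathcal O(z^{-2})$ as $z\to\infty$ in $S_\varepsilon$. That means three, respectively two, integrations by parts against the oscillatory factors $\e^{\pm\ii yz}$ and $\e^{\pm2\ii y\lambda}$. Equivalently, you can integrate by parts directly at small $z$ against $\e^{\pm\ii y q_0^2/z}$; for $a_{23}$, one gained power is then eaten by the explicit $-\ii q_0/z$ entry of $\E_\pm$. This is precisely where $\partial_x^3(\q-\q_\pm)$ enters, and it is what the paper means by ``one additional integration by parts.'' A single integration by parts, which is all you invoke, uses only $j\le 1$.

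In fact you have the two mechanisms swapped. The $\mathcal O(1/z)$ decay of $r_3$ at infinity is not an integration-by-parts effect: the relevant frequency $k-\lambda=q_0^2/z$ tends to $0$ there. It comes from that explicit $1/z$ in $\E_\pm$. Vanishing at the origin, by contrast, can only come from oscillation plus smoothness.

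Smaller points:
\begin{itemize}
\item $a_{21}$ is bounded at $\pm q_0$ because the second row of $\bPhi_+^{-1}$ is $(\bPhi_{+,3}\times\bPhi_{+,1})^{\top}/\det\bPhi_+$. The cross product vanishes to first order where these two columns become parallel; it is not because the second column stays regular.
\item The symmetry you quote for $a_{33}$ actually produces $b_{33}$. The one you want is $a_{33}(z)=a_{11}(\hat z)$.
\item Your kernel bound $\e^{c\varepsilon|x-y|}$ cannot hold near the origin, since $|\im(q_0^2/z)|=q_0^2|\im z|/|z|^2$ is unbounded there within any strip. That, rather than a sign change of $\im k$ on $|z|=q_0$, is why a neighbourhood of $0$ is excised.
\end{itemize}
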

\begin{remark}
The proof of this lemma follows the argument in~\cite[Appendix A.1 and proof of Lemma~2.2]{GZW2025}.
We point out, however, that the estimates for \(r_2(z)\) and \(r_3(z)\) as
\(z\to0\) are improved here to order \(\mathcal O(z^2)\).  This improvement is
a consequence of the stronger assumptions imposed in the present paper.  More
precisely, the third derivative of the initial perturbation is still assumed
to decay exponentially, which allows one to perform one additional integration
by parts in the estimates of the scattering coefficients.  This yields the
stronger small-\(z\) bounds for \(r_2(z)\) and \(r_3(z)\).

\end{remark}

Following the idea from Ref.~\cite{BD2015-1}, we define a piecewise meromorphic function $\M(x,t,z)$ as follows~(see~\cite[Eq.(3.1a) and Eq.(3.1b)]{BD2015-1}):
\be \label{E:exM}
\M(x,t,z)=\begin{cases}
\left(\frac{\boldsymbol{\mu}_{-1}}{a_{11}}, \frac{\m}{b_{33}}, \boldsymbol{\mu}_{+3}   \right), & \Im z >0,\\
\left(\boldsymbol{\mu}_{+1}, -\frac{\bar{\m} }{b_{11}},   \frac{\boldsymbol{\mu}_{-3}}{a_{33}}  \right), & \Im z <0,
\end{cases}
\ee
where $
\bar{\m}(x,t,z)=-\bJ[\bPhi_{-1}^*  \times \bPhi_{+3}^*] (x,t,z^*) / \gamma(z),\
\m(x,t,z)=-\bJ[\bPhi_{-3}^*  \times \bPhi_{+1}^*] (x,t,z^*) / \gamma(z)$. Here $``\times"$ denotes the usual cross product.
Then, according to~\cite[Lemma~3.3]{BD2015-1}, the function
\(\M(x,t,z)\) defined by~\eqref{E:exM} satisfies the following residue
conditions at the two types of discrete eigenvalues
\(\{\zeta_j\}_{j=1}^{N_1}\) and \(\{z_j\}_{j=1}^{N_2}\).

\begin{enumerate}
\item[\(\mathrm{(i)}\)] For each \(\zeta_j\), \(j=1,\ldots,N_1\),
\begin{equation}\label{E:mlstjzc}
\mathop{\mathrm{Res}}\limits_{z=\zeta_j}\M(x,t,z)
=
\lim_{z\to\zeta_j}\M(x,t,z)
\begin{pmatrix}
0&0&0\\
0&0&0\\
\tau_j\e^{\theta_{31}(x,t,\zeta_j)}&0&0
\end{pmatrix},
\end{equation}
where \(\tau_j/\zeta_j\in\R\).

\item[\(\mathrm{(ii)}\)] For each \(z_j\), \(j=1,\ldots,N_2\),
\begin{equation}\label{E:mlstjzc1}
\begin{aligned}
\mathop{\mathrm{Res}}\limits_{z=z_j}\M(x,t,z)
&=
\lim_{z\to z_j}\M(x,t,z)
\begin{pmatrix}
0&0&0\\
\kappa_j\e^{\theta_{21}(x,t,z_j)}&0&0\\
0&0&0
\end{pmatrix},\\
\mathop{\mathrm{Res}}\limits_{z=z_j^*}\M(x,t,z)
&=
\lim_{z\to z_j^*}\M(x,t,z)
\begin{pmatrix}
0&\dfrac{\kappa_j^*}{\gamma(z_j^*)}
\e^{\theta_{12}(x,t,z_j^*)}&0\\
0&0&0\\
0&0&0
\end{pmatrix},
\end{aligned}
\end{equation}
where \(\kappa_j\) is a complex constant.
\end{enumerate}

In~\eqref{E:mlstjzc} and~\eqref{E:mlstjzc1}, we use the notation
\[
        \theta_{i\ell}(x,t,z)
        =
        \theta_i(x,t,z)-\theta_\ell(x,t,z),
        \qquad 1\le i,\ell\le 3,
\]
where \(\{\theta_i\}_{i=1}^3\) are given by~\eqref{E:theta123}.  Thus the
scattering data associated with the initial data \(\q(x,0)\) are written as
\[
\sigma_d=
\left\{
r_1(z),\ r_2(z),\ r_3(z),\
\{\zeta_j,\tau_j\}_{j=1}^{N_1},\
\{z_j,\kappa_j\}_{j=1}^{N_2}
\right\}.
\]

 The main result of Ref.~\cite{BD2015-1} provides a RH characterization for the solution of the defocusing Manakov system~\eqref{E:demanakovS} with NZBCs~\eqref{E:bjtj}.  Given the scattering data  $\sigma_d$, one can define the jump matrix  $\bV$  as follows:
\be \label{E:Vex}
\bV(x,t,z)=\e^{\bTheta(x,t,z)}
\bpm
1-\frac{1}{\gamma(z)}|r_1(z)|^2-|r_2(z)|^2& \frac{1}{\gamma(z)}(-r_1(z)+r_2(z)r_3(z))^*& -r^*_2(z) \\
r_1(z)-r_2(z)r_3(z) &1+\frac{1}{\gamma(z)}|r_3(z)|^2& -r_3(z)\\
r_2(z) & -\frac{1}{\gamma(z)}r_3^*(z)& 1
\epm
\e^{-\bTheta(x,t,z)},
\ee
where $\bTheta(x,t,z)=\mathrm{diag}\left(\theta_1(x,t,z),\theta_2(x,t,z),\theta_3(x,t,z)   \right)$ with
\be \label{E:theta123}
\begin{aligned}
\theta_1(x,t,z)&=- \ii \lambda(z)x+2 \ii k(z) \lambda(z) t,\\
\theta_2(x,t,z)&=\ii k(z)x-\ii (k^2(z)+\lambda^2(z))t,\\
\theta_3(x,t,z)&=\ii \lambda(z)x-2 \ii k(z) \lambda(z) t.
\end{aligned}
\ee
Using this time-dependent jump matrix, we formulate the following RH problem.

\begin{RHP}\label{RHP:sig}

Find a $3 \times 3$ matrix-valued function $\M(x,t,z)$ with the following properties:
\bi
\item $\M(x,t,\cdot) : \mathbb{C}\setminus  (  \R \cup  \mathcal{Z})  \to \mathbb{C}^{3 \times 3}$ is analytic, where
\be \label{E:defZ}
\mathcal{Z}=\mathcal{Z}_1 \cup \mathcal{Z}_2, \   \text{with} \ \ \mathcal{Z}_1=\cup_{j=1}^{N_1}  \left(\zeta_j \cup  \zeta_j^* \right) , \quad \mathcal{Z}_2= \cup_{j=1}^{N_2}\left(z_j \cup z_j^* \cup \hat{z}_j \cup \hat{z}_j^*  \right).
\ee
Across $\R$, $\M(x,t,z)$ satisfies the jump condition:
\be \label{E:Jump}
\M_+(x,t,z)=\M_-(x,t,z) \bV(x,t,z), \quad z \in \R \setminus \{0 \}.
\ee
\item  $\M(x,t,z)$ admits the asymptotic behavior:
\be \label{E:astj}
 \M(x,t,z)=\I+\mathcal{O}(\frac{1}{z}), \quad z \to \infty; \quad \M(x,t,z) =\frac{1}{z} \sig_1+ \mathcal{O}(1), \quad z \to 0,
\ee
where
\be \label{E:M0infty}
\sig_1=\bpm
0&0&-\ii q_0\\
0&0&0\\
\ii q_0&0&0
\epm.
\ee
\item  $\M(x,t,z)$ satisfies the growth conditions near the branch points $\pm q_0$:
\be \label{E:gcc}
\begin{cases}
\M_1(x,t,z)=\mathcal{O}(z \mp q_0), & z \in \C_+ \to \pm q_0,\\
\M_3(x,t,z)=\mathcal{O}(z \mp q_0), & z \in \C_- \to \pm q_0.
\end{cases}
\ee
\item $\M(x,t,z)$ satisfies the symmetries
\begin{equation}\label{E:RHP11}
\M(x,t,z)=\M(x,t,\hat{z}) \bPi(z), \quad
(\M^{-1})^{\top}(x,t,z)=-\frac{1}{\gamma(z)}\bJ \M^*(x,t,z^*) \mathbf{\Gamma}(z),
\end{equation}
where $\hat{z}=\frac{q_0^2}{z}$ and
\be \label{E:pi}
\bPi(z)= \bpm
0&0&-\ii \frac{q_0}{z}\\
0&1&0\\
\ii \frac{q_0}{z}&0&0
\epm, \quad
\mathbf{\Gamma}(z)=\bpm
-1 & 0&0\\
0& \gamma(z)&0\\
0&0&1
\epm.
\ee
\item
$\M(x,t,z)$ satisfies the residue conditions~\eqref{E:mlstjzc}and~\eqref{E:mlstjzc1} at the discrete spectral points  $\zeta_j$,  $z_j$, and  $z_j^*$, respectively.
Moreover, the residue conditions at the discrete spectral points $\zeta_j^*$, $\hat{z}_j$ and $\hat{z}_j^*$ can be derived from the $z \to \hat{z}$ symmetry in~\eqref{E:RHP11}.
\ei
\end{RHP}

\begin{remark}
The RH problem formulated here differs slightly from the one in~\cite{BD2015-1}.
In the present paper, we normalize the solution to the identity matrix \(\I\) as
\(z\to\infty\).  
\end{remark}

\begin{theorem}[RH characterization of the solution]\label{Th:RHch}
If $\q(x,t)$ is a solution to the defocusing Manakov system~\eqref{E:demanakovS} that decays rapidly to the nonzero background~\eqref{E:bjtj},  and its initial data satisfy Assumptions~\ref{As:gesa} and~\ref{As:1}, then the RH problem~\ref{RHP:sig} admits a unique solution $\M(x,t,z)$. Moreover, the solution $\q(x,t)$ can be recovered as follows:
\be \label{E:cggs}
\q(x,t)= \widetilde{\M}_{\infty}   \left( -\ii \lim_{z \to \infty}z  \bpm  \M_{21}(x,t,z)\\
 \M_{31}(x,t,z)  \epm    \right),
\ee
where  $\M_{ij}$ denotes the $(ij)$-entry of the matrix-valued function $\M$, and $\widetilde{\M}_{\infty}$ is defined by
$$
\widetilde{\M}_{\infty}=\begin{pmatrix}
\frac{q_{2,+}^*}{q_0}  & \frac{q_{1,+}}{q_0}\\
 -\frac{q_{1,+}^*}{q_0} &\frac{q_{2,+}}{q_0}
\end{pmatrix}.
$$
\end{theorem}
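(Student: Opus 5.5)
The plan is to follow the construction of~\cite{BD2015-1} and split the argument into three parts: existence (the function~\eqref{E:exM} solves RH problem~\ref{RHP:sig} after a constant renormalization), uniqueness (a Liouville-type argument), and the reconstruction formula~\eqref{E:cggs} (a large-$z$ expansion of the $x$-part of the Lax pair).

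\emph{Existence.} I take $\M$ as in~\eqref{E:exM}.
\begin{enumerate}[label=(\roman*)]
\item Analyticity off $\R\cup\mathcal{Z}$ comes from the analyticity of $\boldsymbol{\mu}_{-1},\boldsymbol{\mu}_{+3}$ in $\C_+$ and of $\boldsymbol{\mu}_{+1},\boldsymbol{\mu}_{-3}$ in $\C_-$. For the auxiliary columns $\m$ and $\bar{\m}$ I use the adjoint problem, i.e.\ the cross-product representation.
\item The jump~\eqref{E:Jump} with $\bV$ as in~\eqref{E:Vex} is obtained by expressing the boundary values of each column through the scattering relation~\eqref{E:deA}. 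I then eliminate the non-analytic Jost columns, using $\B=\bA^{-1}$, the symmetries of $\bA$ inherited from~\eqref{E:RHP11}, and the definitions~\eqref{E:fsxsr12} of the reflection coefficients. The factor $\e^{\bTheta}$ appears because $\bPhi_\pm=\boldsymbol{\mu}_\pm\e^{\ii\mathbf{\Lambda}x-\ii\mathbf{\Omega}t}$, and $\ii\mathbf{\Lambda}x-\ii\mathbf{\Omega}t=\bTheta$ by~\eqref{E:theta123}.
\item The residue conditions at $\zeta_j$ and $z_j$ come from the simple zeros in Assumption~\ref{As:gesa}(a), together with the proportionality of the Jost columns at those zeros.
\item The growth conditions~\eqref{E:gcc} come from Assumption~\ref{As:gesa}(b): $a_{11}$ has a simple pole at $\pm q_0$, so $\boldsymbol{\mu}_{-1}/a_{11}$ vanishes linearly there.
\item The behaviour at $z=0$ follows from the $z\mapsto\hat z$ symmetry together with the behaviour at infinity.
\end{enumerate}
One point needs care. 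The function of~\cite{BD2015-1} tends to a constant matrix $\M_\infty$ (built from $\E_+$ and the phases $\q_\pm/q_0$) rather than to $\I$. I will therefore replace $\M$ by $\M_\infty^{-1}\M$. This left multiplication by a constant preserves the jumps and residues, and it is exactly what produces the matrix $\widetilde{\M}_\infty$ in~\eqref{E:cggs}.

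\emph{Uniqueness.} The jump matrix satisfies $\det\bV=1$. So for any solution, $\det\M$ has no jump, has removable singularities at $\mathcal{Z}$ (the residue matrices are nilpotent) and at $\pm q_0$, and has at most a double pole at $0$. Combining this with the asymptotics~\eqref{E:astj} and the symmetry~\eqref{E:RHP11} gives $\det\M=\gamma(z)$, the same value as $\det\boldsymbol{\mu}_\pm$ in~\eqref{E:degamma}.

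Given two solutions $\M,\widetilde{\M}$, I form $\widetilde{\M}\M^{-1}$, writing $\M^{-1}=\operatorname{adj}\M/\gamma$. This has no jump on $\R$ and no poles at $\mathcal{Z}$, by the standard computation with the nilpotent residue matrices. The possible singularities at $\pm q_0$ are cancelled because~\eqref{E:gcc} forces the relevant cofactors to vanish. The singularity at $0$ is removable by the fixed leading coefficient $\sig_1$ in~\eqref{E:astj}. Liouville's theorem then gives $\widetilde{\M}\M^{-1}\equiv\I$.

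I expect this uniqueness step to be the main obstacle. The singular points $0$ and $\pm q_0$ require exact bookkeeping of orders using both the growth conditions and the $\hat z$-symmetry. If this direct route fails, I would fall back on a vanishing lemma exploiting the Schwarz-type symmetry in~\eqref{E:RHP11}.

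\emph{Reconstruction.} I substitute the expansion $\boldsymbol{\mu}_{-1}/a_{11}=\mathbf{c}+\mathbf{c}^{(1)}/z+\dots$ into the $x$-equation of the Lax pair~\eqref{E:laxp1}. Matching the $\mathcal{O}(1)$ terms shows that the off-diagonal part of the $1/z$ coefficient equals $\ii\q$, up to the constant normalization matrix. Undoing the renormalization $\M_\infty^{-1}$ applies the rotation $\widetilde{\M}_\infty$ built from $\q_+^\perp/q_0$ and $\q_+/q_0$, which yields~\eqref{E:cggs}.
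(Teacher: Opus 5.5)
Your proposal is correct, and its decisive step is the paper's own: the normalized solution and the function of \cite{BD2015-1} differ by the constant left factor $\M_\infty=\diag(1,\widetilde{\M}_\infty)$. That factor is exactly what turns the reconstruction $\q=-\ii\lim_{z\to\infty}z\,(\widehat{\M}_{21},\widehat{\M}_{31})^{\top}$ for that function into \eqref{E:cggs}. The paper stops there and takes existence, uniqueness and reconstruction wholesale from \cite[Theorem~3.8]{BD2015-1}, whereas you re-derive them.

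The main gain of your route is an explicit uniqueness proof, which the paper never spells out:
\begin{itemize}
\item $\det\bV=1$ together with the $\hat z$-symmetry (note $\det\bPi=-q_0^2/z^2$) forces $\det\M=\gamma$.
\item Hence $\M^{-1}(z)\to\diag(0,1,0)$ as $z\to0$, and $\sig_1$ annihilates this from the left, so $\widetilde{\M}\M^{-1}$ is bounded at $0$. Alternatively, use $(\widetilde{\M}\M^{-1})(z)=(\widetilde{\M}\M^{-1})(\hat z)$.
\item The vanishing columns in \eqref{E:gcc} cancel the pole of $1/\gamma$ at $\pm q_0$.
\end{itemize}
Your existence step, by contrast, still rests on the jump computations for $\m$ and $\bar{\m}$ in \cite{BD2015-1}, so there it gains no independence from the paper's source.

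There are three points to tighten.
\begin{itemize}
\item The cancellation at $\pm q_0$, and the removability of $\det\M$ there, need the remaining columns to be bounded near $\pm q_0$. RH problem~\ref{RHP:sig} does not state this, so build it into the solution class, e.g.\ by requiring continuous boundary values up to $\pm q_0$.
\item For left multiplication by $\M_\infty^{-1}$ to preserve the second symmetry in \eqref{E:RHP11}, you need $\M_\infty^{\top}\bJ\M_\infty^{*}=\bJ$. This holds because $\widetilde{\M}_\infty$ is unitary, but you should check it explicitly; you only mention jumps and residues.
\item At $z_j$ the column $\boldsymbol{\mu}_{-1}$ is proportional to the auxiliary eigenfunction $\m$, not to a Jost column. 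Also, the symmetries of $\bA$ should be derived from those of the Jost solutions rather than ``inherited from'' \eqref{E:RHP11}, to avoid circularity.
\end{itemize}
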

\begin{proof}
Let $\M(x,t,z)$ be the solution of RH problem~\ref{RHP:sig}, then the function
$$
\widehat{\M}(x,t,z)=\M_{\infty} \M(x,t,z), \quad \M_{\infty}=\bpm
1&0&0\\
\mathbf{0}& \q_+^{\perp}/q_0& \q_+/q_0
\epm,
$$
satisfies  the RH problem in~\cite[section 3.1 ]{BD2015-1}. Therefore, the theorem is a direct consequence of~\cite[Theorem 3.8]{BD2015-1}.
\end{proof}
\section{ Analysis of the model RH problem} \label{S:mp}
Before carrying out the Deift-Zhou steepest descent analysis, we first investigate the model RH problem used in this paper.  The model problem  is a coupling of the Painlev\'e II model problem and an error function model problem. We should mention that a similar model problem first appeared in the work of Charlier and Lenells on  ``bad" Boussinesq equation~\cite{CL2024main}.

\begin{figure}
\centering
\begin{tikzpicture}[
    >=Stealth,
    line width=0.7mm,
    line cap=round,
    line join=round,
    font=\small
]

\pgfmathsetmacro{\a}{3}
\pgfmathsetmacro{\b}{3}
\pgfmathsetmacro{\m}{tan(30)}
\pgfmathsetmacro{\ha}{\a*\m}
\pgfmathsetmacro{\hb}{\b*\m}

\pgfmathsetmacro{\yT}{1.2}
\pgfmathsetmacro{\yB}{-0.7}
\pgfmathsetmacro{\yM}{0.25}
\pgfmathsetmacro{\xM}{0.3}

\coordinate (T)  at (0,\yT);
\coordinate (TL) at (-\a,{\yT+\ha});
\coordinate (TR) at ( \a,{\yT+\ha});

\coordinate (B0) at (-\xM,\yM);
\coordinate (B1) at ( \xM,\yM);
\coordinate (L)  at ({-\xM-\b},{\yM-\hb});
\coordinate (R)  at ({ \xM+\b},{\yM+\hb});

\coordinate (B3) at (0,\yB);
\coordinate (BL) at (-\a,{\yB-\ha});
\coordinate (BR) at ( \a,{\yB-\ha});

\draw (TL) -- (T) -- (TR);
\draw (L) -- (B0) -- (B1) -- (R);
\draw (BL) -- (B3) -- (BR);

\fill[blue]  (T)  circle (2.5pt);
\fill[blue]  (B3) circle (2.5pt);
\fill[red] (B0) circle (2.5pt);
\fill[red] (B1) circle (2.5pt);

%

\draw[-{Stealth[length=3mm,width=2.2mm]}]
    ($(T)!0.28!(TL)$) -- ($(T)!0.58!(TL)$);
\draw[-{Stealth[length=3mm,width=2.2mm]}]
    ($(T)!0.28!(TR)$) -- ($(T)!0.58!(TR)$);

\draw[-{Stealth[length=3mm,width=2.2mm]}]
    ($(L)!0.22!(B0)$) -- ($(L)!0.56!(B0)$);
\draw[-{Stealth[length=3mm,width=2.2mm]}]
    ($(B1)!0.22!(R)$) -- ($(B1)!0.56!(R)$);

\draw[{Stealth[length=3mm,width=2.2mm]}-]
    ($(B3)!0.28!(BL)$) -- ($(B3)!0.58!(BL)$);
\draw[{Stealth[length=3mm,width=2.2mm]}-]
    ($(B3)!0.28!(BR)$) -- ($(B3)!0.58!(BR)$);

\node at ($(TL)+(-0.48, 0.20)$) {$2$};
\node at ($(TR)+( 0.48, 0.20)$) {$1$};
\node at ($(BL)+(-0.48,-0.20)$) {$3$};
\node at ($(BR)+( 0.48,-0.20)$) {$4$};

\node at ($(L)!0.55!(B0)+(0,-0.30)$) {$5$};
\node at ($(B0)!0.50!(B1)+(0,-0.28)$) {$6$};
\node at ($(B1)!0.55!(R)+(0,-0.30)$) {$7$};

\end{tikzpicture}
\caption{The jump contour $\mathcal X$ for the model problem $\N^{\mathcal X}$. The blue points represent $\beta_2$ and $\beta_3$ from top to bottom, while the red points on the horizontal segment represent $\beta_0$ and $\beta_1$ from left to right.}
\label{ModelFig}
\end{figure}
\subsection{Definition of the model problem}
Below we provide the definition of the  model RH problem in the case $(x,t) \in \cP_+$, where $\cP_+=\cP \cap \{\xi \geq -q_0 \}$. 
Let the jump contour $\X=\cup_{j=1}^7 \X_j$ be defined as follows~(see Figure~\ref{ModelFig}):
\begin{align} \nonumber
&\X_1 = \bigl\{\beta_2+ r \e^{\frac{\ii\pi}{6}}\, \big| \, 0 \leq r < \infty\bigr\}, && \X_2 = \bigl\{ \beta_2 + r \e^{\frac{5\ii\pi}{6}}\, \big| \, 0 \leq r < \infty\bigr\},
	\\ \nonumber
&\X_3 = \bigl\{\beta_3 + r \e^{-\frac{5 \ii \pi}{6}}\, \big| \, 0 \leq r < \infty\bigr\}, && \X_4 = \bigl\{\beta_3 + r \e^{-\frac{\ii \pi}{6}}\, \big| \, 0 \leq r < \infty\bigr\},
	\\\nonumber
&\X_5 = \bigl\{ \beta_0+ r \e^{-\frac{ 5 \ii \pi}{6}}\, \big| \, 0 \leq r < \infty\bigr\}, && \X_7 = \bigl\{\beta_1 + r \e^{\frac{ \ii\pi}{6}}\, \big| \, 0 \leq r < \infty\bigr\},
	\quad
\X_6 = [\beta_0, \beta_1].
\end{align}
Here,  $\beta_j$  is the image of the  critical point $\rz_j$  under the conformal mapping  $z \to \beta$~(see~\eqref{E:betaj}).
Then we define a $3 \times 3$ matrix-valued function $\V^{\X}(y,s,t;\beta)$ on $\X$, which serves as the jump matrix for the model problem.  The parameters $y$ and $s$  will be specified in~\eqref{E:ybeta} and~\eqref{E:deffs}, respectively.

The exact expression of $\V^{\X}$ is given as follows:
\begin{align}
&\V^{\X}_1=\begin{pmatrix}
 1 &0  &0 \\
0  & 1 & 0\\
\ii \e^{2\ii(y \beta+\frac{4}{3} \beta^3)}  &0  &0
\end{pmatrix}, \quad
\V^{\X}_2=\begin{pmatrix}
 1 &0  &0 \\
0  & 1 & 0\\
-\ii \e^{2\ii(y \beta+\frac{4}{3} \beta^3)}  & 0 &0
\end{pmatrix}, \quad
\V^{\X}_3=\begin{pmatrix}
 1 &0  &\ii \e^{-2\ii(y \beta+\frac{4}{3} \beta^3)} \\
0  & 1 & 0\\
0  & 0 &0
\end{pmatrix}, \nonumber \\
&\V^{\X}_4=\begin{pmatrix}
 1 &0  &-\ii \e^{-2\ii(y \beta+\frac{4}{3} \beta^3)} \\
0  & 1 & 0\\
0  & 0 &0
\end{pmatrix},\quad
\V^{\X}_{5 }=\V^{\X}_{6}=\V^{\X}_{7}=\begin{pmatrix}
 1 & \ii s \e^{-\ii y \beta+\ii (\frac{4}{3}q_0)^{\frac{2}{3}}t^{\frac{1}{3}} \beta^2  } &0 \\
  0& 1 &0 \\
  0&  -s \e^{\ii y \beta+\ii (\frac{4}{3}q_0)^{\frac{2}{3}}t^{\frac{1}{3}} \beta^2  }  &1
\end{pmatrix}, \label{E:exVjX}
\end{align}
where $\V^{\X}_j$ denotes the restriction of $\V^{\X}$ to the subcontour $\X_j$.

\begin{RHP}[Model RH problem] \label{RHP:model}
$\N^{\X}(y,s,t;\beta)$ is a $3 \times 3$ matrix-valued function on $\C$ except for $\X$.
\bi
\item $\N^{\X}(y,s,t;\beta)$ satisfies the following jump conditions:
$$
\N^{\X}_+(y,s,t;\beta)=\N^{\X}_-(y,s,y;\beta) \V^{\X}(y,s,t;\beta), \quad \beta \in \X.
$$

\item $\N^{\X}(y,s,t;\beta)$ has the following boundary condition: as  $\beta \to \infty$, $  \N^{\X}(y,s,t;\beta)=\I+\mathcal{O}(\frac{1}{\beta}).$

\item $\N^{\X}(y,s,t;\beta)$ has the following boundary condition: as  $\beta \to 0$, $\N^{\X}(y,s,t;\beta)=\mathcal{O}(1).$
\ei

\end{RHP}
The following theorem further elaborates on the  properties of this model RH problem, which is essential for subsequent calculations.
\begin{theorem}\label{Th:asXi}
There exists a $T>0$ such that for $t>T$ RH problem~\ref{RHP:model} admits a unique solution $\N^{\X}(y,s,t;\beta)$  whenever $(x,t) \in \cP_+$. Moreover, as $\beta \to \infty$, the following asymptotic expansion holds uniformly for $\mathrm{arg} \beta \in [0,2\pi]$:
\be\label{E:asyPhi}
\N^{\X}(y,s,t;\beta)=\I+ \frac{1}{\beta}\left(\M^P_1(y)+ \cE^{\X}_1(y,s,t) \right) +\mathcal{O}(\frac{1}{\beta^2}),
\ee
where $\cE^{\X}_1(y,s,t)$ satisfies
\be \label{E:estcEX1}
\cE^{\X}_1(y,s,t)=\frac{\M^{P}(y,0) \M^{E}_1(s) (\M^P(y,0))^{-1}  }{(\frac{4}{3} q_0)^{1/3} t^{1/6}}   +\mathcal{O}(t^{-1/3}), \quad t \to \infty.
\ee
In the above expression, the functions $\M^P_1(y)$, $\M^P(y,0)$ and $\M^E_1(s)$ are defined by~\eqref{m1Pexpression},~\eqref{mPat0expression}, and~\eqref{M1Eexpression}, respectively.
\end{theorem}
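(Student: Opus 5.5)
The plan is a small-norm argument built on two known exactly solvable pieces. One is the standard Painlev\'e II RH problem $\M^P(y,\beta)$ (Hastings--McLeod parameters), whose jumps on rays of angle $\pm\pi/6,\pm5\pi/6$ match the jumps of $\V^{\X}$ on $\X_1,\dots,\X_4$ after embedding into the $(1,3)$ block. The other is an error-function RH problem $\M^E(s,\cdot)$ for the $(1,2)$--$(3,2)$ coupling on $\X_5\cup\X_6\cup\X_7$. I will first assume the explicit behaviour of both pieces (large-$\beta$ expansions giving $\M^P_1(y)$ and $\M^E_1(s)$, uniformly for $y$ in compacts, which is the situation for $(x,t)\in\cP_+$). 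The key observation is that the jumps on $\X_5,\X_6,\X_7$ contain $\e^{\ii(\frac43q_0)^{2/3}t^{1/3}\beta^2}$. So in the variable $\eta=(\frac43 q_0)^{1/3}t^{1/6}\beta$ they become a Gaussian-type jump $\e^{\ii\eta^2}$ with the slowly varying factor $\e^{\pm\ii y\beta}$. The points $\beta_0,\beta_1$ are at distance $\mathcal O(t^{-1/6})$ or less, so they scale to bounded points in $\eta$.

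Construction. Set $\N^{\X}=\bm{R}\,\M^P(y,\beta)\,\bm{E}(\beta)$. Here $\bm{E}$ is the error-function parametrix
\[
\bm{E}(\beta)=(\M^P(y,0))^{-1}\,\M^E\bigl(s;(\tfrac43q_0)^{1/3}t^{1/6}\beta\bigr)\,\M^P(y,0)
\]
inside a disk $|\beta|<\rho$, and $\bm{E}=\I$ outside. More precisely, I conjugate so that $\M^P\bm{E}$ has exactly the jumps of $\V^{\X}$ on $\X_5\cup\X_6\cup\X_7$ to leading order. $\M^P$ is analytic near $0$ apart from the rays through $\beta_2,\beta_3$, which lie at distance $\gtrsim t^{-1/3}$-scale away; if necessary, I will deform $\X_1,\dots,\X_4$ so that they avoid the small disk. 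Near $\beta=0$ I replace $\M^P(y,\beta)$ by $\M^P(y,0)$ in the conjugation. The resulting error $\M^P(y,\beta)\M^P(y,0)^{-1}-\I=\mathcal O(\beta)$ produces jumps for $\bm R$ of size $\mathcal O(t^{-1/6})$ times decaying Gaussian factors. The factor $\e^{\pm\ii y\beta}$ is likewise expanded as $1+\mathcal O(\beta)$. On the circle $|\beta|=\rho$, the mismatch $\bm{E}-\I=\M^P(y,0)^{-1}\M^E_1(s)\M^P(y,0)/\eta+\mathcal O(\eta^{-2})$ is $\mathcal O(t^{-1/6})$.

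Small-norm step. The jump $\V^{\bm R}$ of $\bm R$ satisfies $\|\V^{\bm R}-\I\|_{L^1\cap L^\infty}=\mathcal O(t^{-1/6})$, so $\bm R$ exists uniquely for $t>T$, uniformly for bounded $y$. Uniqueness of $\N^{\X}$ then follows from the standard Liouville argument, since $\det\N^{\X}\equiv1$ and the jump matrices have unit determinant. Expanding $\bm R=\I+\bm R_1/\beta+\dots$ with $\bm R_1=-\frac{1}{2\pi\ii}\int(\V^{\bm R}-\I)+\mathcal O(t^{-1/3})$, the circle contribution evaluated by residues gives $\M^P(y,0)\M^E_1(s)\M^P(y,0)^{-1}/((\frac43q_0)^{1/3}t^{1/6})$. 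This is the conjugation order obtained after restoring $\M^P$ on the outside, and it yields \eqref{E:estcEX1}. The $\beta$-linear corrections contribute $\mathcal O(t^{-1/3})$ after the scaling $\beta=\eta\,t^{-1/6}$, so the total is $\cE^{\X}_1=\bm R_1+(\text{circle term})$. Adding $\M^P_1(y)$ from $\M^P$ gives \eqref{E:asyPhi}.

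The main obstacle is the compatibility near $\beta=0$. The error-function jumps couple index $2$ with indices $1$ and $3$, while $\M^P$ acts on the $\{1,3\}$ block and is not the identity there. One therefore has to verify that conjugating by the constant $\M^P(y,0)$ reproduces the jumps on $\X_5$--$\X_7$ up to $\mathcal O(\beta)$ with Gaussian decay. One must also check that the $\mathcal O(\beta)$ remainder, integrated against $\e^{-c t^{1/3}|\beta|^2}$, is really $\mathcal O(t^{-1/3})$ and not merely $\mathcal O(t^{-1/6})$. I would handle this by expanding $\M^P(y,\beta)$ to first order and using the oddness of the linear term under the Gaussian integral, possibly after a further explicit first-order correction to $\bm{E}$.
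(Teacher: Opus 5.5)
Your inner parametrix is wrong as written, and the ``compatibility obstacle'' you single out is an artifact of the conjugation. First move $\X_1,\dots,\X_4$ so that $\beta_2=\ii$ and $\beta_3=-\ii$; this is necessary, not optional, since $\beta_2,\beta_3\to0$ as $\xi\to-q_0$. Then $\M^P(y,\cdot)$ is analytic in the small disk. So the jump of $\M^P\bm E$ on $\X_5\cup\X_6\cup\X_7$ is the jump of $\bm E$ alone, which for your $\bm E=\M^P(y,0)^{-1}\M^E\M^P(y,0)$ is $\M^P(y,0)^{-1}\V^E\M^P(y,0)$.

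Now $\V^E-\I=\ii s\,\e^{\ii\alpha^2}\,\mathbf{w}\,\mathbf{e}_2^{\top}$ with $\mathbf{w}=(1,0,\ii)^{\top}$. Moreover $\mathbf{e}_2^{\top}\M^P(y,0)=\mathbf{e}_2^{\top}$ and $\M^P(y,0)\mathbf{w}=\e^{U_{HM}(y)}\mathbf{w}$. Hence the conjugated jump is exactly $\V^E$ with $s$ replaced by $s\,\e^{-U_{HM}(y)}$. Since $U_{HM}(y)\neq0$, the mismatch with $\V^{\X}$ has size $|s|\,|1-\e^{-U_{HM}(y)}|$ in $L^\infty$ near $\beta_0,\beta_1$, because $|\e^{\ii\alpha^2}|=1$ on $\X_6$. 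So the small-norm step for $\bm R$ fails: in the $\alpha$-variable you are comparing two genuinely different error-function problems.

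Your bookkeeping is also inconsistent. With your $\bm E$, the circle jump of $\bm R$ is $\M^P\bm E^{-1}(\M^P)^{-1}$, and Cauchy's formula returns $\M^E_1(s)$. It does not return the claimed $\M^P(y,0)\M^E_1(s)\M^P(y,0)^{-1}=\e^{U_{HM}(y)}\M^E_1(s)$.

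The fix, which is the paper's construction, is to drop the conjugation. Take $\bm E(\beta)=\M^E(s;\alpha(\beta))$ with $\alpha=(\frac43 q_0)^{1/3}t^{1/6}\beta$, placed to the right of $\M^P$; that is, set $\cE^{\X}=\N^{\X}(\widetilde{\M}^E)^{-1}(\M^P)^{-1}$ inside $\cD_{\eps}(0)$.

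\begin{itemize}
\item Then $\M^P\widetilde{\M}^E$ has jump exactly $\V^E(\alpha(\beta))$ on $E^{\eps}$. The only discrepancy with $\V^{\X}$ comes from $\e^{\mp\ii y\beta}-1=\mathcal{O}(\beta)$, and you never need to replace $\M^P(y,\beta)$ by $\M^P(y,0)$ in a jump.
\item This discrepancy is $\mathcal{O}(t^{-1/6})$ in $L^\infty$. Integrating $(t^{-1/3}+r)\e^{-ct^{1/3}r^2}$ directly gives $\mathcal{O}(t^{-1/3})$ in $L^1$, so no oddness argument or first-order correction is needed.
\item Combined with $\|\hat{\bbu}-\I\|_{L^2}\|\hat{\bw}\|_{L^2}=\mathcal{O}(t^{-1/6})\cdot\mathcal{O}(t^{-1/6})$, this yields the $\mathcal{O}(t^{-1/3})$ remainder in \eqref{E:estcEX1}.
\item The conjugation by $\M^P(y,0)$ appears automatically. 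Evaluate $-\frac{1}{2\pi\ii}\oint\M^P(\beta)\bigl[(\widetilde{\M}^E)^{-1}-\I\bigr](\M^P(\beta))^{-1}\,\mathrm{d}\beta$ by residues at $\beta=0$. This uses the analyticity of $\M^P$ in the disk and $(\widetilde{\M}^E)^{-1}-\I=-\M^E_1/\alpha+\mathcal{O}(t^{-1/2})$ on the circle.
\end{itemize}
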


Since the proof of this theorem is quite long, we have devoted a separate subsection to it.
\subsection{Proof of Theorem~\ref{Th:asXi} } \label{S:prof1}
Our proof is based on the analysis of a similar model problem in~\cite[Appendix A]{CL2024main}. The main idea of the proof is to subtract the contribution of the outer model away from the origin, and to subtract the contribution of the local model near the origin, thereby reducing the problem to a small norm RH problem. By analyzing this small norm RH problem, one can  obtain the desired result.  For brevity,  we may omit the dependence on parameters, for instance by letting $\N^{\X}(\beta):=\N^{\X}(y,s,t;\beta)$.

We first note that by performing a simple contour deformation, we can (and will) henceforth assume that $\beta_2 = \ii$ and $\beta_3=-\ii$. On the other hand, when $(x, t) \in \cP_+$ and $t \to \infty$, $\beta_1$ and $\beta_0$ will decay to $0$ at a rate of $\mathcal{O}(t^{-1/3})$. Therefore, in the vicinity  of the origin, it is necessary to introduce a suitable scaling transformation so that the contribution near the origin can be approximated by an exactly solvable model RH problem.  Outside a certain neighborhood of the origin, we show that the remaining contribution can be approximated by the solution of the Painlev\'e II model RH problem. To this end, we define $\cD_{\eps}(0)$ as a small disk centered at the origin with radius $\eps$,  and its boundary is oriented counterclockwise. We choose $\eps$ sufficiently small so that $\pm \ii$  are not in the disk. Inside the disk, we introduce the transformation $\beta \to \alpha$ as
$$
\alpha=(\frac{4}{3} q_0)^{1/3} t^{1/6} \beta.
$$
Our next lemma shows that inside $\cD_{\eps}(0)$, $\N^{\X}(\beta)$ is well approximated by $\M^E(\alpha(\beta))$ as $t \to \infty$,  where $\M^E(\alpha)$  denotes the solution to the error function RH problem~\ref{RHME}. Let $E^{\eps}:=\cup_{j=5}^7 E^{\eps}_j$, where  $E^{\eps}_j:=\cD_{\eps}(0) \cap \X_j$.
\begin{lemma}
The function $\widetilde{\M}^E(\beta):=\M^E(\alpha(\beta))$ satisfies the  jump condition:
\be
\widetilde{\M}^E_+(\beta)=\widetilde{\M}^E_-(\beta) \widetilde{\V}^E(\beta), \quad  \beta \in E^{\eps},
\ee
where
$$
\widetilde{\V}^E(\beta)=
\V^E(\alpha(\beta)), \quad \beta \in E^{\eps}.
$$
For large $t$, the jump matrix $\widetilde{\V}^E(\beta)$ satisfies the estimates
\begin{align}\label{E:estVEX}
\begin{cases}
\|\V^{\X}- \widetilde{\V}^E \|_{L^{\infty} (E^{\eps})} \leq C t^{-1/6},\\
\|\V^{\X}- \widetilde{\V}^E \|_{L^{1} (E^{\eps})} \leq C t^{-1/3}.
\end{cases}
\end{align}
Furthermore, as $t \to \infty$,
\begin{align}
&\| (\widetilde{\M}^E)^{-1}-\I \|_{L^{\infty}(\partial \cD_{\eps}(0) )}=\mathcal{O}(t^{-1/6}),\label{E:tilM1}\\
&(\widetilde{\M}^E)^{-1}(\beta)-\I=-\frac{ \M^E_1 }{(\frac{4q_0}{3})^{1/3} t^{1/6} \beta}+\mathcal{O}(t^{-1/2}),  \quad \beta \in \partial \cD_{\eps}(0),\label{E:tilM2}
\end{align}
where  $\M^E_1$ is given by~\eqref{M1Eexpression}.

\end{lemma}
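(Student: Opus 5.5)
The plan is to make the change of variables inside $\cD_{\eps}(0)$ explicit, and then read off all three claims from the structure of the jumps $\V^{\X}_{5,6,7}$ and from the large-$\alpha$ expansion of the error-function solution $\M^E$.

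\emph{Jump relation.} By RH problem~\ref{RHME}, $\M^E$ jumps across its rays through the origin with jump matrix $\V^E(\alpha)$. The map $\beta\mapsto\alpha=(\frac43q_0)^{1/3}t^{1/6}\beta$ is a positive dilation, so it preserves these rays and their orientation. Hence $\widetilde{\M}^E(\beta)=\M^E(\alpha(\beta))$ satisfies $\widetilde{\M}^E_+=\widetilde{\M}^E_-\V^E(\alpha(\beta))$ on the preimage contour. The paper's contour $E^{\eps}$ contains the short segment $[\beta_0,\beta_1]$, whose length is $\mathcal{O}(t^{-1/3})$, while the $\M^E$-contour has its vertex at $0$ (or at the rescaled points). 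I will first use analyticity of the jump entries $e^{\pm\ii y\beta+\ii\alpha^2}$ to deform $\X_5\cup\X_6\cup\X_7$ inside the disk so that the two contours coincide near $\partial\cD_{\eps}(0)$. The remaining mismatch is confined to a region of size $\mathcal{O}(t^{-1/3})$, and I will estimate it together with $\X_6$ below.

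\emph{Jump estimates \eqref{E:estVEX}.} The key identity is $\ii(\frac43q_0)^{2/3}t^{1/3}\beta^2=\ii\alpha^2$. Consequently $\V^{\X}_{5,6,7}$ is exactly $\V^E(\alpha)$ except for the extra factors $e^{\mp\ii y\beta}$ in the $(1,2)$ and $(3,2)$ entries. Therefore
\[
|\V^{\X}-\widetilde{\V}^E|\le C\,|s|\,|e^{\mp\ii y\beta}-1|\,|e^{\ii\alpha^2}|\le C|\beta|\,e^{-c|\alpha|^2},
\]
using that $y$ is bounded on $\cP$. On the rays of angle $\pi/6$ and $-5\pi/6$ we have $\arg\alpha^2=\pi/3$, so $\re(\ii\alpha^2)=-\frac{\sqrt3}{2}|\alpha|^2$, which gives the Gaussian factor. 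Writing $|\beta|=Ct^{-1/6}|\alpha|$, the sup norm is bounded by $Ct^{-1/6}\sup_\alpha|\alpha|e^{-c|\alpha|^2}=\mathcal{O}(t^{-1/6})$. For the $L^1$ norm, $\int|\beta|e^{-ct^{1/3}|\beta|^2}\,|d\beta|=\mathcal{O}(t^{-1/3})$. On the segment $\X_6$ and the mismatch region both jumps are bounded and the integrand is $\mathcal{O}(|\beta|)=\mathcal{O}(t^{-1/3})$, so this part contributes $\mathcal{O}(t^{-1/3})$ to the sup bound and $\mathcal{O}(t^{-2/3})$ to the $L^1$ bound, both within the claimed estimates.

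\emph{Boundary behaviour \eqref{E:tilM1}--\eqref{E:tilM2}.} On $\partial\cD_{\eps}(0)$ we have $|\alpha|=\eps(\frac43q_0)^{1/3}t^{1/6}\to\infty$ uniformly in $\arg\beta$. I will use the explicit solution of RH problem~\ref{RHME} in terms of error functions. Its standard asymptotics give $\M^E(\alpha)=\I+\M^E_1\alpha^{-1}+\M^E_2\alpha^{-2}+\mathcal{O}(\alpha^{-3})$ uniformly in $\arg\alpha$; since $\det\M^E\equiv1$, the inverse has a similar expansion. This already yields \eqref{E:tilM1} with rate $t^{-1/6}$, and the leading term $-\M^E_1/\alpha=-\M^E_1/((\frac43q_0)^{1/3}t^{1/6}\beta)$ in \eqref{E:tilM2}.

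The main obstacle is the $\mathcal{O}(t^{-1/2})$ remainder in \eqref{E:tilM2}, which requires the $\alpha^{-2}$ coefficient of $(\M^E)^{-1}$, namely $(\M^E_1)^2-\M^E_2$, to vanish. I would first try to prove this from the explicit error-function formulas. The expectation is that $\M^E$ is built as $\I+$ (a strictly upper/lower triangular nilpotent combination in the second column and row). Then the Cauchy-integral expansion $\int e^{\ii\alpha'^2}\alpha'^{k}\,d\alpha'$ produces only odd inverse powers in the relevant entries, because the even moments cancel by the symmetry $\alpha\mapsto-\alpha$ of the rays. In addition, $(\M^E_1)^2=0$ by the triangular structure. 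If the parity argument fails, the fallback is to compute $\M^E_2$ directly from the Gaussian moments and verify the cancellation by hand.
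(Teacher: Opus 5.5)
Your argument is essentially the paper's. On $E^{\eps}$ the only discrepancy between $\V^{\X}_{5,6,7}$ and $\V^E(\alpha(\beta))$ is the factor $\e^{\mp\ii y\beta}$ in the $(1,2)$ and $(3,2)$ entries. This gives $C|\beta|$ times a Gaussian on the two rays, and $\mathcal O(t^{-1/3})$ on a segment of length $\mathcal O(t^{-1/3})$. The estimates \eqref{E:tilM1}--\eqref{E:tilM2} then follow from the odd-power expansion in Lemma~\ref{MElemma}, which is exactly your parity argument: the moments $\int\zeta^k\e^{\ii\zeta^2}\,\mathrm{d}\zeta$ with $k$ odd vanish. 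You also need $(\M^E)^{-1}=2\I-\M^E$, which holds because $\M^E-\I$ lives in the $(1,2)$ and $(3,2)$ entries and squares to zero.

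The one step to restate is the contour matching. Do not deform $\X$ and then ``estimate the mismatch''. Where $\N^{\X}$ and $\widetilde{\M}^E$ jump on different curves, $\N^{\X}(\widetilde{\M}^E)^{-1}$ has jumps differing from $\I$ by $\mathcal O(|s|)$, because near the origin $|\e^{\ii\alpha^2}|=1+o(1)$. Short arc length does not make such jumps small for the Cauchy operator. Moreover, the asserted jump relation on $E^{\eps}$ would fail on that part of the contour, e.g.\ on $\X_6$.

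Instead, use that $\V^E$ is entire in $\alpha$ and that $\M^E$ is obtained by opening lenses from $\R\e^{\pi\ii/4}$. Choose $E:=\alpha(\X_5\cup\X_6\cup\X_7)$, with vertices $\alpha(\beta_0)$ and $\alpha(\beta_1)$; this is what the paper implicitly does. Then the jump relation holds on $E^{\eps}$ exactly. On the rays, replace $\arg\alpha^2=\pi/3$ (true only for rays through the origin) by $\re(\ii\beta^2)\le-\frac{\sqrt3}{2}|\beta-\beta_j|^2$, which is valid since $\beta_0\le0\le\beta_1$ in $\cP_+$. Combined with $|\beta|\le Ct^{-1/3}+|\beta-\beta_j|$, this yields exactly the paper's bounds.
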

\begin{proof}
For $\beta \in E^{\eps}$, we have
$$
\V^{\X}- \widetilde{\V}^E=
\bpm
0& \ii s \e^{\ii (\frac{4}{3}q_0)^{\frac{2}{3}}t^{\frac{1}{3}} \beta^2} \left(\e^{-\ii y \beta} -1  \right) &0\\
0&0&0\\
0& -s \e^{\ii (\frac{4}{3}q_0)^{\frac{2}{3}}t^{\frac{1}{3} } \beta^2}\left(\e^{\ii y \beta} -1  \right)&0
\epm.
$$
On \(E^\eps_5\cup E^\eps_7\),
$$
\left|\e^{\ii(\frac43q_0)^{2/3}t^{1/3}\beta^2}\right|
\leq
\e^{-ct^{1/3}r^2},
\qquad r=|\beta-\beta_j|,\quad j=0,1.
$$
Since \(y\) is bounded, we also have
$$
\left|\e^{\pm\ii y\beta}-1\right|
\leq C|\beta|
\leq C(|\beta_j|+r)
\leq C(t^{-1/3}+r),
\qquad j=0,1.
$$
Consequently,
$$
\left|(\V^{\X}-\widetilde{\V}^E)_{12}\right|
+
\left|(\V^{\X}-\widetilde{\V}^E)_{32}\right|
\leq
C(t^{-1/3}+r)\e^{-ct^{1/3}r^2}, \qquad   \beta \in E^\eps_5\cup E^\eps_7.
$$
Taking the supremum in \(r\geq0\), we obtain
$$
\|\V^{\X}-\widetilde{\V}^E\|_{L^\infty(E^\eps_5\cup E^\eps_7)}
\leq Ct^{-1/6}.
$$
Moreover,
$$
\|\V^{\X}-\widetilde{\V}^E\|_{L^1(E^\eps_5\cup E^\eps_7)}
\leq
C\int_0^\infty (t^{-1/3}+r)\e^{-ct^{1/3}r^2}\,\mathrm d r
\leq Ct^{-1/3}.
$$

It remains to consider \(E^\eps_6=[\beta_0,\beta_1]\). On this segment \(\beta\) is real, and hence
$
\left|\e^{\ii(\frac43q_0)^{2/3}t^{1/3}\beta^2}\right|=1
$.
Since \(y\) is bounded,
$$
\left|\e^{\pm\ii y\beta}-1\right|\leq C|\beta| \leq C \left( |\beta_0|+|\beta_1| \right)\leq Ct^{-1/3}, \qquad \beta \in E^\eps_6.
$$
Thus
$$
\|\V^{\X}-\widetilde{\V}^E\|_{L^\infty(E^\eps_6)}
\leq Ct^{-1/3},
$$
and, since \(|E^\eps_6|\leq Ct^{-1/3}\),
$
\|\V^{\X}-\widetilde{\V}^E\|_{L^1(E^\eps_6)}
\leq Ct^{-2/3}
$.
Combining the estimates on \(E^\eps_5\), \(E^\eps_6\), and \(E^\eps_7\), we obtain~\eqref{E:estVEX}.

We next prove \eqref{E:tilM1} and \eqref{E:tilM2}. If \(\beta\in\partial\cD_\eps(0)\), then
\[
|\alpha|=\left(\frac{4q_0}{3}\right)^{1/3}t^{1/6}|\beta|\to\infty, \qquad t \to \infty.
\]
Thus~\eqref{MEasymptotics}  yields
\be \label{E:MEaqm}
\M^E(x,t, \alpha(\beta))=\I+\frac{\M^E_1  }{(\frac{4q_0}{3})^{1/3} t^{1/6} \beta}+\mathcal{O}(t^{-1/2}), \quad t \to \infty,
\ee
uniformly for  $\beta \in \cD_{\eps}(0)$ and $(x,t) \in \cP_+$.
Since $\M^E_1$ is bounded, we immediately obtain~\eqref{E:tilM1} and  \eqref{E:tilM2}.
\end{proof}
Based on the previous analysis, we  proceed to introduce a transformation to obtain a small-norm RH problem and conduct error analysis. Let us define $\cE^{\X}(\beta)$ as
\be \label{E:Ebeta}
\cE^{\X}(\beta)=\begin{cases}
\N^{\X}(\beta) \left( \M^P (\beta) \right)^{-1}, & \beta \in \C \setminus \cD_{\eps}(0),\\
\N^{\X}(\beta) \left(\widetilde{\M}^E (\beta) \right)^{-1} \left( \M^P (\beta) \right)^{-1}, &  \beta \in \cD_{\eps}(0),
\end{cases}
\ee
where $\M^P (\beta) $ is the solution to the Painlev\'e model problem~\ref{RHmP}.
 Define
\[
\hat E=\left(\bigcup_{j=5}^7\X_j\right)\cup\partial\cD_{\eps}(0),
\qquad
E'=\hat E\setminus\overline{\cD_{\eps}(0)}.
\]
 Then it can be directly verified that $\cE^{\X}(\beta)$ has no jump on $\cup_{j=1}^4 \X_j$; it only has a jump on  $\hat{E}$. The jump matrix $\hat{\bv}$ for $\cE^{\X}(\beta)$ is given by the following expression:
\be \label{E:exbv}
\hat{\bv}=
\begin{cases}
\M^P \V^{\X} (\M^P)^{-1}, & \beta \in E',\\
\M^P (\widetilde{\M}^E)^{-1} (\M^P)^{-1}, & \beta \in \partial \cD_{\eps}(0),\\
\M^P \widetilde{\M}^E_- \V^{\X} (\widetilde{\M}^E_+)^{-1}(\M^P)^{-1}, & \beta \in E^{\eps}.
\end{cases}
\ee
\begin{lemma}\label{L:estofw}
Let $\hat{\bw}=\hat{\bv}-\I$. The following estimates hold uniformly for large $t$ and $(x,t) \in \cP_+$:
\begin{align}
&\|\hat{\bw} \|_{(L^1 \cap  L^{\infty})(E')} \leq C \e^{-ct^{1/3}}, \label{E:estofw1}\\
& \|\hat{\bw}  \|_{L^1 \cap L^{\infty}(\partial \cD_{\eps})} \leq C t^{-1/6},\label{E:estofw2} \\
& \|\hat{\bw}  \|_{L^1 (E^{\eps})} \leq C t^{-1/3},\label{E:estofw3}\\
& \|\hat{\bw}  \|_{L^{\infty} (E^{\eps})} \leq C t^{-1/6}. \label{E:estofw4}
\end{align}
\end{lemma}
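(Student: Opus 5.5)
We treat the three pieces of $\hat E$ separately, using the formula~\eqref{E:exbv} for $\hat{\bv}$ on each piece. Throughout we use that $\M^P(\beta)$ and its inverse are uniformly bounded for $\beta$ in compact subsets of $\C\setminus\cup_{j=1}^4\X_j$, in particular near the origin. This holds uniformly because $y$ stays in a compact set for $(x,t)\in\cP_+$ and the Hastings--McLeod solution has no poles on $\R$. We also use that $\M^P(\beta)=\I+\mathcal O(\beta^{-1})$ at infinity, so $\M^P$ and $(\M^P)^{-1}$ are bounded on all of $E'$.

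\textbf{Estimate on $E'$.} On $E'$ we have $\hat{\bw}=\M^P(\V^{\X}-\I)(\M^P)^{-1}$. The nonzero entries of $\V^{\X}-\I$ are the $(12)$ and $(32)$ entries, which contain the factor $\e^{\ii(\frac43q_0)^{2/3}t^{1/3}\beta^2}$ times $\e^{\mp\ii y\beta}$. On $\X_5$ and $\X_7$ the rays leave $\beta_0,\beta_1$ at angles $-5\pi/6$ and $\pi/6$, so $\im\beta^2=\sin(\pi/3)|\beta|^2(1+o(1))>0$ there, since $\beta_j=\mathcal O(t^{-1/3})$. Outside $\cD_\eps(0)$ we have $|\beta|\ge\eps$, which gives the bound $\e^{-ct^{1/3}|\beta|^2}$. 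The factor $|\e^{\mp\ii y\beta}|\le\e^{C|\beta|}$ is absorbed into this Gaussian, and $s$ is bounded. Integrating over $|\beta|\ge\eps$ then yields~\eqref{E:estofw1}.

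\textbf{Estimate on $\partial\cD_\eps(0)$.} Here $\hat{\bw}=\M^P\bigl((\widetilde\M^E)^{-1}-\I\bigr)(\M^P)^{-1}$. The bound follows directly from~\eqref{E:tilM1} and the boundedness of $\M^P$, together with the finite length of the circle. This gives~\eqref{E:estofw2}.

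\textbf{Estimate on $E^\eps$.} Since $\widetilde\M^E_+=\widetilde\M^E_-\widetilde\V^E$, we can write
\[
\hat{\bw}=\M^P\widetilde\M^E_-\bigl(\V^{\X}-\widetilde\V^E\bigr)(\widetilde\M^E_+)^{-1}(\M^P)^{-1}.
\]
Inside the disk $\M^P$ has no jump, since $\pm\ii\notin\cD_\eps(0)$, so it is bounded there. The key extra input is that $\M^E(\alpha)$ and its inverse are bounded uniformly in $\alpha\in\C$, including the boundary values on the contour. This follows from the explicit error-function construction: $\M^E$ is bounded near the finitely many contour points and tends to $\I$ at infinity. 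Granting this, \eqref{E:estofw3} and \eqref{E:estofw4} follow from~\eqref{E:estVEX}.

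The main obstacle is making the uniform boundedness of $\M^E_\pm$ rigorous as the scaled contour moves. In the $\alpha$-variable the points $\alpha(\beta_0),\alpha(\beta_1)$ are of size $t^{-1/6}$, so they tend to $0$. We must ensure the error-function solution stays bounded uniformly with respect to this configuration; this may require the model to be defined with its jump on a fixed contour and a contour-deformation argument. A second, minor point is that the $L^\infty$ bound on $E'$ needs $\beta_0,\beta_1$ small enough that the decay sector on $\X_5,\X_7$ is preserved, which holds for $t>T$.
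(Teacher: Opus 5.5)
Your proposal is correct and follows the paper's proof. The three pieces match: exponential smallness of $\V^{\X}-\I$ on $E'$ with $\M^P$ bounded there, \eqref{E:tilM1} on $\partial\cD_{\eps}(0)$, and on $E^{\eps}$ the factorization $\hat{\bw}=\M^P\widetilde{\M}^E_-(\V^{\X}-\widetilde{\V}^E)(\widetilde{\M}^E_+)^{-1}(\M^P)^{-1}$ combined with \eqref{E:estVEX}; you give more detail than the paper does. The uniform boundedness of $\widetilde{\M}^E_\pm$ that you flag (and the paper uses tacitly) is settled by the deformation to the fixed contour $\widetilde{E}=\R\e^{\pi\ii/4}$ from the proof of Lemma~\ref{MElemma}: $\M^{\widetilde{E}}$ is $t$-independent and bounded, the connecting factor $\V^E$ is bounded between the two contours because $\im\alpha^2\ge0$ there, and $\det\M^E\equiv1$ controls the inverse.
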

\begin{proof}
When $z \in E'$, both $\M^P$  and $(\M^P)^{-1}$ are uniformly bounded, and $\V^{\X} -\I$ decays exponentially as $t \to \infty$; hence,~\eqref{E:estofw1} holds.~\eqref{E:estofw2} follows directly from~\eqref{E:tilM1} and~\eqref{E:exbv}. When $z \in E^{\eps}$, we have
$$
\hat{\bw}=\M^P \widetilde{\M}^E_-  \left(  \V^{\X} -\widetilde{\V}^{E}  \right)(\widetilde{\M}^E_+)^{-1}(\M^P)^{-1}.
$$
Combining the above equation with~\eqref{E:estVEX}, we immediately obtain~\eqref{E:estofw3} and~\eqref{E:estofw4}.
\end{proof}

The estimates in Lemma~\ref{L:estofw} show that
\begin{align*}
\|\hat{\bw} \|_{(L^1 \cap  L^{\infty})(\hat{E} )} \leq C t^{-1/6},
\quad (x,t) \in \cP_+.
\end{align*}
Thus by employing  the general inequality $\|f \|_{L^p} \leq \| f\|_{L^1}^{\frac{1}{p}} \|f \|_{L^{\infty}}^{\frac{p-1}{p}}$, we  immediately get
\begin{align}\label{E:Lpw}
\|\hat{\bw} \|_{L^p(\hat{E})} \leq C t^{-1/6}, \quad (x,t) \in \cP_+.
\end{align}
This indicates that $\cE^{\X}(\beta)$ satisfies a small-norm RH problem.

 Let \(\Ca\) be the Cauchy transform on \(\hat E\),
\[
\Ca h(\beta)=\frac{1}{2\pi\ii}\int_{\hat E}\frac{h(\zeta)}{\zeta-\beta}\,d\zeta,
\]
and let \(\Ca_\pm\) denote its boundary values. Define
$
\Ca_{\hat{\bw}}h=\Ca_-(h\hat{\bw})
$.
Since \(\|\hat{\bw}\|_{L^\infty(\hat E)}\to0\), the operator \(I-\Ca_{\hat{\bw}}\) is invertible on \(L^2(\hat E)\) for all sufficiently large \(t\), uniformly for \((x,t)\in\cP_+\). We set
\be\label{E:uhat}
\hat{\bbu}
=
\I+(I-\Ca_{\hat{\bw}})^{-1}\Ca_{\hat{\bw}}\I
\in \I+L^2(\hat E).
\ee
Then the solution of the small-norm RH problem is
\be\label{E:cEXc}
\cE^{\X}(\beta)
=
\I+\frac{1}{2\pi\ii}\int_{\hat E}
\frac{\hat{\bbu}(\zeta)\hat{\bw}(\zeta)}{\zeta-\beta}\,d\zeta,
\qquad \beta\in\C\setminus\hat E.
\ee
Consequently, for each \(N\geq1\),
\be\label{E:gbb1}
\cE^{\X}(\beta)
=
\I+\sum_{j=1}^{N}\frac{\cE^{\X}_j}{\beta^j}
+\mathcal{O}(\beta^{-N-1}),
\qquad \beta\to\infty,
\ee
where
\be\label{E:cEXex}
\cE^{\X}_1(y,s,t)
=
\lim_{\beta\to\infty}^{\angle}\beta(\cE^{\X}(\beta)-\I)
=
-\frac{1}{2\pi\ii}\int_{\hat E}\hat{\bbu}(\zeta)\hat{\bw}(\zeta)\,d\zeta.
\ee
\begin{lemma}\label{L:estofcEX}
As $t \to \infty$,
\begin{align}\label{E:estofCEX}
\cE^{\X}_1(s,y,t)=-\frac{1}{2\pi \ii}\int_{\partial \mathcal{D}_{\eps}(0)} \hat{\bw}(\zeta) \mathrm{d}\zeta+\mathcal{O}(t^{-1/3}).
\end{align}
\end{lemma}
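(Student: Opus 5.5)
We start from the representation \eqref{E:cEXex} and split $\hat{\bbu}=\I+(\hat{\bbu}-\I)$, so that
\[
\cE^{\X}_1=-\frac{1}{2\pi\ii}\int_{\hat E}\hat{\bw}(\zeta)\,\mathrm d\zeta-\frac{1}{2\pi\ii}\int_{\hat E}(\hat{\bbu}(\zeta)-\I)\hat{\bw}(\zeta)\,\mathrm d\zeta .
\]
We then decompose $\hat E=E'\cup\partial\cD_{\eps}(0)\cup E^{\eps}$. In the first integral, the piece over $\partial\cD_{\eps}(0)$ is kept, since it is exactly the leading term in \eqref{E:estofCEX}. The other two pieces go into the error. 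By \eqref{E:estofw1}, the integral over $E'$ is $\mathcal O(\e^{-ct^{1/3}})$. By \eqref{E:estofw3}, the integral over $E^{\eps}$ is bounded by $\|\hat{\bw}\|_{L^1(E^{\eps})}\le Ct^{-1/3}$. Both are therefore $\mathcal O(t^{-1/3})$.

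For the second integral, we use the standard small-norm bound. Since $\|\hat{\bw}\|_{L^\infty(\hat E)}\le Ct^{-1/6}$, the operator norm of $(I-\Ca_{\hat{\bw}})^{-1}$ on $L^2(\hat E)$ is uniformly bounded for large $t$. We have $\hat{\bbu}-\I=(I-\Ca_{\hat{\bw}})^{-1}\Ca_-(\hat{\bw})$, and $\Ca_-$ is bounded on $L^2(\hat E)$. These two facts give
\[
\|\hat{\bbu}-\I\|_{L^2(\hat E)}\le C\|\hat{\bw}\|_{L^2(\hat E)}\le Ct^{-1/6},
\]
where the last inequality is \eqref{E:Lpw} with $p=2$. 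By Cauchy--Schwarz,
\[
\Bigl|\int_{\hat E}(\hat{\bbu}-\I)\hat{\bw}\,\mathrm d\zeta\Bigr|\le \|\hat{\bbu}-\I\|_{L^2(\hat E)}\|\hat{\bw}\|_{L^2(\hat E)}\le Ct^{-1/3}.
\]
Combining this with the first paragraph gives \eqref{E:estofCEX}.

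The main point to check is that the constants are uniform in $(x,t)\in\cP_+$ and in $t$. This concerns two things: the boundedness of the Cauchy operator on the $t$-dependent contour $\hat E$, and the $L^2$ bound on $\hat{\bw}$. The contour $\hat E$ changes with $t$ only because the segment $[\beta_0,\beta_1]$ shrinks to $0$ while the rays $\X_5$, $\X_7$ keep their fixed directions. We would therefore translate or rescale to a fixed union of rays and a fixed circle, on which the Cauchy operator has a uniform bound. A Lipschitz-graph argument for the Cauchy operator would serve the same purpose.

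On the $L^2$ bound, interpolation alone gives only $t^{-1/6}$, since $\|\hat{\bw}\|_{L^2}\le\|\hat{\bw}\|_{L^1}^{1/2}\|\hat{\bw}\|_{L^\infty}^{1/2}$. This is enough here, because the product of the two $L^2$ norms already yields $t^{-1/3}$.
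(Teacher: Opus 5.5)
Your proof is correct and follows the paper's argument. Like the paper, you split $\hat{\bbu}=\I+(\hat{\bbu}-\I)$, keep the $\partial\cD_{\eps}(0)$ integral, and bound the rest using Lemma~\ref{L:estofw}, $\|\hat{\bbu}-\I\|_{L^2(\hat E)}\le Ct^{-1/6}$ and Cauchy--Schwarz; you are in fact slightly more careful than the paper, whose displayed decomposition puts only the $E'$ integral into $\mathbf{Q}_1$ and so never explicitly bounds the $E^{\eps}$ contribution, which you correctly show is $\mathcal O(t^{-1/3})$ by \eqref{E:estofw3}.
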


\begin{proof}
The function $\cE^{\X}_1(s,y,t)$ can be rewritten as
$$
\cE^{\X}_1(s,y,t) = -\frac{1}{2\pi \ii}\int_{\partial \mathcal{D}_{\eps}(0)} \hat{\bw}(\zeta) d\zeta + \mathbf{Q}_1(x,t) + \mathbf{Q}_2(x,t),
$$
where
\begin{align*}
\mathbf{Q}_1(x,t) = -\frac{1}{2\pi \ii}\int_{E'} \hat{\bw} (\zeta) \mathrm{d}\zeta, \qquad
 \mathbf{Q}_2(x,t) = -\frac{1}{2\pi \ii}\int_{\hat{E}} (\hat{\bbu}(\zeta)-\I) \hat{\bw}(\zeta) \mathrm{d} \zeta.
\end{align*}
Let's estimate $\|\hat{\bbu}-\I \|_{L^2(\hat{E})}$. A direct calculation yields
\begin{equation}\label{E:xljc}
\begin{aligned}
\|\hat{\bbu} - \I\|_{L^2(\hat{E})}&\leq \|(I-\Ca_{\hat{\bw}})^{-1}\Ca_{\hat{\bw}}\I \|_{L^2(\hat{E})}\leq \sum_{j=0}^{\infty}\| \Ca_{\hat{\bw}}\|^j_{\mathcal{B}(L^2(\hat{E}))}\|\Ca_{\hat{\bw}} \I \|_{L^2(\hat{E})}\\
&\leq  \frac{\|\Ca_- \|_{\mathcal{B}(L^2(\hat{E}))} \| \hat{\bw}\|_{L^2(\hat{E})}}{1-\|\Ca_- \|_{\mathcal{B}(L^2(\hat{E}))} \|\hat{\bw} \|_{L^{\infty}(\hat{E})}} \leq Ct^{-1/6}, \qquad t>T.
\end{aligned}
\end{equation}
Then the lemma follows from Lemma~\ref{L:estofw} and Eq.~\eqref{E:xljc} and straightforward estimates.
\end{proof}
It remains to compute the contribution from \(\partial\cD_{\eps}(0)\). For \(\beta\in\partial\cD_{\eps}(0)\),
\[
\hat{\bw}(\beta)
=
\M^P(\beta)\left[(\widetilde{\M}^E(\beta))^{-1}-\I\right](\M^P(\beta))^{-1}.
\]
Using~\eqref{E:tilM2} and Cauchy's formula, we obtain
\[
-\frac{1}{2\pi\ii}\int_{\partial\cD_{\eps}(0)}
\hat{\bw}(\zeta)\,d\zeta
=
\frac{\M^P(y,0)\M^E_1(s)(\M^P(y,0))^{-1}}
     {(\frac43q_0)^{1/3}t^{1/6}}
+\mathcal{O}(t^{-1/2}), \qquad t \to \infty.
\]
Combining this with~\eqref{E:estofCEX} proves~\eqref{E:estcEX1}.

Finally, by~\eqref{E:Ebeta}, for sufficiently large \(\beta\),
\be\label{E:gbl}
\N^{\X}(\beta)=\cE^{\X}(\beta)\M^P(\beta).
\ee
Therefore, using~\eqref{E:gbb1}, \eqref{mPasymptotics}, and~\eqref{E:gbl}, we obtain the expansion~\eqref{E:asyPhi}. This completes the proof of Theorem~\ref{Th:asXi}.


\begin{figure}
\begin{center}
\begin{tikzpicture}
\node at (0,0) {\includegraphics[width=4.4cm]{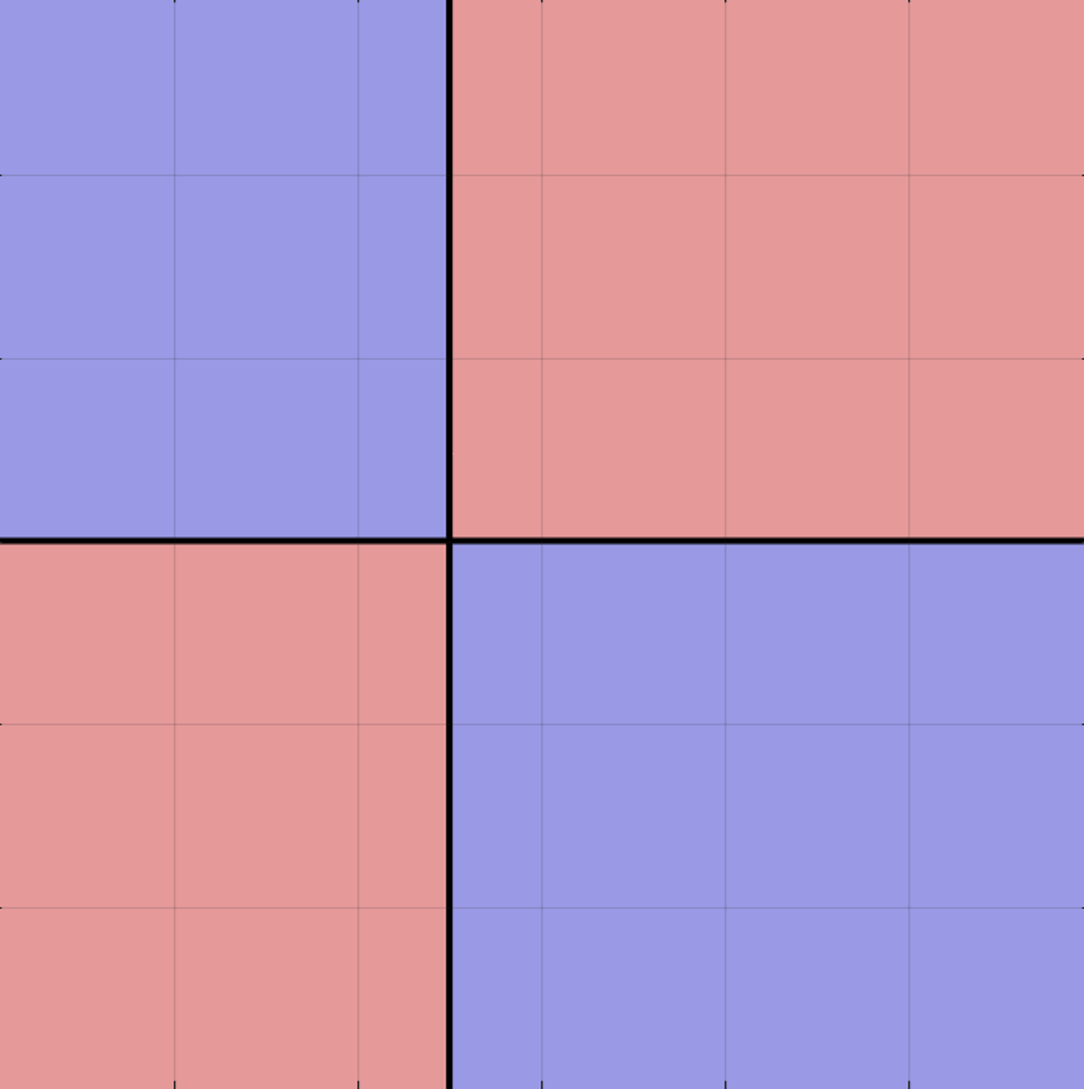} };
\draw[dashed,very thick](0.08,2.25)--(0.08,-2.25);
\node at (0.25,-0.2) {\small $0$};
\filldraw[black](-0.40,0.01)circle(1pt);
\node at (-0.8,-0.2) {\small $-q_0$};
\end{tikzpicture} \hspace{0.1cm} \
\begin{tikzpicture}
\node at (0,0) {\includegraphics[width=4.4cm]{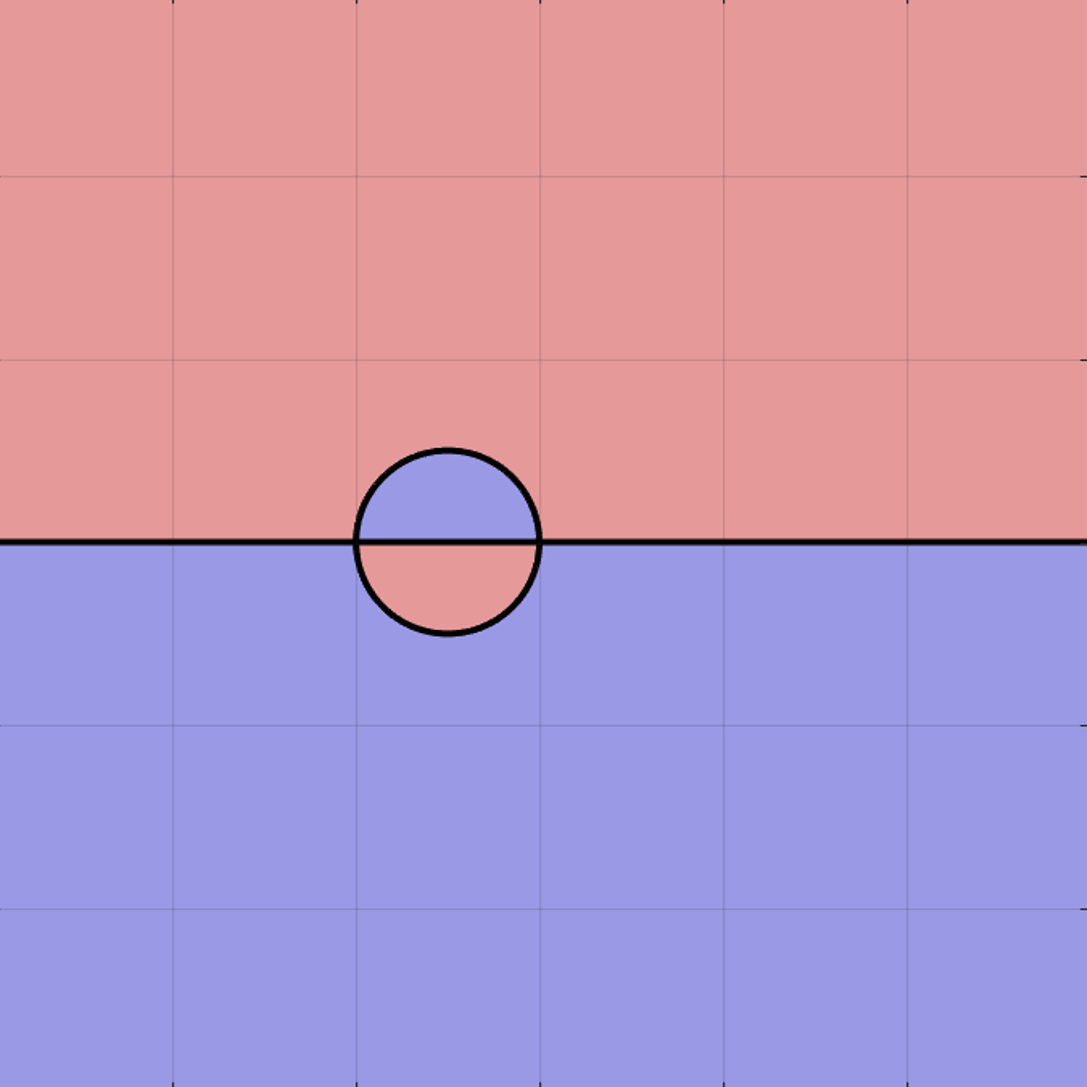}};
\draw[dashed,very thick](0,2.25)--(0,-2.25);
\node at (-1.2,-0.2) {\small $-q_0$};
\node at (0.2,-0.2) {\small $0$};
\filldraw[black](-0.79,-0)circle(1pt);
\end{tikzpicture} \ \ \ \
\begin{tikzpicture}
\node at (0,0) {\includegraphics[width=4.4cm]{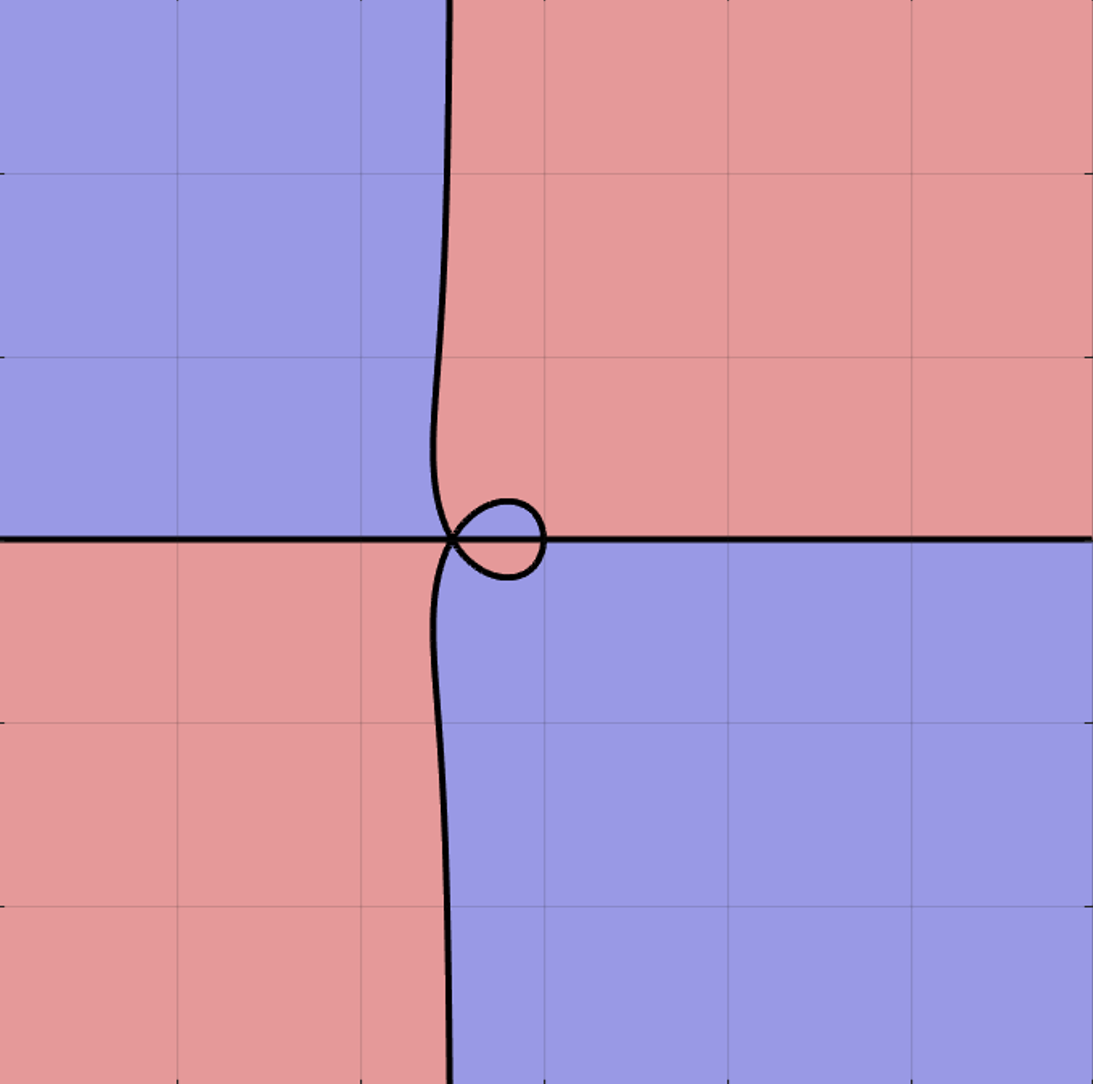}};

\node at (0.2,-0.2) {\small $0$};
\draw[dashed,very thick](0,2.25)--(0,-2.25);
\filldraw[black](-0.4,-0)circle(1pt);
\node at (-0.8,-0.2) {\small $-q_0$};
\end{tikzpicture}
\end{center}
\caption{ From left to right: The signature tables for $\Phi_{21}$, $\Phi_{32}$ and $\Phi_{31}$ for $\xi=-1$ and $q_0=1$. The purple regions correspond to $\{z: \Re \Phi_{ij}<0 \}$ and the pink regions to $\{z: \Re \Phi_{ij}>0 \}$.}
\label{fig:signature table-1}
\end{figure}

\section{Long time asymptotics} \label{S:dza}
In this section, we study the long-time behavior of the RH problem~\ref{RHP:sig} in the Painlev\'e region $\cP$ by using Deift-Zhou nonlinear steepest descent method. Recall that the  region $\cP$ is defined as follows:
\be \label{E:depan}
\cP=\{ (x,t) \in \R \times \R_+ \ | \  |\xi+q_0|  \leq C t^{-2/3} \},
\ee
where $\xi =x/(2t)$ and $C>0$  is a constant.  For the purpose of the proof, it is necessary to split this region into two subregions and address them separately. Denote $\cP = \cP_+ \cup \cP_-$, where
\be \label{E:Ppm}
\cP_+:=\cP \cap \{\xi \geq -q_0 \},  \qquad
\cP_-:=\cP \cap \{\xi\leq -q_0 \}.
\ee
Since the analysis for regions $\cP_+$ and $\cP_-$ is quite similar, we present only the proof for region $\cP_+$.

The long-time asymptotics of RH problem~\ref{RHP:sig} is affected by the growth and decay of the oscillatory terms $\e^{\theta_{21}}$,  $\e^{\theta_{31}}$ and $\e^{\theta_{32}}$ in the jump matrix  $\bV(x,t,z)$. Let $\theta_{ij}(x,t,z)=t \Phi_{ij}(\xi,z)$.  Therefore, we need to analyze the properties of the three phase functions $\Phi_{21}(\xi,z)$, $\Phi_{31}(\xi,z)$ and $\Phi_{32}(\xi,z)$. The expressions for these three phase functions are as follows:
\be
\Phi_{32}(\xi,z)=-2 \ii \xi \frac{q_0^2}{z}+\ii \frac{q_0^4}{z^2}, \quad
\Phi_{21}(\xi,z)=2 \ii \xi z-\ii z^2, \quad
\Phi_{31}(\xi,z)=4 \ii \xi \lambda(z)-4 \ii k(z)\lambda(z).
\ee
 We  need the sign  tables of these  functions~(see Fig~\ref{fig:signature table-1}), which are crucial in the Deift-Zhou analysis.
We also need information about the critical points of these phase functions. A straightforward computation yields the following result:
\bi
\item The function $\Phi_{21}(\xi,z)$ has only one stationary point $\rz_1=\xi$. When $\xi \geq -q_0$ and $|\xi+q_0|  \leq C t^{-2/3} $, we have $|\rz_1+q_0| \leq  C t^{-2/3} $.
\item The function $\Phi_{32}(\xi,z)$ has only one stationary point $\rz_0=\frac{q_0^2}{\xi}$. When $\xi \geq -q_0$ and $|\xi+q_0|  \leq C t^{-2/3} $, we have $|\rz_0+q_0| \leq  C t^{-2/3} $.
\item The function $\Phi_{31}(\xi,z)$ has a stationary point in each of the second and third quadrants, denoted as
\be \label{E:z2z3}
\rz_2=\frac{\eta + \ii \sqrt{4q_0^2-\eta^2} }{2}, \quad \rz_3=\rz_2^*= \frac{\eta - \ii \sqrt{4q_0^2-\eta^2} }{2}, \quad
\eta=\frac{\xi -\sqrt{\xi^2 +8 q_0^2}}{2}.
\ee
When $\xi \geq -q_0$ and $|\xi+q_0|  \leq C t^{-2/3} $, we have
$$
|\rz_2+q_0| \leq  C t^{-1/3}, \quad
 |\rz_3+q_0| \leq  C t^{-1/3}.
$$
\ei
The above analysis shows that these four stationary points will converge to the same point $-q_0$ as $t \to \infty$. This represents a completely different structure compared to the standard Painlev\'e asymptotic analysis of integrable systems, such as those in Refs.~\cite{CLPa2020} and~\cite{WF2023}.

We next  collect a useful property of the phase function \(\Phi_{31}(\xi,z)\). 
The following lemma shows that the local extension of the steepest descent contour from the stationary point \(\rz_2\) in the direction \(\e^{\ii\pi/6}\) can be chosen uniformly for \((x,t)\in\mathcal P_+\). 
The same argument also applies to the extension from \(\rz_2\) in the direction \(\e^{5\ii\pi/6}\). 
Similarly, the corresponding local extensions near \(\rz_3\) can be constructed in the required directions. 
Therefore, the contour deformations near the stationary points \(z_2\) and \(z_3\), as shown in Fig.~\ref{Sigma1Fig}, are always admissible.

\begin{lemma}\label{L:uniform-extension-near-minus-q0}
There exist constants \(\varepsilon>0\) and \(t_0>0\) such that, for all
\(t\geq t_0\) and uniformly for \((x,t)\in\mathcal P_+\), one has
\[
\re \Phi_{31}\bigl(\xi,\rz_2+\tau\e^{\ii\pi/6}\bigr)<0,
\qquad 0<\tau\leq\varepsilon .
\]
\end{lemma}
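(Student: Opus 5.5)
The plan is a local Taylor analysis of \(\Phi_{31}\) near the degenerate point \(z=-q_0\), \(\xi=-q_0\), at which the two stationary points \(\rz_2,\rz_3\) coalesce with the branch point. Using \eqref{E:intr} one has \(\lambda=\frac12(z-q_0^2/z)\) and \(k\lambda=\frac14(z^2-q_0^4/z^2)\), so
\[
\Phi_{31}(\xi,z)=2\ii\xi\Bigl(z-\frac{q_0^2}{z}\Bigr)-\ii\Bigl(z^2-\frac{q_0^4}{z^2}\Bigr),
\]
which is analytic near \(z=-q_0\). I will write \(\xi=-q_0+\delta\), with \(0\le\delta\le Ct^{-2/3}\) on \(\cP_+\), and \(z=-q_0+w\).

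\textbf{Step 1: normal form.} Expanding in \(w\) gives
\[
\Phi_{31}=4\ii\delta\, w+\frac{2\ii}{q_0}w^3+\mathcal O\bigl(|w|^4+\delta|w|^2\bigr).
\]
The cancellation of the constant and quadratic terms at \(\delta=0\) reflects the triple degeneracy. As a cross-check, on \(|z|=q_0\) (so \(z=-q_0\e^{-\ii\psi}\)) the exact formula \(\Phi_{31}=-4q_0\sin\psi\,(\xi+q_0\cos\psi)\) holds, and it is real. In particular \(\rz_2=-q_0\e^{-\ii\psi_2}\) with \(\psi_2^2=\frac{2\delta}{3q_0}(1+o(1))\), so \(\psi_2=\mathcal O(t^{-1/3})\). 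The critical value is \(\re\Phi_{31}(\xi,\rz_2)=\Phi_{31}(\xi,\rz_2)\approx-\frac83q_0\psi_2\delta\le0\). This sign will be derived exactly from the formula \(-4q_0\sin\psi_2(\xi+q_0\cos\psi_2)\) together with the stationarity relation, rather than only approximately.

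\textbf{Step 2: Taylor expansion around \(\rz_2\).} Set \(u=\tau\e^{\ii\pi/6}\). Since \(\partial_z\Phi_{31}(\xi,\rz_2)=0\),
\[
\Phi_{31}(\xi,\rz_2+u)-\Phi_{31}(\xi,\rz_2)=\tfrac12\Phi_{31}''(\rz_2)u^2+\tfrac16\Phi_{31}'''(\rz_2)u^3+R,\qquad |R|\le C\tau^4,
\]
with \(C\) uniform because \(\Phi_{31}\) is analytic in a fixed disk around \(-q_0\) with bounded fourth derivative. From the normal form, with \(w_2=\rz_2+q_0=\ii q_0\psi_2+\mathcal O(\psi_2^2)\), we get \(\frac12\Phi_{31}''(\rz_2)=\frac{6\ii}{q_0}w_2+\mathcal O(\psi_2^2+\delta)=-6\psi_2+\mathcal O(t^{-2/3})\) and \(\frac16\Phi_{31}'''(\rz_2)=\frac{2\ii}{q_0}+\mathcal O(t^{-1/3})\). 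Using \(\re(u^2)=\frac12\tau^2\) and \(u^3=\ii\tau^3\), this yields
\[
\re\bigl[\Phi_{31}(\xi,\rz_2+u)-\Phi_{31}(\xi,\rz_2)\bigr]\le-3\psi_2\tau^2+Ct^{-2/3}\tau^2-\frac{2}{q_0}\tau^3+Ct^{-1/3}\tau^3+C\tau^4 .
\]

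\textbf{Step 3: absorbing the errors.} This is the delicate point. The good quadratic term \(-3\psi_2\tau^2\) is only of size \(t^{-1/3}\tau^2\), so it cannot absorb the \(\mathcal O(t^{-2/3}\tau^2)\) error when \(\psi_2\) is very small, for instance \(\delta=0\). So I will not rely on it. Instead I will bound the quadratic-coefficient error more sharply as \(\mathcal O(\psi_2^2+\delta)=\mathcal O(\psi_2^2)\), using \(\delta\asymp\psi_2^2\). That error is then dominated by \(3\psi_2\tau^2\) once \(\psi_2\le\psi_*\), which holds for \(t\ge t_0\).

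The cubic term \(-\frac{2}{q_0}\tau^3\) dominates \(Ct^{-1/3}\tau^3+C\tau^4\) once \(t\ge t_0\) and \(\tau\le\varepsilon\), with \(\varepsilon\) small depending only on \(q_0\). Combining this with \(\re\Phi_{31}(\xi,\rz_2)\le0\) gives
\[
\re\Phi_{31}(\xi,\rz_2+\tau\e^{\ii\pi/6})\le-\frac{1}{q_0}\tau^3<0,\qquad 0<\tau\le\varepsilon,
\]
uniformly on \(\cP_+\).

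I expect this uniform bookkeeping of the coefficient errors, which must be expressed in terms of \(\psi_2\) rather than \(t\), to be the main obstacle. If the bookkeeping becomes awkward, the fallback is to compute \(\Phi_{31}''(\rz_2)\) exactly from the closed form of \(\Phi_{31}\) and the explicit \(\rz_2\) in \eqref{E:z2z3}. The same computation with \(u=\tau\e^{5\ii\pi/6}\) (where \(\re u^2>0\) but \(\re(\ii u^3)<0\) still) and the conjugate analysis at \(\rz_3\) give the other directions.
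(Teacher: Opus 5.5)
Your argument is correct. It rests on the same local picture as the paper: the cubic normal form $\Phi_{31}\approx\frac{2\ii}{q_0}w^3+4\ii(\xi+q_0)w$ at $-q_0$, the parametrization $\rz_2=-q_0\e^{-\ii\theta}$ ($\theta=\psi_2$ in your notation), and the relation $\xi+q_0=\frac32q_0\theta^2+\mathcal O(\theta^4)$ coming from the stationarity equation.

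The difference is where you expand and how the errors are organized. The paper expands about the fixed point $-q_0$ and inserts $z+q_0=\ii q_0\theta+\tau\e^{\ii\pi/6}+\mathcal O(\theta^2)$ directly. This gives
\[
\re\Phi_{31}\bigl(\xi,\rz_2+\tau\e^{\ii\pi/6}\bigr)=-\frac1{q_0}(\tau+2q_0\theta)\bigl(2\tau^2-q_0\tau\theta+2q_0^2\theta^2\bigr)+\mathcal O\bigl((\tau+\theta)^4\bigr),
\]
a cubic form in $(\tau,\theta)$ bounded above by $-c_0(\tau+\theta)^3$. Because $\tau$ and $\theta$ enter homogeneously, a single remainder of order $(\tau+\theta)^4$ is absorbed at once. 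The difficulty you isolate in Step 3 (an $\mathcal O(t^{-2/3}\tau^2)$ error is not dominated by $\theta\tau^2$ when $\theta\ll t^{-1/3}$) therefore never appears in the paper's route.

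You instead expand about the moving saddle $\rz_2$. Stationarity removes the linear term, and the constant term is the exact critical value $-4q_0^2\sin^3\theta/\cos\theta$ from the circle formula, which is manifestly nonpositive. The price is that the error in $\frac12\Phi_{31}''(\rz_2)$ must be measured as $\mathcal O(\theta^2)$, using $\xi+q_0\asymp\theta^2$, rather than as $\mathcal O(t^{-2/3})$. You do this correctly.

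The two computations agree term by term. Expanding the paper's product gives $-\frac{2}{q_0}\tau^3-3\theta\tau^2-4q_0^2\theta^3$, which are exactly your cubic term, your quadratic term, and your critical value.
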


\begin{proof}
Since \(|\rz_2|=q_0\) and \(\rz_2 \to -q_0\) as $t \to \infty$, we may write
$\rz_2=-q_0\e^{-\ii\theta}$, $\theta>0$,
where \(\theta\to0^+\) as \(t\to+\infty\). The stationary point \(\rz_2\) satisfies
\[
z^4-\xi z^3-\xi q_0^2z+q_0^4=0.
\]
Substituting \(\rz_2=-q_0\e^{-\ii\theta}\) into this equation gives
\be \label{E:jdbsofxi}
\xi
=
q_0\left(-2\cos\theta+\frac{1}{\cos\theta}\right)
=
-q_0+\frac32 q_0\theta^2+\mathcal O(\theta^4),
\qquad \theta\to0^+.
\ee
We now consider the ray
$
z=\rz_2+\tau\e^{\ii\pi/6}$, $ \tau\geq0
$.
We use the  identity
\[
\Phi_{31}(\xi,z)
=
-\ii\frac{z-q_0}{z^2}(z+q_0)^3
+
2\ii(\xi+q_0)\left(z-\frac{q_0^2}{z}\right).
\]
Let $
u=z+q_0$ and $v=\xi+q_0$.
Near \(z=-q_0\), the above identity implies
\be \label{E:mkycyy-1}
\Phi_{31}(\xi,z)
=
\frac{2\ii}{q_0}u^3
+
4\ii v u
+
\mathcal O(u^4+v u^2).
\ee
By~\eqref{E:jdbsofxi}, along the ray under consideration we have
\[
u
=
q_0(1-\e^{-\ii\theta})+\tau\e^{\ii\pi/6}
=
\ii q_0\theta+\tau\e^{\ii\pi/6}+ 
\mathcal O(\theta^2), \qquad
v
=
\frac32 q_0\theta^2+\mathcal O(\theta^4).
\]
Set
$
u_1=\ii q_0\theta+\tau\e^{\ii\pi/6}
$.
Since \(u=u_1+\mathcal O(\theta^2)\) and \(v=\mathcal O(\theta^2)\), it follows that
\be \label{E:mkycyy-2}
v u=\frac32 q_0\theta^2u_1+O((\tau+\theta)^4),
\qquad
u^3=u_1^3+\mathcal O((\tau+\theta)^4),
\qquad
v u^2=\mathcal O((\tau+\theta)^4).
\ee
Therefore, by~\eqref{E:mkycyy-1} and~\eqref{E:mkycyy-2}, we have
\[
\Phi_{31}\bigl(\xi,\rz_2+\tau\e^{\ii\pi/6}\bigr)
=
\frac{2\ii}{q_0}u_1^3
+
6\ii q_0\theta^2u_1
+
\mathcal O((\tau+\theta)^4).
\]
Writing
$
u_1
=
\frac{\sqrt3}{2}\tau
+
\ii\left(q_0\theta+\frac{\tau}{2}\right)
$,
a direct computation gives
\[
\re \Phi_{31}\bigl(\xi,\rz_2+\tau\e^{\ii\pi/6}\bigr)
=
-\frac1{q_0}
\left(\tau+2q_0\theta\right)
\left(2\tau^2-q_0\tau\theta+2q_0^2\theta^2\right)
+
\mathcal O((\tau+\theta)^4).
\]
Since \(q_0>0\) is fixed, there exists a constant \(c_0>0\) such that, for all sufficiently small \(\tau,\theta\geq0\),
\[
\frac1{q_0}
\left(\tau+2q_0\theta\right)
\left(2\tau^2-q_0\tau\theta+2q_0^2\theta^2\right)
\geq
c_0(\tau+\theta)^3.
\]
Consequently,
\[
\re \Phi_{31}\bigl(\xi,\rz_2+\tau\e^{\ii\pi/6}\bigr)
\leq
-c_0(\tau+\theta)^3+\mathcal O((\tau+\theta)^4).
\]
Choosing \(\varepsilon>0\) and \(\theta_0>0\) sufficiently small, we obtain
\[
\re \Phi_{31}\bigl(\xi,\rz_2+\tau\e^{\ii\pi/6}\bigr)<0,
\qquad
0<\tau\leq\varepsilon,\quad 0<\theta<\theta_0.
\]
Finally, since \(\theta\to0^+\) as \(t\to+\infty\) uniformly for \((x,t)\in\mathcal P_+\), there exists \(t_0>0\) such that \(0<\theta<\theta_0\) for all \(t\geq t_0\) and \((x,t)\in\mathcal P_+\). This proves the lemma.
\end{proof}

\subsection{The first transformation}
We first introduce a diagonal conjugation which converts the jump on \(\R_+\) into a factorization compatible with the signature table in Fig.~\ref{fig:signature table-1}.
Define
 \be \label{E:fiesttrsanformation}
\M^{(1)}(x,t,z)=\mathbf{\Delta}_{\infty}^{-1}  \M(x,t,z) \mathbf{\Delta}(z), \qquad z\in \C,
\ee
where
\be
\mathbf{\Delta}(z)=\bpm
\delta_1(z) & & \\
 &\frac{1}{\delta_1(z) \delta_1(\hat{z})} & \\
 & & \delta_1(\hat{z})
\epm,
\ee
and
\begin{align}
\delta_1(z)&=\mathrm{exp}\bigg \{\frac{-1}{2 \pi \ii} \int_{\R_+} \frac{\log \left(1-\frac{1}{\gamma(s)}|r_1(s)|^2-|r_2(s)|^2\right)}{s-z} \mathrm{d}s \bigg\} \label{E:th1},\\
\mathbf{\Delta}_{\infty}&=\diag \left( 1, 1/\delta_1(0) , \delta_1(0)   \right). \label{E:th1infty}
\end{align}
Here and below, the logarithm is taken on the principal branch. Under Assumptions~\ref{As:gesa}--\ref{As:1}, the integrand in~\eqref{E:th1} is locally integrable on \(\R_+\). In particular,
$$
 \delta_1(0)=  \mathrm{exp}\bigg \{-\frac{1}{2 \pi \ii} \int_{\R_+} \frac{\log \left(1-\frac{1}{\gamma(s)}|r_1(s)|^2-|r_2(s)|^2\right) }{s} \mathrm{d}s \bigg\}.
$$

\begin{remark}
Although $\mathbf \Delta(z)$ may be singular near \( q_0\), the singularity
are compensated by the prescribed growth behavior of \(\M(x,t,z)\) at the
branch point. This issue was discussed in detail in the asymptotic analysis of the soliton region; see~\cite[Lemma~3.4]{GZW2025}. This is also the reason why the factor
\(\mathbf \Delta(z)\) is introduced before the gauge transformation in
Lemma~\ref{Th:rel}.  Indeed, after this transformation, the new unknown
\(\N(x,t,z)\) may no longer satisfy the corresponding growth conditions at the branch point.
\end{remark}

We now present the jump matrix of $\M^{(1)}$ on the real axis. By a direct calculation, we have
\be \label{E:jumpM1ex}
\begin{aligned}
\bV^{(1)}=
\begin{cases}
\begin{pmatrix}
1  & \scriptstyle  -\frac{1}{\gamma} r_1^*  \frac{1}{\delta_1^2(z) \delta_1(\hat{z})}  \e^{\theta_{12}} &  \scriptstyle -r_2^*  \frac{\delta_1(z)}{\delta_1(\hat{z})}  \e^{\theta_{13}}\\
 0 & 1 & \scriptstyle  -r_3  \delta_1^2(\hat{z}) \delta_1(z)\e^{\theta_{23}}\\
 0 & 0 &1
\end{pmatrix}
 \begin{pmatrix}
 1 & 0 & 0\\
  \scriptstyle  r_1\delta_1^2(z) \delta_1(\hat{z})  \e^{\theta_{21}}&1  &0 \\
  \scriptstyle  r_2   \frac {\delta_1(\hat{z})}{\delta_1(z)}\e^{\theta_{31}} &  \scriptstyle -\frac{1}{\gamma}r_3^*  \frac{1}{\delta_1^2(\hat{z}) \delta_1(z) }\e^{\theta_{32}} &1
\end{pmatrix}, &z \in \R_-,\\
\begin{pmatrix}
 1 & 0 &0 \\
 -\tilde{r}_1 \e^{\theta_{21}} &1  &0 \\
 \tilde{r}_2 \e^{\theta_{31}} & \frac{1}{\gamma}\tilde{r}_3^* \e^{\theta_{32}} &1
\end{pmatrix}
\begin{pmatrix}
1  &\frac{1}{\gamma}\tilde{r}_1^* \e^{\theta_{12}} & -\tilde{r}_2^* \e^{\theta_{13}}\\
0  & 1 & \tilde{r}_3 \e^{\theta_{23}}\\
  0&0  &1
\end{pmatrix}, & z \in \R_+,
\end{cases}
\end{aligned}
\ee
where
\begin{align*}
&\tilde{r}_1(z)=\rho^2(z)\rho(\hat{z})\hat{r}_1(z), \qquad
\tilde{r}_2(z)=\frac{\rho(z)}{\rho(\hat{z})}\hat{r}_2(z), \qquad
\tilde{r}_3(z)=\rho(z)\rho^2(\hat{z})\hat{r}_3(z),\\
&\hat{r}_1(z)=\frac{b_{21}(z)}{b_{11}(z)}, \qquad   \hat{r}_2(z)=\frac{a_{31}(z)}{a_{33}(z)}, \qquad
\hat{r}_3(z)=\frac{b_{23}(z)}{b_{33}(z)},\\
&\rho(z)=\prod_{j=1}^{N_1}\frac{z-\zeta_j^*}{z-\zeta_j} \prod_{j=1}^{N_2}\frac{z-z_j^*}{z-z_j} \mathrm{exp}\bigg\{\frac{1}{2\pi \ii}  \int_{\R_-} \frac{\log \left(1-\frac{1}{\gamma(s)}|r_1(s)|^2-|r_2(s)|^2\right) }{\zeta-z}\mathrm{d}\zeta \bigg\}.
\end{align*}
The derivation is the same as in~\cite[Section 3.1]{GZW2025}, except that the Blaschke product in \(\rho\) now contains both types of discrete spectrum. This modification does not change the algebraic factorization.

It follows that \(\M^{(1)}\) satisfies the following RH problem.
\begin{RHP}\label{RHP:M1}

Find a $3 \times 3$ matrix-valued function $\M^{(1)}(x,t,z)$ with the following properties:
\bi
\item $\M^{(1)}(x,t,\cdot) : \mathbb{C}\setminus  (  \R \cup  \mathcal{Z})  \to \mathbb{C}^{3 \times 3}$ is analytic. $\M^{(1)}(x,t,z)$ satisfies the jump condition:
\be \label{E:JumpM1}
\M^{(1)}_+(x,t,z)=\M^{(1)}_-(x,t,z) \bV^{(1)}(x,t,z), \quad z \in \R \setminus \{0 \}.
\ee
\item  $\M^{(1)}(x,t,z)$ admits the asymptotic behavior:
\be \label{E:M1astj}
 \M^{(1)}(x,t,z)=\I+\mathcal{O}(\frac{1}{z}), \quad z \to \infty; \quad \M^{(1)}(x,t,z) =\frac{1}{z} \sig_1+ \mathcal{O}(1), \quad z \to 0.
\ee
\item  $\M^{(1)}(x,t,z)$ satisfies the growth conditions near the branch points $\pm q_0$:
\be \label{E:M1gcc}
\begin{cases}
\M_1^{(1)}=\mathcal{O}(z +q_0), & z \in \C_+ \to -q_0,\\
\M_3^{(1)}=\mathcal{O}(z + q_0), & z \in \C_- \to - q_0,
\end{cases} \quad
\begin{cases}
\M_3^{(1)}=\mathcal{O}(z -q_0), & z \in \C_+ \to  q_0,\\
\M_1^{(1)}=\mathcal{O}(z - q_0), & z \in \C_- \to  q_0.
\end{cases}
\ee
\item $\M^{(1)}(x,t,z)$ satisfies the symmetries
\begin{equation}\label{E:M1RHP11}
\begin{aligned}
\M^{(1)}(x,t,z)&=\M^{(1)}(x,t,\hat{z}) \bPi(z), \\
[(\M^{(1)}(x,t,z))^{-1}]^{\top}&=-\frac{1}{\gamma(z)}\bJ \left(\M^{(1)}(x,t,z^*)\right)^{*} \mathbf{\Gamma}(z).
\end{aligned}
\end{equation}
\item
The residue conditions satisfied by $\M^{(1)}$ take the same form as those for $\M$, but with the residue constant $\tau_j$ related to $\zeta_j$ replaced by $\tilde{\tau}_j:=\tau_j |\delta_1(\zeta_j)|^2$, and the residue constant $\kappa_j$ related to $z_j$ replaced by $\tilde{\kappa}_j:=\kappa_j \delta_1^2(z_j) \delta_1(\hat{z}_j)$.
\ei
\end{RHP}
\begin{remark} \label{R:syV1}
Note that $\mathbf{\Delta}(z)$ satisfies the symmetries $\mathbf{\Delta}(z)=\bPi^{-1}(z) \mathbf{\Delta}(\hat{z}) \bPi(z)$ and  $\mathbf{\Delta}^{-1}(z)=\mathbf{\Delta}^*(z^*)$. Then, through a direct calculation, one can verify that the jump matrix $\bV^{(1)}$ satisfies the following symmetries for $z \in \R$:
\begin{align*}
\bV^{(1)}(x,t,z)=\bPi^{-1}(z) \left( \bV^{(1)}(x,t,\hat{z}) \right)^{-1}\bPi(z), \quad
\mathbf{D}(z) \bV^{(1)}(x,t,z)\mathbf{D}^{-1}(z)=\left( \bV^{(1)}(x,t,z) \right)^{\dagger},
\end{align*}
where $\mathbf{D}(z)=-\gamma(z)\mathbf{\Gamma}^{-1}(z)$.
\end{remark}

The RH problem~\ref{RHP:M1} is singular at the origin because of the normalization
\(\M^{(1)}(z)=z^{-1}\sigma_1+\mathcal{O}(1)\). We remove this singularity by introducing the following auxiliary RH problem.
\begin{RHP}\label{RHP:re}
Find a $3 \times 3$ matrix-valued function $\N(x,t,z)$ with the following properties:
\bi
\item $\N(x,t,\cdot) : \mathbb{C}\setminus (\R \cup \mathcal{Z} )  \to \mathbb{C}^{3 \times 3}$ is analytic. Across $\R$, $\N(x,t,z)$ satisfies the jump condition:
\be \label{E:JumpN}
\N_+(x,t,z)=\N_-(x,t,z) \bV^{(1)}(x,t,z), \quad z \in \R.
\ee
\item  $\N(x,t,z)$ admits the asymptotic behavior:
$$  \N(x,t,z)=\I+\mathcal{O}(\frac{1}{z}), \quad z \to \infty; \quad \N(x,t,z) = \mathcal{O}(1), \quad z \to 0. $$
 \item $\N(x,t,z)$ satisfies the same residue conditions as $\M^{(1)}(x,t,z)$ in the RH problem~\ref{RHP:M1}.
\ei
\end{RHP}

The two RH problems are related by a gauge transformation.

\begin{lemma}[Gauge Transformation]~\label{Th:rel}
Suppose that $\M^{(1)}(x,t,z)$ is the unique solution to RH problem~\ref{RHP:M1} and  $\N(x,t,z)$ is  the unique solution to RH problem~\ref{RHP:re}. Then there exists a function $\A(x,t)$ such that
\be \label{E:gfbh}
\M^{(1)}(x,t,z)=\left( \I + \frac{\A(x,t)}{z}\right) \N(x,t,z),
\ee
where $\A(x,t)=\sig_1 \N_+^{-1}(x,t,0)= \sig_1 \N_-^{-1}(x,t,0)$ with
$$
\N_{\pm}(x,t,0):=\lim \limits_{\epsilon \to 0^+} \N(x,t,0 \pm \ii \epsilon).
$$
\end{lemma}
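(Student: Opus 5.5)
The plan is to compare $\M^{(1)}$ with $\N$ directly. Define
\[
\mathbf{H}(x,t,z):=\M^{(1)}(x,t,z)\,\N^{-1}(x,t,z),
\]
and show that $\mathbf{H}$ is a rational function of $z$ with at most a simple pole at $z=0$. Once this is established, the normalization of both functions at infinity forces $\mathbf{H}=\I+\A/z$.

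\textbf{Analyticity of $\mathbf{H}$ away from $0$.}
\begin{itemize}
\item \emph{Invertibility of $\N$.} Since $\det \bV^{(1)}=1$, the scalar $\det\N$ has no jump across $\R$ and tends to $1$ at infinity. The residue conditions are nilpotent column operations, so they preserve the determinant. Together with the bound $\N=\mathcal O(1)$ near $0$, Liouville's theorem then gives $\det\N\equiv1$. Hence $\N^{-1}$ exists and is analytic wherever $\N$ is.
\item \emph{No jump on $\R\setminus\{0\}$.} There $\mathbf{H}_+=\M^{(1)}_-\bV^{(1)}(\bV^{(1)})^{-1}\N_-^{-1}=\mathbf{H}_-$, so $\mathbf{H}$ is continuous across the real axis.
\item \emph{No poles at $\mathcal Z$.} Both functions satisfy the same residue conditions, $\operatorname{Res}F=\lim F\cdot\mathbf{N}_j$ with $\mathbf{N}_j$ nilpotent. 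Locally this means $F(z)(\I-\mathbf{N}_j/(z-z_j))$ is holomorphic, so in the product $\M^{(1)}\N^{-1}$ the singular factors cancel.
\item \emph{Branch points $\pm q_0$.} The growth conditions~\eqref{E:M1gcc} make $\M^{(1)}$ bounded near $\pm q_0$, and $\N$ is bounded there as well. An isolated singularity at $\pm q_0$ with bounded values is removable.
\end{itemize}
So $\mathbf{H}$ is analytic in $\C\setminus\{0\}$.

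\textbf{Behaviour at $0$ and at infinity.} Near the origin, \eqref{E:M1astj} gives $\M^{(1)}=z^{-1}\sig_1+\mathcal O(1)$, and $\N^{-1}=\mathcal O(1)$. Therefore $\mathbf{H}=z^{-1}\sig_1\N^{-1}+\mathcal O(1)$, which is at most a simple pole. At infinity, $\mathbf{H}\to\I$. Liouville's theorem then yields $\mathbf{H}=\I+\A/z$ with
\[
\A=\lim_{z\to0}z\,\mathbf{H}(z).
\]

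\textbf{Identification of $\A$.} Approaching $0$ from $\C_\pm$ gives $\A=\sig_1\N_\pm^{-1}(0)$. For this step we need the boundary values $\N_\pm(0)$ to exist. We also need their equality $\sig_1\N_+^{-1}(0)=\sig_1\N_-^{-1}(0)$. This follows because $z\mathbf{H}$ is analytic at $0$, so both one-sided limits coincide.

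\textbf{Main obstacle: the origin.} The delicate point is justifying local boundedness and continuity of $\N$ and $\M^{(1)}$ up to the real axis at $0$, given that the jump $\bV^{(1)}$ involves $\theta_{32}\sim\ii q_0^4t/z^2$, which oscillates wildly near the origin.
\begin{itemize}
\item I would first use \eqref{E:r123z0}, namely $r_2,r_3=\mathcal O(z^2)$. Combined with $1/\gamma=\mathcal O(z^2)$, this shows that the jump entries tend to those of the identity (or of a constant matrix) as $z\to0$.
\item Second, I would use the symmetry $z\mapsto\hat z$ from~\eqref{E:M1RHP11} to transport the behaviour at $0$ to the behaviour at $\infty$, where the $\mathcal O(1/z)$ decay of the reflection coefficients gives continuity.
\item Should continuity fail, I would fall back on the fact that $z\mathbf{H}$ is analytic in a punctured disc and bounded, so the singularity is removable and $\A$ is well defined regardless. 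The formula with $\N_\pm(0)$ then holds as a limit along $0\pm\ii\epsilon$.
\end{itemize}
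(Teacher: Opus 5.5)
Your argument is correct, but it runs in the opposite direction from the paper's. The paper starts from $\N$ and \emph{defines} $\widetilde{\M}=(\I+\A/z)\N$ with $\A=\sig_1\N_+^{-1}(0)$. It first reads off $\sig_1\N_+^{-1}(0)=\sig_1\N_-^{-1}(0)$ from the lower-triangular form of the jump at $z=0$. It then verifies that $\widetilde{\M}$ satisfies every condition of RH problem~\ref{RHP:M1}:
\begin{itemize}
\item the $z\mapsto\hat z$ symmetry, via a Liouville argument for $\N(\hat z)\bPi(z)\N^{-1}(z)$, the algebraic identity $A\sig_2+q_0^{-2}\sig_1B\sig_1B=\I$, and a sign determination $A_{22}=1$ that borrows the later small-norm asymptotics \eqref{E:lgds11}, \eqref{E:cE2estin1} together with a continuity argument;
\item the vanishing of the appropriate columns at $\pm q_0$;
\item the $z\mapsto z^*$ symmetry.
\end{itemize}
Uniqueness for RH problem~\ref{RHP:M1} then gives $\widetilde{\M}=\M^{(1)}$.

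You instead take the quotient $\mathbf{H}=\M^{(1)}\N^{-1}$ of two solutions that the lemma assumes exist, and one Liouville argument does everything at once. The equality of the two expressions for $\A$ also comes for free, since $\sig_1\N^{-1}(z)=z\mathbf{H}(z)+\mathcal O(z)\to\A$ from either half-plane.

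Your route is much shorter. It needs neither the symmetries nor uniqueness for RH problem~\ref{RHP:M1}, and it avoids the forward dependence on the later asymptotic analysis. The paper's route needs only existence of $\N$ plus uniqueness for $\M^{(1)}$. As a by-product it shows that $(\I+\A/z)\N$ inherits the symmetries and branch-point behaviour, which would matter if $\M^{(1)}$ were not already known to exist. Here it is known, as $\mathbf{\Delta}_\infty^{-1}\M\mathbf{\Delta}$ with $\M$ from Theorem~\ref{Th:RHch}.

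Two small corrections.
\begin{itemize}
\item \eqref{E:M1gcc} does not make $\M^{(1)}$ bounded near $\pm q_0$; it only forces one column to vanish in each half-plane. Boundedness of all of $\M^{(1)}$ and of $\N$ near $\pm q_0$ is an implicit regularity assumption on the solution class. The paper uses the same assumption in Step~2 (``admit continuous extensions to the branch point''). You should state it as such, since removability there needs at least $\mathbf{H}=o(|z\mp q_0|^{-1})$, and your proof that $\det\N\equiv1$ uses it too.
\item $\N$ does not satisfy the $z\mapsto\hat z$ symmetry. The paper shows $\N(\hat z)\bPi(z)\N^{-1}(z)=z^{-1}\sig_1\N_+^{-1}(0)+\N_+(0)\,\mathrm{diag}(0,1,0)$, so that idea in your last paragraph cannot be transferred to $\N$. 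Your fallback is the right way to handle the origin. In fact the worry is moot, because $\N=\mathcal O(1)$ and $\M^{(1)}=z^{-1}\sig_1+\mathcal O(1)$ near $0$ are prescribed by the RH problems themselves.
\end{itemize}
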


\begin{proof}
See Appendix~\ref{App:AAAA1}.
\end{proof}

\subsection{The second transformation}
In this subsection, we need to introduce a transformation such that the transformed jump matrix, except near the point $-q_0$, decays uniformly as $t \to \infty$. 

We first introduce the  functions
\begin{align*}
\mathbf{\widetilde{\Delta} }(z)=\bpm
\delta(\hat{z})& & \\
 & \frac{1}{ \delta(z) \delta(\hat{z})}& \\
 & & \delta(z)
\epm,\qquad
\mathbf{P}(z)= \begin{pmatrix}
  \rP_1(z)&  & \\
  &  \frac{1}{\rP_1(z)\rP_1(\hat{z}) }& \\
  &  &\rP_1(\hat{z})
\end{pmatrix},
\end{align*}
where
\begin{align}
\delta(z)&=\mathrm{exp} \bigg\{ \frac{1}{2 \pi \ii}  \int_{-\infty}^{-q_0} \frac{\log (1+\frac{1}{\gamma(s)}|r_3(s)|^2)}{s-z} \mathrm{d}s\bigg\}, \label{E:th2}\\
\rP_1(z)&=\prod_{j=1}^{N_1}\frac{z-\zeta_j}{z-\zeta_j^*} \prod_{j=1}^{N_2}\frac{z-z_j}{z-z_j^*}. \label{E:th3}
\end{align}

\begin{lemma}\label{L:dxz}
The function \(\delta(z)\) is well defined for
\(z\in\C\setminus(-\infty,-q_0]\). Moreover, as \(z\to -q_0\) nontangentially
from \(\C\setminus(-\infty,-q_0]\), we have
\be \label{E:ddxw}
\left|\delta(z)-\delta(-q_0)\right|
\le C |z+q_0|\,|\log|z+q_0|| .
\ee
\end{lemma}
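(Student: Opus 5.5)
We write $\nu(s)=\log\bigl(1+\gamma(s)^{-1}|r_3(s)|^2\bigr)$ for $s\in(-\infty,-q_0]$, so that $\delta=\exp(\mathcal C\nu)$ with $\mathcal C\nu(z)=\frac{1}{2\pi\ii}\int_{-\infty}^{-q_0}\frac{\nu(s)}{s-z}\,\mathrm ds$. The plan is: first control $\nu$ on the cut, then establish well-definedness, and finally carry out a local Cauchy-integral estimate at the endpoint $-q_0$, where $\nu$ is Lipschitz.

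\textbf{Properties of $\nu$.} For $s<-q_0$ we have $\gamma(s)=1-q_0^2/s^2\in(0,1)$, so the argument of the logarithm is at least $1$. Hence $\nu\ge0$ is real and finite on $(-\infty,-q_0)$. As $s\to-\infty$, $\gamma\to1$ and $r_3=\mathcal O(1/s)$ by Lemma~\ref{L:fsxsxz}, so $\nu(s)=\mathcal O(s^{-2})$. Near $s=-q_0$, $\gamma(s)$ vanishes to first order, $\gamma(s)\sim -2(s+q_0)/q_0$. By Lemma~\ref{L:fsxsxz}, $r_3(-q_0)=0$, and $r_3$ is analytic in $S_\varepsilon$ away from $\{0,\pm q_0\}$ with a limit at $-q_0$. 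I would use the representation $r_3=a_{23}/a_{33}$ together with Assumption~\ref{As:gesa}(b), which gives the simple-pole behaviour of the scattering coefficients at the branch point, to conclude that $r_3(s)=\mathcal O(s+q_0)$. Then $|r_3|^2/\gamma=\mathcal O(|s+q_0|)$. This gives $\nu(-q_0)=0$ and $|\nu(s)|\le C|s+q_0|$ near $-q_0$; more precisely, $\nu$ is Lipschitz on $[-q_0-1,-q_0]$.

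\emph{This is the delicate point.} It requires that the product $\gamma\cdot(\text{singular part})$ be smooth, i.e.\ a genuine $\mathcal O(s+q_0)$ vanishing of $r_3$ rather than merely $o(1)$. I would obtain it from the Laurent structure of $a_{23},a_{33}$ at $\pm q_0$ (as in \cite[Appendix A.1]{GZW2025}), using the exponential decay in Assumption~\ref{As:1} to get a differentiable extension of $(z+q_0)a_{ij}(z)$ up to the branch point.

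\textbf{Well-definedness.} With the above bounds, $\nu\in L^1\cap L^\infty(-\infty,-q_0)$. Therefore $\mathcal C\nu$ is analytic in $\C\setminus(-\infty,-q_0]$, and $\delta=\exp(\mathcal C\nu)$ is well defined, analytic and nonvanishing there.

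\textbf{Estimate at the endpoint.} Write $\nu(s)=\nu(-q_0)+(\nu(s)-\nu(-q_0))=\nu(s)$, using $\nu(-q_0)=0$. Split the integral into $\int_{-q_0-1}^{-q_0}$ and $\int_{-\infty}^{-q_0-1}$. The far part is analytic near $-q_0$, so it differs from its value at $-q_0$ by $\mathcal O(|z+q_0|)$. For the near part, set $w=z+q_0$ and $\sigma=s+q_0\in[-1,0]$, and consider
\[
\int_{-1}^0\nu(\sigma-q_0)\Bigl(\frac{1}{\sigma-w}-\frac1\sigma\Bigr)\mathrm d\sigma
=w\int_{-1}^0\frac{\nu(\sigma-q_0)}{\sigma(\sigma-w)}\,\mathrm d\sigma .
\]
Since $|\nu(\sigma-q_0)/\sigma|\le C$, and the nontangential approach gives $|\sigma-w|\ge c(|\sigma|+|w|)$, this is bounded by
\[
C|w|\int_{-1}^0\frac{\mathrm d\sigma}{|\sigma|+|w|}\le C|w|\,\bigl|\log|w|\bigr| .
\]
The same bound shows that $\mathcal C\nu(-q_0)$ exists, since $\nu(\sigma-q_0)/\sigma$ is integrable. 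Thus $|\mathcal C\nu(z)-\mathcal C\nu(-q_0)|\le C|z+q_0|\,\bigl|\log|z+q_0|\bigr|$.

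Finally, $\mathcal C\nu$ is bounded near $-q_0$ and $\exp$ is Lipschitz on bounded sets, so this bound transfers to $\delta$ and yields \eqref{E:ddxw}.
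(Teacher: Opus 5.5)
Your argument is correct and follows the same route as the paper. The paper simply invokes the standard endpoint behaviour of Cauchy transforms whose density vanishes at $-q_0$; you carry that estimate out explicitly, and in doing so you make visible that one needs $|\nu(s)|\le C|s+q_0|$, not merely $\nu(-q_0)=0$.

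The step you flag as delicate is in fact immediate. By Lemma~\ref{L:fsxsxz}, $r_3$ is analytic in a punctured disk about $-q_0$ (contained in $S_\varepsilon$) and has limit $0$ there. Riemann's removable singularity theorem then gives $r_3(z)=\mathcal O(z+q_0)$, and hence $0\le\nu(s)\le |r_3(s)|^2/\gamma(s)\le C|s+q_0|$. No analysis of the Laurent structure of $a_{23}$ and $a_{33}$ is needed.
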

\begin{proof}
 The function $\log (1+\frac{1}{\gamma(z)}|r_3(z)|^2)$ is smooth on $(-\infty,-q_0)$ and vanishes at $z=-q_0$. Moreover, as $z \to -\infty$,  $\log (1+\frac{1}{\gamma(z)}|r_3(z)|^2)= \mathcal{O}(1/z^2)$. 
Hence the Cauchy integral in \eqref{E:th2} is well defined. The estimate follows
from the standard endpoint behavior of Cauchy transforms; see, for example,
\cite[Lemma~4.7]{CL2024main} or \cite[Lemma~2.11]{TS2016}.
\end{proof}

Let the contour \(\Sigma^{(1)}\) and the regions \(D_j\), \(1\le j\le 9\), be
as shown in Fig.~\ref{Sigma1Fig}. We set
\begin{align*}
&\T(z)= \mathbf{\widetilde \Delta}(z)\mathbf{P}(z)=: \diag \left(T_1(z), T_2(z),T_3(z)  \right),\\
&\widetilde{\T}(z)=\mathbf{\Delta}(z) \mathbf{\widetilde{\Delta} }(z)\mathbf{P}(z)=: \diag \left(\widetilde T_1(z), \widetilde  T_2(z),\widetilde  T_3(z)  \right).
\end{align*}
For convenience, we also introduce
\[
\begin{aligned}
R_1(z)&=
\frac{r_1(z)}
{1-\frac{1}{\gamma(z)}r_1(z)r_1^*(z^*)},
\qquad
R_3(z)=
\frac{r_3(z)}
{1+\frac{1}{\gamma(z)}r_3(z)r_3^*(z^*)},\\
R_2(z)&=r_2(z)+\frac{1}{\gamma(z)}r_1(z)r_3^*(z^*).
\end{aligned}
\]
We now define the triangular factors \(\G_j(x,t,z)\) for \(z\in D_j\),
\(j=1,\ldots,9\), as follows:
\begin{align}
\G_1(z)
&=
\begin{pmatrix}
1 & 0 & 0 \\
-r_1(z)\delta_1^2(z)\delta_1(\hat z)\e^{\theta_{21}}
& 1
& R_3(z)\delta_1(z)\delta_1^2(\hat z) \e^{\theta_{23}} \\
-R_2(z)\frac{\delta_1(z)}{\delta_1(\hat z)}\e^{\theta_{31}}
& 0 & 1
\end{pmatrix},
\notag\\[2mm]
\G_5(z)
&=
\begin{pmatrix}
1
&
-\frac{1}{\gamma(z)}
\bigl[
r_1^*(z^*)-R_2^*(z^*)R_3^*(z^*)
\bigr]
\frac{\e^{\theta_{12}}}{\delta_1^2(z)\delta_1(\hat z)}
&
-R_2^*(z^*)
\frac{\delta_1(\hat z)}{\delta_1(z)}
\e^{\theta_{13}}
\\
0 & 1 & 0 \\
0
&
-\frac{1}{\gamma(z)}R_3^*(z^*)
\frac{\e^{\theta_{32}}}{\delta_1(z)\delta_1^2(\hat z)}
& 1
\end{pmatrix},
\notag\\[2mm]
\G_3(z)
&=
\begin{pmatrix}
1
&
\frac{1}{\gamma(z)}R_1^*(z^*)
\frac{\e^{\theta_{12}}}{\delta_1^2(z)\delta_1(\hat z)}
& 0
\\
0 & 1 & 0 \\
-r_2(z)\frac{\delta_1(z)}{\delta_1(\hat z)}\e^{\theta_{31}}
&
\frac{1}{\gamma(z)}
\bigl[
r_3^*(z^*)-R_1^*(z^*)r_2(z)
\bigr]
\frac{\e^{\theta_{32}}}{\delta_1(z)\delta_1^2(\hat z)}
& 1
\end{pmatrix},
\notag\\[2mm]
\G_7(z)
&=
\begin{pmatrix}
1 & 0
& -r_2^*(z^*)\frac{\delta_1(\hat z)}{\delta_1(z)}
\e^{\theta_{13}}
\\
R_1(z)\delta_1^2(z)\delta_1(\hat z)\e^{\theta_{21}}
& 1
& -r_3(z)\delta_1^2(\hat z)\delta_1(z)\e^{\theta_{23}}
\\
0 & 0 & 1
\end{pmatrix},
\notag\\[2mm]
\G_2(z)
&=
\begin{pmatrix}
1 & 0 & 0 \\
0 & 1 & 0 \\
-r_2(z)\frac{\delta_1(z)}{\delta_1(\hat z)}\e^{\theta_{31}}
& 0 & 1
\end{pmatrix},
\quad
\G_6(z)
=
\begin{pmatrix}
1 & 0
& -r_2^*(z^*)\frac{\delta_1(\hat z)}{\delta_1(z)}
\e^{\theta_{13}}
\\
0 & 1 & 0 \\
0 & 0 & 1
\end{pmatrix},
\notag\\[2mm]
\G_8(z)
&=
\begin{pmatrix}
1 & 0 & 0 \\
R_1(z)\delta_1^2(z)\delta_1(\hat z) \e^{\theta_{21}}
& 1 & 0 \\
0 & 0 & 1
\end{pmatrix}, \quad 
\G_9(z)
=
\begin{pmatrix}
1 & 0 & 0 \\
-\tilde r_1(z) \e^{\theta_{21}} & 1 & 0 \\
\tilde r_2(z) \e^{\theta_{31}}
& \frac{1}{\gamma(z)}\tilde r_3^*(z^*) \e^{\theta_{32}}
& 1
\end{pmatrix},
\notag\\[2mm]
\G_4(z)
&=
\begin{pmatrix}
1
& -\frac{1}{\gamma(z)}\tilde r_1^*(z^*)e^{\theta_{12}}
&
\left[
\tilde r_2^*(z^*)
+\frac{1}{\gamma(z)}
\tilde r_1^*(z^*)\tilde r_3(z)
\right]\e^{\theta_{13}}
\\
0 & 1 & -\tilde r_3(z) \e^{\theta_{23}} \\
0 & 0 & 1
\end{pmatrix}.
\end{align}

\begin{figure}
\centering
\begin{tikzpicture}[
    scale=0.75,
    transform shape,
    line cap=round,
    line join=round,
    >=latex,
    contour/.style={line width=1.6pt, draw=black},
    arrowcontour/.style={
        contour,
        postaction={decorate},
        decoration={markings,mark=at position 0.56 with {\arrow{>}}}
    },
    axisarrow/.style={->, >=latex, line width=0.9pt, draw=black},
    every node/.style={font=\small, inner sep=1pt}
]

\def\H{2.60}           
\def\hV{0.82}          
\def\xC{-1.40}         
\def\hInner{1.05}      
\def\sep{1.30}         

\pgfmathsetmacro{\mThirty}{tan(30)}

\pgfmathsetmacro{\dxOuter}{(\H-\hV)/\mThirty}
\pgfmathsetmacro{\xA}{\xC-\dxOuter}
\pgfmathsetmacro{\xT}{\xC+\dxOuter}

\pgfmathsetmacro{\xLeftZero}{\xC+\hV/\mThirty}
\pgfmathsetmacro{\xRightZero}{\xC-\hV/\mThirty}

\pgfmathsetmacro{\xP}{\xLeftZero-2*\sep}
\pgfmathsetmacro{\xQ}{\xRightZero+2*\sep}

\pgfmathsetmacro{\xAone}{\xP-\hInner/\mThirty}
\pgfmathsetmacro{\xRone}{\xQ+\hInner/\mThirty}

\pgfmathsetmacro{\Rright}{\H}
\pgfmathsetmacro{\xZ}{\xT+\Rright}
\pgfmathsetmacro{\xE}{\xZ+\Rright}
\pgfmathsetmacro{\xRray}{\xE+1.45}

\pgfmathsetmacro{\yLabelOuter}{0.57*\H+0.43*\hV}
\pgfmathsetmacro{\yLabelInner}{0.58*\hInner+0.20}

\coordinate (L1) at (-7.20,  \H);
\coordinate (A)  at (\xA,    \H);
\coordinate (A1) at (\xAone, \hInner);

\coordinate (L2) at (-7.20, -\H);
\coordinate (B)  at (\xA,   -\H);
\coordinate (B1) at (\xAone,-\hInner);

\coordinate (P)  at (\xP, 0);
\coordinate (U)  at (\xC, \hV);
\coordinate (M)  at (\xC, 0);
\coordinate (Q)  at (\xQ, 0);
\coordinate (D)  at (\xC,-\hV);

\coordinate (R1) at (\xRone, \hInner);
\coordinate (T)  at (\xT,    \H);

\coordinate (R2) at (\xRone,-\hInner);
\coordinate (S)  at (\xT,   -\H);

\coordinate (Z0) at (\xZ, 0);
\coordinate (E)  at (\xE,  \H);
\coordinate (F)  at (\xE, -\H);
\coordinate (R3) at (\xRray,  \H);
\coordinate (R4) at (\xRray, -\H);

\draw[dashed, line width=1.5pt] (-7.20,0) -- (\xRray,0);

\draw[axisarrow] (-4.90,0) -- (-4.45,0);
\draw[axisarrow] ({\xT+0.28},0) -- ({\xT+0.70},0);
\draw[axisarrow] ({\xRray-0.48},0) -- ({\xRray-0.06},0);

\draw[arrowcontour] (L1) -- (A);

\draw[arrowcontour] (U) -- (A);

\draw[arrowcontour] (P) -- (A1) -- (A);

\draw[arrowcontour] (L2) -- (B);
\draw[arrowcontour] (B) -- (D);
\draw[arrowcontour] (B) -- (B1) -- (P);

\draw[arrowcontour] (U) -- (T);
\draw[arrowcontour] (Q) -- (R1) -- (T);

\draw[arrowcontour] (S) -- (D);

\draw[arrowcontour] (S) -- (R2) -- (Q);

\draw[arrowcontour] (P) -- (Q);

\draw[arrowcontour] (T)  arc[start angle=180,end angle=270,radius=\Rright];
\draw[arrowcontour] (Z0) arc[start angle=270,end angle=360,radius=\Rright];

\draw[arrowcontour] (S)  arc[start angle=180,end angle=90,radius=\Rright];
\draw[arrowcontour] (Z0) arc[start angle=90,end angle=0,radius=\Rright];

\draw[arrowcontour] (S) -- (F);

\draw[arrowcontour] (E) -- (R3);
\draw[arrowcontour] (F) -- (R4);

\fill (P)  circle (2.2pt);
\fill (U)  circle (2.2pt);
\fill (M)  circle (2.2pt);
\fill (Q)  circle (2.2pt);
\fill (D)  circle (2.2pt);
\fill (Z0) circle (2.4pt);

\node at ({\xC+0.47*\dxOuter},  \yLabelOuter+0.2) {$1$};
\node at ({\xC-0.47*\dxOuter},  \yLabelOuter+0.2) {$2$};
\node at ({\xC-0.47*\dxOuter}, -\yLabelOuter-0.2) {$3$};
\node at ({\xC+0.47*\dxOuter}, -\yLabelOuter-0.2) {$4$};

\node at ({0.54*\xQ+0.46*\xRone},  \yLabelInner) {$5$};

\node at ({0.54*\xP+0.46*\xAone},  \yLabelInner) {$6$};

\node at ({0.54*\xP+0.46*\xAone}, -\yLabelInner) {$7$};

\node at ({0.54*\xQ+0.46*\xRone}, -\yLabelInner) {$8$};

\node at (\xC, 0.28) {$9$};

\node at ({\xZ+0.58*\Rright},  1.10) {$10$};
\node at ({\xT+1.02},          1.10) {$11$};
\node at ({\xT+1.02},         -1.10) {$12$};
\node at ({\xZ+0.58*\Rright}, -1.10) {$13$};

\node[font=\large] at (-6.25,  1.25) {$D_1$};
\node[font=\large] at ({0.45*\xA+0.52*\xC},  1.08) {$D_2$};
\node[font=\large] at ({0.78*\xC+0.52*\xT},  1.00) {$D_2$};
\node[font=\large] at ({\xT+0.26},  0.62) {$D_3$};
\node[font=\large] at ({\xRray-0.62},  1.18) {$D_4$};

\node[font=\large] at (-6.25, -1.25) {$D_5$};
\node[font=\large] at ({0.45*\xA+0.52*\xC}, -1.08) {$D_6$};
\node[font=\large] at ({0.78*\xC+0.52*\xT}, -1.00) {$D_6$};
\node[font=\large] at ({\xT+0.26}, -0.62) {$D_7$};
\node[font=\large] at ({\xZ+0.05}, -1.55) {$D_8$};
\node[font=\large] at ({\xRray-0.62}, -1.18) {$D_9$};
\end{tikzpicture}
\caption{{\protect\small
    The  contour $\Sigma^{(1)}$  and the regions $D_j$, $j=1,...9$. }}
	\label{Sigma1Fig}
\end{figure}

The second transformation is defined by
\[
\N^{(1)}(x,t,z)
=
\T_{\infty}^{-1}\N(x,t,z)\G(x,t,z)\T(z),
\]
where
\[
\G(x,t,z)=
\begin{cases}
\G_j(x,t,z), & z\in D_j,\quad j=1,\ldots,9,\\
\I, & \text{elsewhere,}
\end{cases}
\]
and
\be \label{E:Tsharpinf}
\T_{\infty}
:=
\lim_{z\to\infty}\mathbf{\widetilde  \Delta}(z)\mathbf P(z)
=
\diag\left(
 \delta (0),
\frac{1}{\delta(0)\rP_1(0)},
\rP_1(0)
\right).
\ee

The jump contour for \(\N^{(1)}\) is denoted by \(\Sigma^{(1)}\) and is shown in
Fig.~\ref{Sigma1Fig}. We denote the corresponding jump matrix by
\(\V^{(1)}\). On the contours
\(\Sigma^{(1)}_j\), we have the following expressions:
\begin{align*}
&\V^{(1)}_1(z)=\e^{\bTheta}
\begin{pmatrix}
 1 & 0 &  0\\
  0&1  &0 \\
 r_2(z) \frac{\widetilde T_1}{\widetilde  T_3}(z)& 0 &1
\end{pmatrix}
\e^{-\bTheta}, \quad
\V^{(1)}_2(z)=\e^{\bTheta}
\begin{pmatrix}
 1 & 0 &  0\\
  0&1  &0 \\
 -r_2(z) \frac{\widetilde T_1}{ \widetilde T_3}(z)& 0 &1
\end{pmatrix}
\e^{-\bTheta},\\
&\V^{(1)}_3(z)=\e^{\bTheta}
\begin{pmatrix}
 1 & 0 &  -r_2^{*}(z^*) \frac{\widetilde T_3}{\widetilde T_1}(z)\\
  0&1  &0 \\
 0& 0 &1
\end{pmatrix}
\e^{-\bTheta}, \quad
\V^{(1)}_4(z)=\e^{\bTheta}
\begin{pmatrix}
 1 & 0 &  r_2^{*}(z^*) \frac{\widetilde T_3}{\widetilde  T_1}(z)\\
  0&1  &0 \\
 0& 0 &1
\end{pmatrix}
\e^{-\bTheta},\\
&\V^{(1)}_5(z)=
\bpm
1& -\frac{R_1^{*}(z^*)}{\gamma(z)}\frac{\widetilde T_2}{\widetilde T_1}(z)\e^{\theta_{12}} &0\\
0&1&0\\
0&-\frac{r_3^*(z^*)}{\gamma(z)} \frac{\widetilde T_2}{\widetilde T_3}(z) \e^{\theta_{32}}&1
\epm, \quad
\V^{(1)}_8(z)=
\bpm
1&0&0\\
-R_{1}(z) \frac{\widetilde T_1}{\widetilde  T_2}(z)\e^{\theta_{21}} &1&r_3(z)\frac{\widetilde T_3}{\widetilde T_2}(z) \e^{\theta_{23}}\\
0&0&1
\epm,
\\
&\V^{(1)}_6(z)=\e^{\bTheta}
\bpm
1& 0 &0\\
-r_1(z) \frac{\widetilde T_1}{\widetilde T_2}(z)&1&R_3(z)\frac{\widetilde T_3}{ \widetilde T_2}(z)\\
 (r_2(z)-R_2(z)) \frac{\widetilde T_1}{\widetilde  T_3}(z) &0&1
\epm\e^{-\bTheta},\\
&\V^{(1)}_7(z)=\e^{\bTheta} \widetilde \T^{-1}
\bpm
1& -\frac{(r_1(z^*))^*}{\gamma(z)}+\frac{r_1^*(z^*) r_3(z) R_3^{*}(z^*)}{\gamma^2(z)} &(r_2(z^*)-R_2(z^*))^* \\
0&1&0\\
0&-\frac{1}{\gamma(z)} R_3^{*}(z^*)&1
\epm\e^{-\bTheta} \widetilde \T,\\
&\V^{(1)}_9(z)=\e^{\bTheta} \widetilde \T_-^{-1}
\begin{pmatrix}
1-\frac{1}{\gamma(z)}r_1(z) r^*_1(z^*)  & -\frac{r_1^{*}(z^*)}{\gamma(z)}  &0 \\
 r_1(z)& 1+\frac{1}{\gamma(z)}r_3(z) r^*_3(z^*) & -r_3(z)\\
 0 & -\frac{r_3^{*}(z^*)}{\gamma(z)}  &1
\end{pmatrix}
\e^{-\bTheta} \widetilde \T_+,\\
&\V^{(1)}_{13}(z)=\e^{\bTheta}
\bpm
1&0&0\\
-\left(R_1(z)+\tilde{r}_1(z) \frac{1}{\delta_1^2(z) \delta_1(\hat{z})} \right)\frac{\widetilde T_1}{\widetilde T_2}(z) & 1 &0\\
\tilde{r}_2(z) \frac{T_1}{T_3}(z) & \frac{\tilde{r}_3^{*}(z^*)}{\gamma(z) }  \frac{T_2}{T_3}(z)&1
\epm \e^{-\bTheta},\\
&\V^{(1)}_{12}(z)=\e^{\bTheta}
\bpm
1&0&-r^{*}_2(z^*) \frac{\widetilde T_3}{\widetilde T_1}(z)\\
0 & 1 & \left[-r_3(z) +r^{*}_2(z^*) R_1(z)    \right]\frac{\widetilde T_3}{\widetilde T_2}(z)\\
0&0&1
\epm \e^{-\bTheta},\\
&\V^{(1)}_{10}(z)=\T^{-1}\G_4^{-1} \T, \quad
 \V^{(1)}_{11}(z)=\T^{-1}\G_3 \T.
\end{align*}
We omit the expressions of the jump matrices on the remaining unlabeled contours for brevity.

 The following lemma provides an estimate for the jump matrix near the origin.

\begin{lemma}\label{L:estoutq0}
For $z \in \Sigma^{(1)}_{10,11,12,13 }$, the estimate
\be \label{E:estinear0}
\left|\left( \V^{(1)}(x,t,z)  -\I \right)_{ij}  \right| \leq C |z| \e^{ t \re \Phi_{ij}(\xi,z)}
\ee
holds.
Moreover, if $(i,j) \ne (2,1)$, then
\be \label{E:estinear011}
\left|\left( \V^{(1)}(x,t,z)  -\I \right)_{ij}  \right| \leq C |z|^2 \e^{t \re \Phi_{ij}(\xi,z)}.
\ee

\end{lemma}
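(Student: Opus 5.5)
The estimate is entrywise, so I will read off each nonzero off-diagonal entry of $\V^{(1)}-\I$ on the four arcs $\Sigma^{(1)}_{10},\dots,\Sigma^{(1)}_{13}$ that surround the origin. For each entry $(i,j)$ I will show that it equals $\e^{\theta_{ij}} = \e^{t\Phi_{ij}}$ times a coefficient built from the reflection coefficients and the diagonal factors $T_\ell$ and $\widetilde T_\ell$. The coefficient will be shown to be $\mathcal{O}(|z|)$, or $\mathcal{O}(|z|^2)$ when $(i,j)\neq(2,1)$. The exponential factor then gives exactly $|\e^{\theta_{ij}}|=\e^{t\re\Phi_{ij}}$.

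First I check that the diagonal factors are harmless. The arcs lie in a fixed region away from $\pm q_0$ and away from the discrete spectrum $\mathcal Z$, and they approach only the point $0$. On them $\delta_1(z)$, $\delta_1(\hat z)$, $\delta(z)$, $\delta(\hat z)$, $\rP_1(z)$, $\rP_1(\hat z)$ and $\rho$ are bounded and bounded away from zero. Near $0$, $\hat z\to\infty$, and every one of these functions has a finite nonzero limit at $0$ and at $\infty$. So every ratio $T_i/T_j$ and $\widetilde T_i/\widetilde T_j$ is $\mathcal O(1)$, and so is $1/\gamma(z)=z^2/(z^2-q_0^2)$. Note in fact that $1/\gamma=\mathcal O(z^2)$ near $0$, which will help. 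For $\V^{(1)}_{10}=\T^{-1}\G_4^{-1}\T$ and $\V^{(1)}_{11}=\T^{-1}\G_3\T$, conjugation by the diagonal $\T$ only multiplies the $(i,j)$ entry by $T_j/T_i$. This leaves the estimate unchanged.

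Next I bound the reflection-coefficient factors, using Lemma~\ref{L:fsxsxz}. This gives $r_2,r_3=\mathcal O(z^2)$ as $z\to0$ in $S_\varepsilon$, and $r_1$ bounded. Since $r_1(z)=\frac{\ii q_0}{z}r_3(\hat z)$ and $r_3(\hat z)=\mathcal O(1/\hat z)=\mathcal O(z)$ as $\hat z\to\infty$, we get $r_1=\mathcal O(1)$ but not better. This is why the $(2,1)$ entry only gets $\mathcal O(|z|)$ (after accounting for the factor structure; see below). The tilde quantities $\tilde r_j$, and also $\hat r_j$ via the analogous relations and the scattering symmetries, obey the same small-$z$ bounds, because they differ from $r_j$ by bounded factors. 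Explicitly, I expect $\hat r_2,\hat r_3=\mathcal O(z^2)$ and $\hat r_1=\mathcal O(z)$ by the $z\mapsto\hat z$ symmetry of $\bA,\B$. Then, entry by entry, the coefficient is bounded as follows. Entries $(3,1)$ and $(1,3)$ carry $r_2$, $R_2$ or $\tilde r_2$, hence are $\mathcal O(z^2)$. Entries $(2,3)$ and $(3,2)$ carry $r_3$ or $r_3^*/\gamma$, hence are $\mathcal O(z^2)$. Entry $(1,2)$ carries $R_1^*/\gamma$, hence is $\mathcal O(z^2)$. Entry $(2,1)$ carries $R_1$ and $r_1$, hence is $\mathcal O(1)$ a priori.

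\textbf{Main obstacle.} The main obstacle is the $(2,1)$ entry, for instance $-(R_1+\tilde r_1\delta_1^{-2}(z)\delta_1^{-1}(\hat z))\widetilde T_1/\widetilde T_2$ on $\Sigma^{(1)}_{13}$, where the statement claims an $\mathcal O(|z|)$ bound. I will first try to show that $r_1(z)=\mathcal O(z)$ itself. This follows from $r_1(z)=\frac{\ii q_0}{z}r_3(\hat z)$ once $r_3(w)=\mathcal O(w^{-2})$ as $w\to\infty$, which is one more integration by parts under Assumption~\ref{As:1}, in the spirit of the remark after Lemma~\ref{L:fsxsxz}. If that fails, the fallback is to use the cancellation between $R_1$ and $\tilde r_1\delta_1^{-2}\delta_1^{-1}(\hat z)$ at $z=0$. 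Both have the same limit up to sign, by the relation between $\hat r_1$ and $r_1$ encoded in the factorization \eqref{E:jumpM1ex}, so their combination vanishes linearly at $0$. Finally, the entries produced by multiplying triangular factors in $\G_3$, $\G_4^{-1}$ are sums of such products. Each product inherits the smallest power, and this completes the entrywise bounds.
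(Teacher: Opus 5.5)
Your bounds for the entries other than $(2,1)$ are fine at the paper's level of detail. The $(2,1)$ entry, which you rightly identify as the crux, is not established.

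Your primary route cannot work, because in general $r_1(0)\neq0$ however fast the data decay. Indeed $r_1(0)=\frac{\ii}{q_0}\lim_{w\to\infty}w\,r_3(w)$. In $r_3=a_{23}/a_{33}$ the relevant frequency is $k-\lambda=q_0^2/w\to0$ as $w\to\infty$, so there is no oscillation to integrate by parts against. The $1/w$ decay comes instead from the $\mathcal O(1/z)$ entries of $\E_\pm$. Its coefficient is, to first order in the perturbation, a non-oscillatory integral of the component of $\q(\cdot,0)$ orthogonal to $\q_+$. This vanishes in the scalar reduction but is generically nonzero for genuinely vector data. The extra integration by parts mentioned after Lemma~\ref{L:fsxsxz} improves $r_2,r_3$ as $z\to0$, where the frequencies blow up, which is the opposite regime. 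For the same reason your expectation $\hat r_1=\mathcal O(z)$ is wrong, and so is the claim that $\tilde r_1$ differs from $r_1$ by bounded factors. Letting $z\to0^{\mp}$ in the two factorizations \eqref{E:jumpM1ex}, and using that $\bV^{(1)}$ is continuous at $0$, gives $\tilde r_1(0)=-\delta_1^2(0)\,r_1(0)$, which is nonzero whenever $r_1(0)\neq0$. Your claims are also mutually inconsistent: if $\tilde r_1\to0$, the asserted ``same limit up to sign'' would force $r_1(0)=0$.

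Your fallback names the right mechanism, but it is only asserted, and it is the whole content of the lemma. Equal limits give only $\mathcal Q(z):=R_1(z)+\tilde r_1(z)\delta_1^{-2}(z)\delta_1^{-1}(\hat z)=o(1)$. The claimed linear vanishing would further need Lipschitz control at $z=0$ of $r_1$, $\hat r_1$, $\rho$ and $\delta_1$. This is not automatic, since $z=0$ is an endpoint of the cuts of $\delta_1$ and $\rho$ and a boundary point of $S_\varepsilon$.

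The paper supplies an explicit mechanism instead:
\begin{itemize}
\item By the trace formula $\delta_1=s_1\rho$ (with $s_1=a_{11}$ in $\C_+$ and $1/b_{11}$ in $\C_-$), one has $\mathcal Q=R_1+\hat r_1b_{11}^2b_{33}$.
\item Since $R_1-r_1=\mathcal O(z^2)$ (because $\gamma^{-1}=\mathcal O(z^2)$), it suffices to bound $r_1+\hat r_1b_{11}^2b_{33}$.
\item The cofactor relations for $\B=\bA^{-1}$ give
\[
\hat r_1b_{11}^2b_{33}=-r_1\,a_{33}a_{11}b_{11}b_{33}+a_{23}a_{31}b_{11}b_{33},
\]
with $|a_{33}a_{11}b_{11}b_{33}-1|\le C|z|$ and $|a_{23}a_{31}b_{11}b_{33}|\le C|z|$.
\item Together these yield $|\mathcal Q(z)|\le C|z|$.
\end{itemize}
You need this identity to close the $(2,1)$ estimate, or an equivalent argument such as the continuity argument above supplemented by the endpoint regularity.
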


\begin{proof}
We prove the estimates only on \(\Sigma^{(1)}_{13}\); the arguments on
\(\Sigma^{(1)}_{10}\), \(\Sigma^{(1)}_{11}\), and \(\Sigma^{(1)}_{12}\) are the
same. A direct calculation gives
$$
\V^{(1)}_{13}(z)-\I=
\bpm
0&0&0\\
-\left(R_1(z)+\tilde{r}_1(z) \frac{1}{\delta_1^2(z) \delta_1(\hat{z})} \right)\frac{\widetilde T_1}{\widetilde T_2}(z)  \e^{\theta_{21}} & 0 &0\\
\tilde{r}_2(z)  \frac{T_1}{T_3}(z)  \e^{\theta_{31}} & \frac{\tilde{r}_3^{*}(z^*)}{\gamma(z)} \frac{T_2}{T_3}(z)  \e^{\theta_{32}}&0
\epm .
$$
By the local behavior of the scattering coefficients near \(z=0\) and their
symmetries, we have, as \(S_\varepsilon\ni z\to0\),
\be \label{E:xybakns}
r_2(z)=\mathcal O(z^2),\qquad r_3(z)=\mathcal O(z^2),\qquad
\tilde r_2(z)=\mathcal O(z^2),\qquad \tilde r_3(z)=\mathcal O(z^2).
\ee
Together with \(\gamma^{-1}(z)=\mathcal O(z^2)\) and the local behavior of the
scalar factors \(\widetilde T_j\) and \(\delta_1\), these estimates immediately give
\eqref{E:estinear0} and~\eqref{E:estinear011} for all nonzero entries of
\(\V^{(1)}_{13}-\I\), except possibly the \((2,1)\)-entry. It remains to estimate
this entry.

Set
$$
\mathcal{Q}(z)=R_1(z)+\tilde{r}_1(z) \frac{1}{\delta_1^2(z) \delta_1(\hat{z})} = R_1(z)+\hat{r}_1(z) \frac{\rho^2(z) \rho(\hat{z})}{\delta_1^2(z) \delta_1(\hat{z})}.
$$
Using the so-called trace formula~(see~\cite[section 3.3]{BD2015-1}), it is straightforward to verify that $\delta_1(z)=s_1(z) \rho(z)$, where
$$
s_1(z)=\begin{cases}
a_{11}(z), & z \in \C_+,\\
\frac{1}{b_{11}(z)}, & z \in \C_-.
\end{cases}
$$
Based on this relationship, we obtain
$
\mathcal{Q}(z)= R_1(z)+\hat{r}_1(z)b_{11}^2(z) b_{33}(z)
$.
A straightforward calculation gives us
\be \label{E:wmbb1}
|\mathcal{Q}(z)| \leq |R_1(z) -r_1(z)|+\left|
r_1(z)+\hat{r}_1(z) b_{11}^2(z) b_{33}(z)
\right|.
\ee
It's easy to verify that
\be \label{E:xyes1}
\begin{aligned}
\hat{r}_1(z)b_{11}^2(z) b_{33}(z)&=\frac{b_{12}(z)}{b_{11}(z)}b_{11}^2(z)b_{33}(z)\\
&=-r_1(z) a_{33}(z) a_{11}(z)b_{11}(z)b_{33}(z)+a_{23}(z)a_{31}(z)b_{11}(z)b_{33}(z).
\end{aligned}
\ee
Applying the above equation along with the estimates
\be \label{E:xygj2}
|a_{33}(z) a_{11}(z) b_{11}(z)b_{33}(z)-1| \leq C |z|,  \quad |a_{23}(z)a_{31}(z)b_{11}(z) b_{33}(z)| \leq  C |z|,
\ee
 we obtain
\be  \label{E:wmbb2}
\begin{aligned}
\left|
r_1(z)+\hat{r}_1(z) b_{11}^2(z) b_{33}(z)
\right|   &\leq C |a_{33}(z) a_{11}(z) b_{11}(z)b_{33}(z)-1|  +    |a_{23}(z)a_{31}(z)b_{11}(z) b_{33}(z)| \\
&\leq C |z|.
\end{aligned}
\ee
Combining~\eqref{E:wmbb1} with~\eqref{E:wmbb2}, we conclude that $\mathcal{Q}(z)$ satisfies
$
|\mathcal{Q}(z)| 
  \leq C |z| 
$.
So we immediately obtain
$$
\left|\left( \V^{(1)}_{13}(x,t,z)  -\I \right)_{21}  \right| \leq C |z| \e^{ t\re \Phi_{21}}.
$$
This proves the desired estimates on \(\Sigma^{(1)}_{13}\).
\end{proof}

Let \(\eps>0\) be sufficiently small, and let \(\cD_{\eps}\) denote the open disk
centered at \(-q_0\) with radius \(\eps\). Then we have the following lemma.
\begin{lemma}\label{L:yzx}
The jump matrix \(\V^{(1)}\) converges to the identity matrix as
\(t\to+\infty\), uniformly for \((x,t)\in\cP_+\) and
\(z\in\Sigma^{(1)}\setminus\cD_{\eps}\). More precisely,
\be \label{E:gjV1}
\begin{aligned}
&\|\V^{(1)}-\I \|_{L^1 (\Sigma^{(1)} \setminus \mathcal{D}_{\eps})}
\leq C t^{-1},\\
&\|\V^{(1)}-\I \|_{ L^{\infty}(\Sigma^{(1)} \setminus \mathcal{D}_{\eps})}
\leq C t^{-1/2}.
\end{aligned}
\ee
\end{lemma}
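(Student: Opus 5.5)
We split $\Sigma^{(1)}\setminus\cD_{\eps}$ into three families of pieces and bound $\V^{(1)}-\I$ entrywise on each. On each piece we use that every off-diagonal entry has the form (bounded analytic coefficient)$\times\e^{t\Phi_{ij}(\xi,z)}$.

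\textbf{Coefficients.} The coefficients are built from $r_j$, $R_j$, $\tilde r_j$, $\delta_1$, $\widetilde T_j$, $\gamma^{-1}$ and $\rP_1$. They are uniformly bounded on the contour away from $-q_0$. This follows from Lemma~\ref{L:fsxsxz}, which gives analyticity in $S_\varepsilon$ and $\mathcal O(1/z)$ decay. Boundedness of $\delta$ and $\delta_1$ away from their endpoints follows from Lemma~\ref{L:dxz} and the standard theory of Cauchy integrals. The poles in $\mathcal Z$ lie off $\Sigma^{(1)}$, so the Blaschke factors in $\rP_1$ are harmless. The only delicate points away from $-q_0$ are $z=0$, which is handled below, and $z=q_0$. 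For $z=q_0$ we plan to argue that the jump there is on the real axis, where the factorization is exact, or that the contour can be routed so the relevant exponentials decay. The growth of $\Delta$ at $q_0$ is compensated as in \cite[Lemma~3.4]{GZW2025}.

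\textbf{Lens pieces away from $0$ and $-q_0$.} These are the rays $\Sigma^{(1)}_{1,\dots,8}$ outside $\cD_\eps$ and the outer horizontal rays. Their directions are chosen according to the signature table in Fig.~\ref{fig:signature table-1} and Lemma~\ref{L:uniform-extension-near-minus-q0}. On them each relevant $\re\Phi_{ij}$ satisfies $\re\Phi_{ij}\le -c\,\min(|z+q_0|,1)\cdot|\im z|$ or a comparable bound. Note that $\Phi_{21}=2\ii\xi z-\ii z^2$, so $\re\Phi_{21}=-2\im z(\xi-\re z)$, and $\Phi_{31}$ behaves like $-4\ii kλ$ for large $z$. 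Hence on the parts at distance $\ge\eps$ from $-q_0$ and from $\R$ we get exponential decay $\e^{-ct}$ in both $L^1$ and $L^\infty$. For the unbounded parts we use in addition the $\mathcal O(1/z)$ decay of the coefficients, or growth of $|\re\Phi|$, to make the $L^1$ norm finite. Where a ray leaves $\R$ at a point $z_*\neq -q_0$, we have $\re\Phi\le -c t|\im z|$ near $z_*$. This yields $L^\infty\le C$ only near $z_*$, so the rays should leave $\R$ only inside $\cD_\eps$ or at $0$. Otherwise the ray is interpolated so that its endpoints lie in $\cD_\eps$; the figure suggests this is the case.

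\textbf{Near the origin ($\Sigma^{(1)}_{10,\dots,13}$).} This is the main obstacle and the source of the rates $t^{-1}$ and $t^{-1/2}$. Here $\Phi_{32}$ and $\Phi_{31}$ are singular like $\ii q_0^4/z^2$, while $\Phi_{21}$ behaves like $2\ii\xi z$. We use Lemma~\ref{L:estoutq0}. For $(i,j)=(2,1)$ the bound is $C|z|\e^{t\re\Phi_{21}}$. On arcs leaving $0$ at angle, with $\xi\approx -q_0$, we have $\re\Phi_{21}\le -c|z|$, so
\[
\sup_z |z|\e^{-ct|z|}\le Ct^{-1},\qquad \int_0^\infty r\e^{-ctr}\,dr\le Ct^{-2}.
\]
For the entries involving $\Phi_{32}$ or $\Phi_{31}$, which carry the weight $|z|^2$, set $w=1/z$. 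Along the arcs through $0$ the phase satisfies $\re\Phi_{32}\le -c|z|^{-1}\sin(\cdot)$-type bounds, at worst degenerate to quadratic order $-c\,\im(1/z)^2$-like behaviour. Choosing the worst case $\re\Phi\le -c|z|^{-1}\cdot|\text{angle}|$ near the tangential endpoints, we bound
\[
|z|^2\e^{-ct\,\phi(z)}
\]
by a scaling argument, obtaining $L^\infty\le Ct^{-1/2}$ and $L^1\le Ct^{-1}$. The plan is to parametrize each arc $\Sigma^{(1)}_{10,\ldots,13}$ (circular arcs through $0$ and the real axis), compute $\re\Phi_{ij}$ explicitly along it, and estimate $\sup$ and $\int$ of the resulting weight. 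The nontrivial step is checking that the degeneration of $\re\Phi_{ij}$ where the arcs meet $\R$ away from $0$ is of square-root type, which is what gives exactly $t^{-1/2}$ in sup and $t^{-1}$ in $L^1$. Combining the three families gives \eqref{E:gjV1}.
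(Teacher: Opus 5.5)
\textbf{There is a genuine gap at the origin.} Your decomposition agrees with the paper's: everything on $\Sigma^{(1)}\setminus(\cD_\eps\cup\Sigma^{(1)}_{10,11,12,13})$ is exponentially small, and the rates come from the arcs through $z=0$ via Lemma~\ref{L:estoutq0}. However, the step that should produce $t^{-1/2}$ and $t^{-1}$ rests on a wrong picture of the contour, and it fails as written.

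The arcs $\Sigma^{(1)}_{10,\dots,13}$ are pieces of circles \emph{tangent} to $\R$ at $z=0$; the paper uses $|z|^2=\frac{q_0^2}{\varepsilon}|\im z|$ on $\Sigma^{(1)}_{13}$. They do not leave $0$ at an angle, and they meet $\R$ nowhere else. So your bound $\re\Phi_{21}\le -c|z|$ is false on them. What does hold is $\re\Phi_{21}=2\im z\,(\re z-\xi)$ with $\re z-\xi\ge -\xi\ge q_0/2$ on $\Sigma^{(1)}_{13}$ for $(x,t)\in\cP_+$. Hence $\re\Phi_{21}\le -c|\im z|\le -c|z|^2$ there. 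Combine this with the first-order vanishing $\mathcal Q(z)=\mathcal O(z)$ of the $(2,1)$ coefficient, which is the nontrivial content of Lemma~\ref{L:estoutq0}, obtained via the trace formula. This gives $|(\V^{(1)}_{13}-\I)_{21}|\le C|z|\e^{-ct|z|^2}$, and then $\sup_{u\ge 0}u\e^{-ctu^2}\le Ct^{-1/2}$ and $\int_0^\infty u\e^{-ctu^2}\,\mathrm{d}u\le Ct^{-1}$. This single entry is the source of both rates in \eqref{E:gjV1}. The $(1,2)$ entries on the upper arcs carry an extra factor $|z|$ and are better than needed.

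Conversely, the entries you lean on for the rate are harmless. Entries containing $\e^{\theta_{13}}$, $\e^{\theta_{23}}$ or their reverses are exponentially small on these arcs. Under $w=1/z$ the tangent circles become horizontal lines $\im w=\mathrm{const}\neq 0$. On these lines $\re\Phi_{32}=2q_0^2\im w\,(\xi-q_0^2\re w)$ has a fixed sign and modulus bounded below for a suitable radius, growing like $|z|^{-1}$ as $z\to 0$. This is the uniform lower bound on $|\re\Phi_{23}|$ and $|\re\Phi_{13}|$ that the paper invokes.

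Your proposed ``square-root type degeneration where the arcs meet $\R$ away from $0$'' does not occur, and it could not rescue the estimate anyway. As you observed yourself, if a contour reaches $\R$ at some $z_*\neq 0,-q_0$ with a nonvanishing coefficient, then $\V^{(1)}-\I$ does not decay at all near $z_*$. The scaling argument for $|z|^2\e^{-ct\phi(z)}$ should therefore be replaced by the tangency argument above: $|\im z|\asymp|z|^2$ on the arcs, together with $\mathcal Q(z)=\mathcal O(z)$ for the $(2,1)$ entry.
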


\begin{proof}
On
\[
\Sigma^{(1)}\setminus
\left(\cD_{\eps}\cup\Sigma^{(1)}_{10,11,12,13}\right),
\]
the jump matrix \(\V^{(1)}-\I\) is exponentially small as \(t\to+\infty\),
uniformly for \((x,t)\in\cP_+\). It remains to estimate the jumps on
\(\Sigma^{(1)}_{10,11,12,13}\). By direct inspection of the explicit formulas for the jump matrices, together
with the uniform lower bounds for \(|\re\Phi_{23}|\) and \(|\re\Phi_{13}|\) on
these four contours, all entries of \(\V^{(1)}-\I\), except possibly the
\((2,1)\)- and \((1,2)\)-entries, are exponentially small uniformly for
\((x,t)\in\cP_+\). It remains only to estimate these two entries.

We first estimate the \((2,1)\)-entry of \(\V^{(1)}_{13}-\I\). On \(\Sigma^{(1)}_{13}\), we have
$
|z|^2=\frac{q_0^2}{\varepsilon}|\im z|
$.
Combining this relation with Lemma~\ref{L:estoutq0} and the sign distribution
of the phase, we obtain
\[
\left|
\left(\V^{(1)}_{13}(x,t,z)-\I\right)_{21}
\right|
\le
C|z|\,e^{t\re\Phi_{21}(\xi,z)}
\le
C|z|e^{-ct|\im z|}
\le
C|z|e^{-ct|z|^2}, \quad z \in  \Sigma^{(1)}_{13}, \quad (x,t) \in \cP_+.
\]
Hence
\[
\left\|
\left(\V^{(1)}_{13}-\I\right)_{21}
\right\|_{L^\infty(\Sigma^{(1)}_{13})}
\le
C\sup_{u\ge0}u e^{-ctu^2}
\le Ct^{-1/2},
\]
and
\[
\left\|
\left(\V^{(1)}_{13}-\I\right)_{21}
\right\|_{L^1(\Sigma^{(1)}_{13})}
\le
C\int_0^\infty u e^{-ctu^2}\, \mathrm{d} u
\le Ct^{-1}.
\]
The remaining cases  can be
treated in the same way. This proves~\eqref{E:gjV1}.
\end{proof}

\begin{remark}\label{R:zsxy}
For later use, we record a simple consequence of the preceding estimates.  For
\((i,j)\ne(2,1)\), Lemma~\ref{L:estoutq0} gives
\[
\left|
\frac{1}{z}\left(\V^{(1)}(x,t,z)-\I\right)_{ij}
\right|
\le C|z|\,\e^{t\re\Phi_{ij}(\xi,z)},
\qquad
z\in\Sigma^{(1)}_{10,11,12,13} .
\]
Thus, by the same argument as in the proof of Lemma~\ref{L:yzx}, we have
\be \label{E:zsxy}
\left\|
\frac{1}{z}\left(\V^{(1)}(x,t,\cdot)-\I\right)_{ij}
\right\|_{L^1(\Sigma^{(1)}\setminus\cD_{\eps})}
\le Ct^{-1},
\qquad (i,j)\ne(2,1).
\ee
\end{remark}

We note that this transformation affects the residue conditions; below, we present the residue conditions satisfied by  $\N^{(1)}(x,t,z)$.

\begin{lemma}\label{L:mlstj}
The transformed matrix \(\N^{(1)}(x,t,z)\) satisfies the following residue
conditions.
\begin{itemize}
\item[{\rm (i)}]

For $1 \leq j \leq N_1$, we have
\begin{equation}\label{E:lstjN1+}
\mathrm{Res}_{z=\zeta_j}\N^{(1)}(x,t,z)=\lim_{z \to \zeta_j}\N^{(1)}(x,t,z) \bpm 0&0&\widetilde C_j\\ 0&0&0 \\ 0&0&0 \epm,
\end{equation}
where
$$
  \widetilde C_j=\frac{\tau_j^{-1}\e^{\theta_{13}(x,t,\zeta_j)}}{[\widetilde T_3^{-1}]'(\zeta_j)\widetilde T_1'(\zeta_j)}.
$$

\item[{\rm (ii)}]
 For $1\leq j \leq N_2$, we have
\begin{equation}\label{E:lstjN2+}
\begin{aligned}
\mathrm{Res}_{z=z_j}\N^{(1)}(x,t,z)=\lim_{z \to z_j}\N^{(1)}(x,t,z) \bpm 0&\widetilde D_j&0\\ 0&0&0 \\ 0&0&0 \epm, \\
 \mathrm{Res}_{z=z_j^*}\N^{(1)}(x,t,z)=\lim_{z \to z_j^*}\N^{(1)}(x,t,z) \bpm 0&0&0\\ \widehat{D}_j&0&0 \\ 0&0&0 \epm ,
\end{aligned}
\end{equation}
where
$$
 \widetilde D_j=\frac{\kappa_j^{-1}\e^{\theta_{12}(x,t,z_j)}}{[ \widetilde T_2^{-1}]'(z_j) \widetilde T_1'(z_j)}, \quad
\widehat{D}_j=\frac{\gamma(z_j^*)(\kappa^{*}_j)^{-1}\e^{\theta_{21}(x,t,z_j^*)}}{[\widetilde T_1^{-1}]'(z_j^*) \widetilde T_2'(z_j^*)}.
$$
\end{itemize}
\end{lemma}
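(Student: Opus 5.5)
The plan is to follow the residues of $\N$ through the transformation $\N^{(1)}=\T_\infty^{-1}\N\G\T$, using that $\G=\I$ near every point of $\mathcal Z$ (the lenses in Fig.~\ref{Sigma1Fig} stay close to the real axis, away from the discrete spectrum). Hence near $\zeta_j$ or $z_j$ we simply have $\N^{(1)}=\T_\infty^{-1}\N\T$. The constant factor $\T_\infty^{-1}$ on the left plays no role in residue relations, so each residue computation is local and reduces to a $3\times3$ problem with only the diagonal factor $\T=\widetilde{\mathbf\Delta}\mathbf P$ on the right. Conditions at the conjugate and $\hat z$-images then follow from the symmetries~\eqref{E:M1RHP11}, which $\N$ and $\N^{(1)}$ inherit.

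For (i), $\N$ has the residue condition of $\M^{(1)}$ at $\zeta_j$: only column 1 has a pole, and $\mathrm{Res}\,\N_1=\tilde\tau_j\e^{\theta_{31}(\zeta_j)}\N_3(\zeta_j)$. Now look at $\T$. The factor $\rP_1$ vanishes at $\zeta_j$, and since $|\zeta_j|=q_0$ we have $\hat\zeta_j=\zeta_j^*$, so $\rP_1(\hat z)$ has a pole at $\zeta_j$. The factors $\delta,\delta(\hat z)$ are analytic and nonzero there. Therefore $T_1$ has a simple zero, $T_3$ has a simple pole, and $T_2$ is regular. It follows that $\N^{(1)}_1=\N_1T_1$ is analytic at $\zeta_j$, with value $\mathrm{Res}\,\N_1\,T_1'(\zeta_j)$. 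Meanwhile $\N^{(1)}_3=\N_3T_3$ acquires a simple pole with residue $\N_3(\zeta_j)\,\mathrm{Res}\,T_3$.

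Eliminating $\N_3(\zeta_j)$ gives
\[
\mathrm{Res}\,\N^{(1)}_3=\frac{\mathrm{Res}\,T_3}{\tilde\tau_j\e^{\theta_{31}}T_1'(\zeta_j)}\,\N^{(1)}_1(\zeta_j).
\]
Writing $\mathrm{Res}\,T_3=1/[T_3^{-1}]'(\zeta_j)$ and absorbing $|\delta_1(\zeta_j)|^2$ into $\widetilde T$ (recall $\tilde\tau_j=\tau_j|\delta_1(\zeta_j)|^2$ and $\widetilde\T=\mathbf\Delta\T$) yields $\widetilde C_j$, with $\e^{-\theta_{31}}=\e^{\theta_{13}}$. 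This is exactly~\eqref{E:lstjN1+}: the pole has moved from the first to the third column, and the condition is triangular in the opposite direction.

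For (ii), the same mechanism applies at $z_j$, which lies strictly inside $C_0$. There $\rP_1(z)$ vanishes, $\rP_1(\hat z)$ is regular and nonzero (since $\hat z_j\neq z_j,z_j^*$), and $1/(\rP_1\rP_1(\hat z))$ has a simple pole. So $T_1$ has a zero, $T_2$ has a pole, and $T_3$ is regular. The original condition has column 1 with residue proportional to column 2, via $\tilde\kappa_j\e^{\theta_{21}}$. After the transformation, column 1 becomes regular and column 2 gets the pole. Repeating the elimination gives $\widetilde D_j$ with $\e^{\theta_{12}}$. At $z_j^*$ the roles are swapped: the original residue sits in column 2 (proportional to column 1, with coefficient $\tilde\kappa_j^*\gamma(z_j^*)^{-1}\e^{\theta_{12}}$), and now $T_2$ vanishes while $T_1$ has a pole. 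The same computation produces $\widehat D_j$, where the factor $\gamma(z_j^*)$ arises from inverting $\gamma^{-1}$.

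The main obstacle is bookkeeping rather than analysis. One must verify that $\delta$, $\delta_1$ and the Blaschke factors for all other points contribute no extra zeros or poles at the points in question, in particular that $\hat z_j$ never coincides with $z_j$ or $z_j^*$. One must also confirm that the $\mathbf\Delta$ factors and the $\T_\infty$ normalization combine precisely into the $\widetilde T_i$ appearing in the stated constants. I would check this first on the $\zeta_j$ case, where $\hat\zeta_j=\zeta_j^*$ makes the zero/pole pairing of $T_1$ and $T_3$ the most delicate.
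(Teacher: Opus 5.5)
Your proposal is correct and follows essentially the same route as the paper. The paper likewise computes $\mathrm{Res}_{z=\zeta_j}\N^{(1)}_3=\bigl(\lim_{z\to\zeta_j}(z-\zeta_j)T_3(z)\bigr)\T_\infty^{-1}\N_3(\zeta_j)$, eliminates $\N_3(\zeta_j)$ through the regular value $\N^{(1)}_1(\zeta_j)$, absorbs the $\delta_1$-factors of $\tilde\tau_j$ into $\widetilde{\T}$, and treats $z_j,z_j^*$ the same way.

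One side remark is inaccurate, though you never use it: $\N$ does not inherit the symmetries~\eqref{E:M1RHP11}, because the gauge factor $\I+\A/z$ destroys them (cf.\ the nontrivial matrix $\mathbf F$ in Appendix~\ref{App:AAAA1}). The residue conditions of $\N$ at the image points are instead simply part of RH problem~\ref{RHP:re}, since left multiplication by $\I+\A/z$ leaves them unchanged.
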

\begin{proof}
Let \(\mathbf X_k\) denote the \(k\)-th column of a matrix \(\mathbf X\).  We
prove the first residue condition; the other two are obtained in the same way.
By the residue condition for \(\N\), together with the definition of the
conjugating factors, the third column of \(\N^{(1)}\) has a simple pole at
\(z=\zeta_j\), and
\begin{align*}
\Res_{z=\zeta_j}\N^{(1)}_3(x,t,z)
&=
\left(\lim_{z\to\zeta_j}(z-\zeta_j)T_3(z)\right)
\T_\infty^{-1}\N_3(x,t,\zeta_j)\\
&=
\tau_j^{-1}\e^{\theta_{13}(x,t,\zeta_j)}
\left(\lim_{z\to\zeta_j}(z-\zeta_j)\widetilde T_3(z)\right)
\left(\lim_{z\to\zeta_j}\frac{z-\zeta_j}{\widetilde T_1(z)}\right)
\N^{(1)}_1(x,t,\zeta_j).
\end{align*}
Since
\[
\lim_{z\to\zeta_j}(z-\zeta_j)\widetilde T_3(z)
=
\frac{1}{[\widetilde T_3^{-1}]'(\zeta_j)},
\qquad
\lim_{z\to\zeta_j}\frac{z-\zeta_j}{\widetilde T_1(z)}
=
\frac{1}{T_1'(\zeta_j)},
\]
we obtain
\[
\Res_{z=\zeta_j}\N^{(1)}_3(x,t,z)
=
\widetilde C_j \N^{(1)}_1(x,t,\zeta_j),
\]
which is equivalent to~\eqref{E:lstjN1+}. 
\end{proof}

\subsection{The third transformation: removal of poles}
For convenience, we extend the notation for the discrete spectrum by setting
\[
\zeta_{N_1+\ell}:=z_\ell,\qquad 1\le \ell\le N_2.
\]
For each \(\zeta_j\), \(1\le j\le N_1+N_2\), let \(\cD_j\) be a sufficiently
small open disk centered at \(\zeta_j\) with radius \(\eps>0\). We denote by
\(\widehat{\cD}_j\), \(\cD_j^*\), and \(\widehat{\cD}_j^*\) the images of
\(\cD_j\) under the maps \(z\mapsto \hat z\), \(z\mapsto z^*\), and
\(z\mapsto \hat z^*\), respectively. If \(1\le j\le N_1\), then
\[
\widehat{\cD}_j=\cD_j^*,\qquad
\widehat{\cD}_j^*=\cD_j .
\]
We choose \(\eps\) small enough so that all distinct disks are mutually disjoint
and do not intersect the original jump contour. Set
\[
\cD_{\rm sol}
=
\bigcup_{j=1}^{N_1+N_2}
\left(\cD_j\cup \widehat{\cD}_j\cup \cD_j^*\cup \widehat{\cD}_j^*\right),
\]
where repeated disks are counted only once.

We next define a piecewise constant matrix \(\bH(x,t,z)\) on \(\cD_{\rm sol}\).
Motivated by the residue conditions for \(\N^{(1)}\), see
Lemma~\ref{L:mlstj}, we first set
\[
\bH(x,t,z)=
\begin{cases}
\begin{pmatrix}
0&0&\widetilde C_j\\
0&0&0\\
0&0&0
\end{pmatrix},
& z\in \cD_j,\quad 1\le j\le N_1,\\[4mm]
\begin{pmatrix}
0&\widetilde D_{j-N_1}&0\\
0&0&0\\
0&0&0
\end{pmatrix},
& z\in \cD_j,\quad N_1+1\le j\le N_1+N_2,\\[4mm]
\begin{pmatrix}
0&0&0\\
\widehat D_{j-N_1}&0&0\\
0&0&0
\end{pmatrix},
& z\in \cD_j^*,\quad N_1+1\le j\le N_1+N_2 .
\end{cases}
\]
On the remaining image disks, \(\bH\) is defined by the symmetry relation
\[
\bH(x,t,z)
=
-\frac{\hat\zeta_j}{\zeta_j}
\bPi^{-1}(\hat\zeta_j)\,
\bH(x,t,\hat z)\,
\bPi(\hat\zeta_j),
\]
whenever \(z\in\widehat{\cD}_j\) or \(z\in\widehat{\cD}_j^*\) and the right-hand
side is already defined. Since the disks are disjoint, this definition is
unambiguous.

The third transformation is then defined by
\[
\N^{(2)}(x,t,z)=\N^{(1)}(x,t,z)\,\widetilde{\bH}(x,t,z),
\]
where
\[
\widetilde{\bH}(x,t,z)=
\begin{cases}
\I-\dfrac{\bH(x,t,z)}{z-\zeta_j},
& z\in \cD_j,\quad 1\le j\le N_1+N_2,\\[4mm]
\I-\dfrac{\bH(x,t,z)}{z-\zeta_j^*},
& z\in \cD_j^*,\quad 1\le j\le N_1+N_2,\\[4mm]
\I-\dfrac{\bH(x,t,z)}{z-\hat\zeta_j},
& z\in \widehat{\cD}_j,\quad N_1+1\le j\le N_1+N_2,\\[4mm]
\I-\dfrac{\bH(x,t,z)}{z-\hat\zeta_j^*},
& z\in \widehat{\cD}_j^*,\quad N_1+1\le j\le N_1+N_2,\\[4mm]
\I,
& z\in \C\setminus\cD_{\rm sol}.
\end{cases}
\]
Here, in the case \(1\le j\le N_1\), the disks
\(\widehat{\cD}_j\) and \(\widehat{\cD}_j^*\) coincide with
\(\cD_j^*\) and \(\cD_j\), respectively, so no additional cases are needed.

We orient \(\partial\cD_j\) and \(\partial\widehat{\cD}_j^*\) counterclockwise, and orient \(\partial\cD_j^*\) and \(\partial\widehat{\cD}_j\) clockwise. Define \[ \partial\cD_{\rm sol} = \bigcup_{j=1}^{N_1+N_2} \left( \partial\cD_j \cup \partial\widehat{\cD}_j \cup \partial\cD_j^* \cup \partial\widehat{\cD}_j^* \right), \] again with repeated circles counted only once.

 \begin{lemma}\label{L:third-transformation} The function \(\N^{(2)}(x,t,z)\) is analytic at all points of the discrete spectrum. Moreover, it satisfies the jump condition \[ \N^{(2)}_+(x,t,z)=\N^{(2)}_-(x,t,z)\V^{(2)}(x,t,z), \qquad z\in \Sigma^{(2)}:=\Sigma^{(1)}\cup\partial\cD_{\rm sol}, \] where \[ \V^{(2)}(x,t,z)= \begin{cases} \V^{(1)}(x,t,z), & z\in \Sigma^{(1)},\\[2mm] \widetilde{\bH}(x,t,z), & z\in \partial\cD_j\cup\partial\widehat{\cD}_j^*, \quad 1\le j\le N_1+N_2,\\[2mm] \widetilde{\bH}^{-1}(x,t,z), & z\in \partial\cD_j^*\cup\partial\widehat{\cD}_j, \quad 1\le j\le N_1+N_2 . \end{cases} \] Furthermore, there exists a constant \(c>0\) such that, as \(t\to+\infty\), \[ \|\V^{(2)}(x,t,\cdot)-\I\|_{(L^1\cap L^\infty)(\partial\cD_{\rm sol})} = \mathcal O(e^{-ct}), \qquad (x,t) \in\cP_+ . \] 

\end{lemma}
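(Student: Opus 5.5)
The proof splits into three parts: analyticity at the poles, the jump relation, and the exponential bound on the circles.

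\emph{Analyticity.} I would argue locally in each disk, starting with \(\cD_j\), \(1\le j\le N_1\). Here \(\bH\) is nilpotent with only a \((1,3)\)-entry. So \(\N^{(2)}=\N^{(1)}(\I-\bH/(z-\zeta_j))\) changes only the third column:
\[
\N^{(2)}_3=\N^{(1)}_3-\frac{\widetilde C_j}{z-\zeta_j}\N^{(1)}_1 .
\]
By Lemma~\ref{L:mlstj}, \(\N^{(1)}_3\) has a simple pole at \(\zeta_j\) with residue \(\widetilde C_j\N^{(1)}_1(\zeta_j)\), and \(\N^{(1)}_1\) is analytic there. The pole therefore cancels and \(\N^{(2)}\) is analytic at \(\zeta_j\). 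The same computation handles \(z_j\), using the \((1,2)\)-entry \(\widetilde D_j\), and \(z_j^*\), using the \((2,1)\)-entry \(\widehat D_j\).

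For the image disks \(\widehat\cD_j,\widehat\cD_j^*\), I would not recompute residues. Instead I would invoke the \(z\mapsto\hat z\) symmetry inherited by \(\N^{(1)}\) from \eqref{E:M1RHP11}, noting that the conjugations \(\G\) and \(\T\) respect it. The symmetry relation defining \(\bH\) on the image disks is chosen exactly so that the residue condition transported by \(\bPi\), with Jacobian factor \(-\hat\zeta_j/\zeta_j\), is of the same form. Hence the same cancellation occurs there.

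\emph{Jump relation.} Off \(\cD_{\rm sol}\) the jump is unchanged, because \(\widetilde\bH=\I\) and the disks do not meet \(\Sigma^{(1)}\). On a circle, one side has \(\N^{(2)}=\N^{(1)}\widetilde\bH\) and the other has \(\N^{(2)}=\N^{(1)}\). The jump is therefore \(\widetilde\bH\) or \(\widetilde\bH^{-1}\) according to the stated orientation. Note that \(\widetilde\bH^{-1}=\I+\bH/(z-\cdot)\) by nilpotency.

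\emph{Exponential bound (main step).} On \(\partial\cD_{\rm sol}\) we have \(|z-\zeta_j|=\eps\), so \(|\V^{(2)}-\I|\le \eps^{-1}|\bH|\). It therefore suffices to show that every constant \(\widetilde C_j,\widetilde D_j,\widehat D_j\) is \(\mathcal O(\e^{-ct})\) uniformly on \(\cP_+\). The \(T\)-dependent prefactors are bounded, since \(\xi\) stays in a compact set. The issue is the sign of the real part of the phases in
\[
\e^{\theta_{13}(\zeta_j)}=\e^{-t\Phi_{31}(\xi,\zeta_j)},\qquad \e^{\theta_{12}(z_j)}=\e^{-t\Phi_{21}(\xi,z_j)},\qquad \e^{\theta_{21}(z_j^*)}=\e^{t\Phi_{21}(\xi,z_j^*)}.
\]
Write \(z_j=a+\ii b\) with \(b>0\). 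Then
\[
\re\Phi_{21}(\xi,z_j)=-2\xi b+2ab=2b(a-\xi).
\]
For \(\xi\approx -q_0\) this is positive precisely when \(\re z_j>-q_0\), which holds since \(|z_j|<q_0\). The bound is uniform because \(\xi\to-q_0\) and the \(z_j\) are finitely many and fixed. The same computation applies at \(z_j^*\).

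For \(\zeta_j=q_0\e^{\ii\phi}\) with \(0<\phi<\pi\), we have \(\lambda=\ii q_0\sin\phi\) and \(k=q_0\cos\phi\). Hence
\[
\re\Phi_{31}=-4\xi q_0\sin\phi+4q_0^2\cos\phi\sin\phi=4q_0\sin\phi\,(q_0\cos\phi-\xi),
\]
which is bounded below by a positive constant near \(\xi=-q_0\), because \(\cos\phi>-1\).

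This sign verification, consistent with the signature tables in Fig.~\ref{fig:signature table-1}, is the step I expect to need the most care. On the image disks I would transfer the estimate by symmetry. In all cases \(|\bH|\le C\e^{-ct}\), which gives the \(L^1\cap L^\infty\) bound because the circles have finite total length.
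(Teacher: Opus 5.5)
Your proposal is correct and follows the paper's argument, which the paper only sketches: pole cancellation by the nilpotent factor $\widetilde{\bH}$, jumps $\widetilde{\bH}^{\pm1}$ fixed by the circle orientations, and exponential decay of the residue constants from the phase signs, with your formulas $\re\Phi_{31}(\xi,q_0\e^{\ii\phi})=4q_0\sin\phi\,(q_0\cos\phi-\xi)$ and $\re\Phi_{21}(\xi,a+\ii b)=2b(a-\xi)$ supplying the detail the paper omits. Two small precisions: the $\widetilde{\T}$-prefactors in $\widetilde C_j,\widetilde D_j,\widehat D_j$ do not depend on $(x,t)$ at all, so no compactness in $\xi$ is needed; and the exact $z\mapsto\hat z$ symmetry \eqref{E:M1RHP11} holds for $\M^{(1)}$, whereas $\N$ satisfies it only up to the rational, invertible left multiplier $\mathbf F(z)$ of Appendix~\ref{App:AAAA1}, which is harmless for your transfer to the image disks because residue conditions are right-multiplicative.
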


\begin{proof}
We only outline the argument, as the verification is standard.  The factor
\(\widetilde{\bH}\) is constructed from the residue conditions for
\(\N^{(1)}\) so that the principal parts at all discrete spectral points are
cancelled; thus \(\N^{(2)}\) is analytic at these points.  The jumps on
\(\partial\cD_{\rm sol}\) are then obtained directly from the definition
\(\N^{(2)}=\N^{(1)}\widetilde{\bH}\), where the orientation of each circle
determines whether the jump is \(\widetilde{\bH}\) or
\(\widetilde{\bH}^{-1}\).  Finally, the sign distribution of the phase in
\(\cP_+\) gives exponential decay of all residue constants appearing in
\(\bH\).  Since the circles have fixed radii,
we obtain, for some \(c>0\),
\[
\|\V^{(2)}-\I\|_{(L^1\cap L^\infty)(\partial\cD_{\rm sol})}
=\mathcal O(e^{-ct}), \qquad t \to \infty, 
\]
uniformly for \((x,t) \in\cP_+\).
\end{proof}

\subsection{The local parametrix }\label{sub:lG}
 In the previous subsections, we have shown that the jump matrix
\(\V^{(2)}\) converges uniformly to \(\I\), except in a neighborhood of
\(-q_0\). In this subsection, we construct a local parametrix near \(-q_0\)
in terms of the solution of the model RH problem~\ref{RHP:model}.

The first step involves constructing approximations for the phase function $\theta_{13}(x,t,z)$, which is defined by
$$
\theta_{13}(x,t,z)=\ii (- z+\frac{ q_0^2}{z})x+\ii (z^2-\frac{q_0^4}{z^2})t,
$$
 and appears in the exponential of the $(1,3)$ entry of the jump matrix $\V^{(2)}$. The behavior of this function is essential for matching the  Painlev\'e II  model problem.
For this purpose, by performing a Taylor expansion of this function at $z=-q_0$, we obtain
$$
\theta_{13}(x,t,z)=-2\ii \left[
(x+2t q_0)(z+q_0)+(\frac{x}{2q_0}+t)(z+q_0)^2+\frac{x+4 tq_0}{2q_0^2}(z+q_0)^3
\right]+\tilde{S}_1(x,t,z),
$$
where the remainder term $\tilde{S}_1(x,t,z)$ is given  by
$$
\tilde{S}_1(x,t,z)=t
\frac{\partial_k^4 \Phi_{13}(\xi,z_{\star})}{4 !}(z+q_0)^4.
$$
Here $z_{\star}$ denotes a point  lies on the line segment between $z$ and $-q_0$.

Introduce the variables $y$ and $\beta$ by
\be\label{E:ybeta}
y=2\left(\frac{4}{3}q_0 \right)^{1/3}(\xi+q_0)t^{2/3}, \quad
\beta=\left(\frac{3t}{4q_0}\right)^{1/3}(z+q_0),
\ee
and define function $S_1(x,t,z)$ by
\be \label{E:S1}
S_1(x,t,z)=-2\ii (\frac{x}{2q_0}+t)(z+q_0)^2-\ii \left(\frac{x+2 tq_0}{q_0^2}\right)(z+q_0)^3+\tilde{S}_1(x,t).
\ee
Then the function $\theta_{13}(x,t,z)$ can be expressed as follows :
\be \label{E:betabs}
\theta_{13}(x,t,z)=-2 \ii \left( y \beta+\frac{4}{3}\beta^3 \right)+S_1(x,t,z).
\ee

We also need the corresponding expansions of \(\theta_{12}\) and
\(\theta_{32}\) near \(-q_0\). A Taylor expansion gives
\begin{align*}
\theta_{32}(x,t,z)&=\theta_{32}(x,t,-q_0)+ \ii (x+2q_0 t)(z+q_0)+\left[\ii t +\ii \frac{2t}{q_0}(\xi+q_0)\right](z+q_0)^2\\
&\quad +\left[ \frac{\ii}{q_0^2} (x+2q_0  t)+\frac{2 \ii t}{q_0} \right]
(z+q_0)^3+\tilde{S}_2(x,t,z),\\
\theta_{12}(x,t,z)&=\theta_{12}(x,t,-q_0)-\ii(x+2t q_0)(z+q_0)+\ii t(z+q_0)^2,
\end{align*}
where the remainder term is defined as
$$
\tilde{S}_2(x,t,z)=t
\frac{\partial_k^4 \Phi_{32}(\xi,z_{\star})}{4 !}(z+q_0)^4,
$$
with \(z_\star\) lying on the line segment joining \(z\) and \(-q_0\). 
Let
\be \label{E:S2}
S_2(x,t,z)=\ii \frac{2t}{q_0}(\xi+q_0)(z+q_0)^2+\left[ \frac{\ii}{q_0^2} (x+2q_0  t)+\frac{2 \ii t}{q_0} \right]
(z+q_0)^3+\tilde{S}_2(x,t,z).
\ee
Then the functions $\theta_{12}(x,t,z)$ and $\theta_{32}(x,t,z)$ can be expressed as
\begin{align*}
\theta_{32}(x,t,z)=\theta_{32}(x,t,-q_0)+\ii y \beta(z)+\ii \left(\frac{4}{3}q_0\right)^{2/3}t^{1/3}\beta^2(z)+S_2(x,t,z)
\end{align*}
and
\begin{align*}
\theta_{12}(x,t,z)=\theta_{12}(x,t,-q_0)-\ii y \beta(z)+\ii \left(\frac{4}{3}q_0\right)^{2/3}t^{1/3}\beta^2(z),
\end{align*}
respectively.
Recall that $\cD_{\eps}$ denote the open disk of radius $\eps$ centered at the
point $-q_0$.  Then $z \to \beta$ is a biholomorphism from  $\cD_{\eps}$ onto the open disk of radius  $\left(3t/(4q_0)\right)^{1/3} \eps $  centered at the origin.  In particular,  the images of the critical points $\{ \rz_j\}_{j=0}^3$ of the phase functions $\Phi_{21}(\xi,z)$, $\Phi_{32}(\xi,z)$ and $\Phi_{31}(\xi,z)$ in the $\beta$-plane are $\{ \beta_j\}_{j=0}^3$, and their expressions are as follows:
\be \label{E:betaj}
\beta_j=\left(\frac{3t}{4q_0}\right)^{1/3}(\rz_j+q_0).
\ee
Let
$\X^{\eps}=\Sigma^{(2)} \cap \cD_{\eps}$ be as shown in Fig~\ref{XepsFig}.  Each contour
\(\X_j^\eps\), \(1\le j\le 8\), forms an angle either \(\pi/6\) or
\(5\pi/6\) with the real axis.

\begin{figure}
\centering
\begin{tikzpicture}[
    scale=0.6,
    line cap=round,
    line join=round,
    >=latex,
    contour/.style={line width=2pt, draw=black},
    arrowcontour/.style={
        contour,
        postaction={decorate},
        decoration={markings,mark=at position 0.58 with {\arrow{>}}}
    },
    circstyle/.style={line width=1.8pt, draw=black},
    every node/.style={font=\normalsize, inner sep=1pt}
]

\def\R{3.55}              
\def\h{0.72}              
\def\xp{-0.62}            
\def\xq{0.62}             
\def\xLO{-4.55}           
\def\xLI{-4.15}           
\def\xRI{4.15}            
\def\xRO{4.55}            

\pgfmathsetmacro{\t}{tan(30)}

\coordinate (O)  at (0,0);
\coordinate (U)  at (0,\h);
\coordinate (P)  at (\xp,0);
\coordinate (M)  at (0,0);
\coordinate (Q)  at (\xq,0);
\coordinate (D)  at (0,-\h);

\pgfmathsetmacro{\yATwo}{\h - \t*\xLO}
\coordinate (A2) at (\xLO,\yATwo);

\pgfmathsetmacro{\yASix}{-\t*(\xLI-\xp)}
\coordinate (A6) at (\xLI,\yASix);

\pgfmathsetmacro{\yBSeven}{\t*(\xLI-\xp)}
\coordinate (B7) at (\xLI,\yBSeven);

\pgfmathsetmacro{\yBThree}{-\h + \t*\xLO}
\coordinate (B3) at (\xLO,\yBThree);

\pgfmathsetmacro{\yTOne}{\h + \t*\xRO}
\coordinate (T1) at (\xRO,\yTOne);

\pgfmathsetmacro{\yTFive}{\t*(\xRI-\xq)}
\coordinate (T5) at (\xRI,\yTFive);

\pgfmathsetmacro{\ySEight}{-\t*(\xRI-\xq)}
\coordinate (S8) at (\xRI,\ySEight);

\pgfmathsetmacro{\ySFour}{-\h - \t*\xRO}
\coordinate (S4) at (\xRO,\ySFour);

\draw[circstyle] (O) circle (\R);

\draw[circstyle, ->] (0.15,\R) arc[start angle=88,end angle=104,radius=\R];

\draw[arrowcontour] (U) -- (A2);   
\draw[arrowcontour] (P) -- (A6);   

\draw[arrowcontour] (U) -- (T1);   
\draw[arrowcontour] (Q) -- (T5);   

\draw[arrowcontour] (B7) -- (P);   
\draw[arrowcontour] (B3) -- (D);   

\draw[arrowcontour] (S4) -- (D);   
\draw[arrowcontour] (S8) -- (Q);   

\draw[arrowcontour] (P) -- (Q);    

\fill (U) circle (2.2pt);
\fill (P) circle (2.2pt);
\fill (M) circle (2.2pt);
\fill (Q) circle (2.2pt);
\fill (D) circle (2.2pt);

\node at ( 1.95,  2.28) {$1$};
\node at (-1.95,   2.28) {$2$};

\node at (-2.00, - 2.28) {$3$};
\node at ( 1.95, - 2.28) {$4$};

\node at ( 2.35,  0.46) {$5$};
\node at (-2.35,  0.46) {$6$};

\node at (-2.35, -0.46) {$7$};
\node at ( 2.35, -0.46) {$8$};

\node at (0,-0.30) {$9$};

\end{tikzpicture}
\caption{{\protect\small
The solid contour inside the circle is $\X^{\eps}$, and the disk enclosed by the circle is $\cD_{\eps}$. }}
\label{XepsFig}
\end{figure}

To demonstrate how to construct model RH problem~\ref{RHP:model} we need to study the behavior of the reflection coefficients $\{ r_j\}_{j=1}^3$ at the branch point $-q_0$. First, it follows directly from~\eqref{E:asyratzd} that  $\lim\limits_{ z \to -q_0} r_2(z)=\ii$.
Now, it  remains to analyze the other two reflection coefficients.
One can  observe that $r^*_3(z)$ vanishes at the branch point $-q_0$, while $\gamma(z)$ has a simple zero at $-q_0$; thus, their ratio is well-defined there. We define $\tilde{s}$ to be this ratio, i.e.,
\be
\tilde{s}:=\lim_{ z \to -q_0} \frac{r^*_3(z^*)}{\gamma(z)}.
\ee
Then, by symmetry, we immediately obtain
$$
\lim_{ z \to -q_0} \frac{r^*_1(z^*)}{\gamma(z)}=-\ii \tilde{s}.
$$
We also need the limiting values of the scalar factors
\(\widetilde T_2/\widetilde T_1\) and \(\widetilde T_2/\widetilde T_3\) at \(-q_0\). By Lemma~\ref{L:dxz}, these limits are well defined and satisfy
$$
\frac{\widetilde T_2}{\widetilde T_1}(-q_0)=\frac{\widetilde T_2}{\widetilde T_3}(-q_0)=\frac{1}{\left( \delta(-q_0) \delta_1(-q_0) \rP_1(-q_0) \right)^3}.
$$
\begin{lemma}\label{L:Tsharpq0}
As \(z\to -q_0\) along any direction not tangent to the real axis, the following estimates hold:
\begin{subequations}
\begin{align}
\left| \frac{\widetilde T_1}{\widetilde T_3}(z) -1 \right| \leq C |\log |z+q_0||  |z+q_0|,\label{E:gjT3}\\
\left| \frac{\widetilde T_2}{\widetilde T_1}(z) -\frac{\widetilde T_2}{\widetilde T_1}(-q_0) \right| \leq C |\log |z+q_0|| |z+q_0|, \label{E:gjT2} \\
\left| \frac{\widetilde T_2}{\widetilde T_3}(z) -\frac{\widetilde T_2}{\widetilde T_3}(-q_0) \right| \leq C |\log |z+q_0||  |z+q_0|.\label{E:gjT}
\end{align}
\end{subequations}
\end{lemma}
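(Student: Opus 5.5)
The plan is to reduce all three estimates to statements about the scalar functions \(\delta_1\), \(\delta\) and \(\rP_1\) near \(z=-q_0\) and \(z=\hat z\to -q_0\). Recall \(\widetilde\T=\mathbf\Delta\widetilde{\mathbf\Delta}\mathbf P\), so
\[
\widetilde T_1=\delta_1(z)\delta(\hat z)\rP_1(z),\qquad
\widetilde T_2=\frac{1}{\delta_1(z)\delta_1(\hat z)\delta(z)\delta(\hat z)\rP_1(z)\rP_1(\hat z)},\qquad
\widetilde T_3=\delta_1(\hat z)\delta(z)\rP_1(\hat z).
\]
Each ratio in \eqref{E:gjT3}--\eqref{E:gjT} is therefore a product of the factors \(\delta_1^{\pm1}\), \(\delta^{\pm1}\), \(\rP_1^{\pm1}\), evaluated at \(z\) and at \(\hat z\). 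The map \(z\mapsto\hat z=q_0^2/z\) fixes \(-q_0\), is biholomorphic near \(-q_0\), maps \(\R\) to \(\R\), and satisfies \(|\hat z+q_0|\asymp|z+q_0|\). A nontangential approach \(z\to-q_0\) is therefore carried to a nontangential approach \(\hat z\to -q_0\).

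\textbf{Step 1 (Lipschitz–log bounds for each factor).} I would show that each factor \(f\in\{\delta_1,\delta,\rP_1\}\) satisfies
\[
|f(w)-f(-q_0)|\le C|w+q_0|\,|\log|w+q_0||
\]
as \(w\to-q_0\) nontangentially, and that each is bounded away from \(0\) and \(\infty\) there.
\begin{itemize}
\item For \(\delta\), this is exactly Lemma~\ref{L:dxz}.
\item For \(\rP_1\), the zeros and poles lie off \(\R\), so it is analytic near \(-q_0\) and the bound is trivial. It is also unimodular-nonvanishing there.
\item For \(\delta_1\), the integral runs over \(\R_+\), which stays at positive distance from \(-q_0\). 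Hence \(\delta_1\) is analytic in a neighborhood of \(-q_0\), and Lipschitz there.
\end{itemize}
Nonvanishing of \(\delta\) at \(-q_0\) follows from its exponential form. Because all factors are bounded above and below, differences of products telescope:
\[
|ab-a_0b_0|\le|a-a_0||b|+|a_0||b-b_0|.
\]
This immediately gives \eqref{E:gjT2} and \eqref{E:gjT}, with the limit values as computed above.

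\textbf{Step 2 (the normalization in \eqref{E:gjT3}).} Here I need \(\frac{\widetilde T_1}{\widetilde T_3}(-q_0)=1\). At \(z=-q_0\) we have \(\hat z=z\), so
\[
\frac{\widetilde T_1}{\widetilde T_3}(-q_0)=\frac{\delta_1(-q_0)\delta(-q_0)\rP_1(-q_0)}{\delta_1(-q_0)\delta(-q_0)\rP_1(-q_0)}=1.
\]
The telescoping estimate of Step 1 then yields \eqref{E:gjT3}. The same evaluation at the fixed point gives the common value \((\delta\delta_1\rP_1)^{-3}(-q_0)\) of the other two limits.

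\textbf{Main obstacle.} The only delicate point is the endpoint behavior of \(\delta\), whose cut \((-\infty,-q_0]\) ends exactly at \(-q_0\). The bound needs \(\log(1+|r_3|^2/\gamma)\) to vanish at \(-q_0\), and Lipschitz vanishing would suffice. This follows from Lemma~\ref{L:fsxsxz}: \(r_3\) is analytic near \(-q_0\), \(r_3(-q_0)=0\), and \(\gamma\) has a simple zero there, so \(|r_3|^2/\gamma=O(|z+q_0|)\). The standard Cauchy-transform endpoint lemma, already invoked in Lemma~\ref{L:dxz}, then produces the \(|w+q_0||\log|w+q_0||\) rate. Nontangentiality is needed so that both \(w\) and \(\hat w\) stay in a sector where this lemma applies.
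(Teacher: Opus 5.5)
Your proposal is correct and follows the same route as the paper, whose proof is the one-line reduction to Lemma~\ref{L:dxz}. You supply the details the paper leaves implicit: \(\delta_1\) and \(\rP_1\) are analytic and nonvanishing near \(-q_0\), the map \(z\mapsto\hat z\) preserves nontangential approach to \(-q_0\), and \(\frac{\widetilde T_1}{\widetilde T_3}(-q_0)=1\) because \(-q_0\) is a fixed point of \(z\mapsto\hat z\). One minor quibble: \(\rP_1\) is unimodular only on \(\R\), but you only use that it is nonvanishing and bounded, so this does not affect the argument.
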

\begin{proof}
The estimates follow directly from  Lemma~\ref{L:dxz}.
\end{proof}
\begin{remark}\label{R:Tsharpboundary} 
The estimates in Lemma~\ref{L:Tsharpq0} also hold for the boundary values of \(\widetilde \T\) on the jump contour. Indeed, by the Sokhotski--Plemelj formula and the standard endpoint estimate for Cauchy transforms whose density vanishes at the endpoint \(-q_0\), the non-tangential estimates for \(\widetilde \T\) extend to its boundary values. Hence, for \(z\) on the cut and sufficiently close to \(-q_0\), all estimates in Lemma~\ref{L:Tsharpq0} remain valid with
\(\widetilde T\) replaced by its boundary values \(\widetilde T_{j \pm}\).
\end{remark}

We now define
\be \label{E:deffs}
s=\tilde{s} \frac{\widetilde T_2}{\widetilde T_3}(-q_0)=\frac{\tilde{s} }{\left( \delta(-q_0) \delta_1(-q_0) \rP_1(-q_0) \right)^3},
\ee which is the parameter that appears in model RH problem~\ref{RHP:model}.
In order to relate $\N^{(2)}$ to $\N^{\X}$, we define
\be
\Y(x,t)
=
\diag\left(
\e^{\theta_1(x,t,-q_0)},
\e^{\theta_2(x,t,-q_0)},
\e^{\theta_3(x,t,-q_0)}
\right)
=
\diag\left(
1,
\e^{-\ii q_0 x-\ii q_0^2t},
1
\right).
\ee
Let
$$
\widetilde{\N}(x,t,z)=\N^{(2)}(x,t,z) \Y(x,t), \qquad z \in \cD_{\eps}.
$$
Then the jump matrix of \(\widetilde{\N}\) is $\widetilde{\V}=\Y^{-1} \V^{(2)} \Y$. Let $\widetilde{\V}|_{\X^{\eps}}$ denotes the restriction of $\widetilde{\V}$ to $\X^{\eps}$. For a fixed $\beta$,  as $t \to \infty$~(this leads to $z \to -q_0$), we have
$$
 r_{2}(z) \frac{\widetilde T_1}{\widetilde T_3}(z) \to  \ii,\qquad
 \frac{r^{*}_{3}(z^*)}{\gamma(z)} \frac{\widetilde T_2}{\widetilde T_3}(z) \to  s, \quad \frac{r^{*}_{1}(z^*) }{\gamma(z)} \frac{\widetilde T_2}{\widetilde T_1}(z)  \to   -\ii s.
$$
This suggests that $\widetilde{\V}|_{\X^{\eps}}(x,t,z(\beta))$ tends to the jump matrix $\V^{\X}(x,t,\beta)$ defined in~\eqref{E:exVjX} for large $t$. The above analysis suggests that we can approximate $\N^{(2)}(x,t,z)$ in $\cD_{\eps}$ by the $3 \times 3$ matrix-valued function $\N^{loc}(x,t,z)$ defined by
\be \label{E:Nloc}
\N^{loc}(x,t,z)= \Y  \N^{\X}(y,s,t;\beta(z)) \Y^{-1}, \quad z \in \cD_{\eps},
\ee
where $\N^{\X}(y,s,t;\beta)$ is the solution of the model RH problem~\ref{RHP:model}.

In order to justify the local approximation of \(\N^{(2)}\) by
\(\N^{loc}\) in \(\cD_\eps\), we first record several estimates which will be
used below.   We recall that the scaled
variable \(y\) remains bounded for \((x,t)\in\cP_+\).
For notational convenience, define
\[
\begin{alignedat}{2}
\Theta_{13}^{\rm mod}(z)&:=-2\ii\Omega_o(z),
\qquad \qquad  \qquad &
\Theta_{31}^{\rm mod}(z)&:=2\ii\Omega_o(z),\\
\Theta_{32}^{\rm mod}(z)&:=\ii y\beta(z)+\ii\Lambda_o(z),
&
\Theta_{12}^{\rm mod}(z)&:=-\ii y\beta(z)+\ii\Lambda_o(z),\\
\Theta_{23}^{\rm mod}(z)&:=-\ii y\beta(z)-\ii\Lambda_o(z),
&
\Theta_{21}^{\rm mod}(z)&:=\ii y\beta(z)-\ii\Lambda_o(z).
\end{alignedat}
\]
Finally, set
\[
\begin{alignedat}{2}
F_{32}(x,t,z)
&:=
\e^{\widehat\theta_{32}(x,t,z)}
-\e^{\Theta_{32}^{\rm mod}(z)},
\qquad&
F_{23}(x,t,z)
&:=
\e^{\widehat\theta_{23}(x,t,z)}
-\e^{\Theta_{23}^{\rm mod}(z)},\\
F_{13}(x,t,z)
&:=
\e^{\theta_{13}(x,t,z)}
-\e^{\Theta_{13}^{\rm mod}(z)},
\qquad&
F_{31}(x,t,z)
&:=
\e^{\theta_{31}(x,t,z)}
-\e^{\Theta_{31}^{\rm mod}(z)},\\
\widehat\theta_{32}(x,t,z)
&:=
\theta_{32}(x,t,z)-\theta_{32}(x,t,-q_0),
\qquad&
\widehat\theta_{23}(x,t,z)
&:=
\theta_{23}(x,t,z)-\theta_{23}(x,t,-q_0).
\end{alignedat}
\]

\begin{lemma}\label{L:mmqm}
For \((x,t)\in\cP_+\) and sufficiently large \(t\), the following estimates
hold:
\begin{subequations}
\begin{align}
\left|\e^{\Theta_{13}^{\rm mod}(z)}\right|
&\le C\e^{-ct|z+q_0|^3},
&& z\in\X_3^\eps\cup\X_4^\eps \cup \X_7^\eps, \label{mm1}\\
\left|\e^{\Theta_{31}^{\rm mod}(z)}\right|
&\le C\e^{-ct|z+q_0|^3},
&& z\in\X_1^\eps\cup\X_2^\eps \cup \X_6^\eps, \label{mm2}\\
\left|\e^{\Theta_{32}^{\rm mod}(z)}\right|
&\le C\e^{-ct|z+q_0|^2},
&& z\in\X_5^\eps\cup\X_7^\eps, \label{mm3}\\
\left|\e^{\Theta_{12}^{\rm mod}(z)}\right|
&\le C\e^{-ct|z+q_0|^2},
&& z\in\X_5^\eps\cup\X_7^\eps, \label{mm3333}\\
\left|\e^{\Theta_{23}^{\rm mod}(z)}\right|
&\le C\e^{-ct|z+q_0|^2},
&& z\in\X_6^\eps\cup\X_8^\eps, \label{mm4}\\
\left|\e^{\Theta_{21}^{\rm mod}(z)}\right|
&\le C\e^{-ct|z+q_0|^2},
&& z\in\X_6^\eps\cup\X_8^\eps. \label{mm4444}
\end{align}
\end{subequations}
\end{lemma}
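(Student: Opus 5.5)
The plan is a direct computation in the scaled variable $\beta=\beta(z)$ of \eqref{E:ybeta}. I take $\Omega_o(z)=y\beta+\frac43\beta^3$ and $\Lambda_o(z)=(\frac43 q_0)^{2/3}t^{1/3}\beta^2$, as suggested by \eqref{E:betabs} and by the expansions of $\theta_{12},\theta_{32}$. Each piece $\X_j^\eps$ is a ray segment issuing from one of the points $\beta_j$ (or $\beta(\rz_j)$) at angle $\pm\pi/6$ or $\pm5\pi/6$, so I would parametrize it as $\beta=\beta_j+r\e^{\ii\phi}$ with $0\le r\le C t^{1/3}\eps$. I would then compute $\re$ of each model exponent explicitly as a polynomial in $r$.

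Three facts are used throughout. First, $y$ is bounded on $\cP_+$. Second, $\beta_2,\beta_3$ are bounded, since $|\rz_{2,3}+q_0|\le Ct^{-1/3}$; moreover $\beta_2$ lies near the positive imaginary axis by the parametrization $\rz_2=-q_0\e^{-\ii\theta}$ in the proof of Lemma~\ref{L:uniform-extension-near-minus-q0}, and $\beta_3=\beta_2^*$. Third, $\beta_1\ge0$ and $\beta_0\le0$ with $|\beta_0|,|\beta_1|\le Ct^{-1/3}$, because $\rz_1=\xi\ge-q_0$, $\rz_0=q_0^2/\xi\le-q_0$ and $|\rz_{0,1}+q_0|\le Ct^{-2/3}$.

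For the cubic estimates \eqref{mm1}--\eqref{mm2}, $\re\Theta^{\rm mod}_{31}=-2\im(y\beta+\frac43\beta^3)$. On $\beta=\beta_2+r\e^{\ii\pi/6}$ the leading term is $-\frac83 r^3\sin(\pi/2)=-\frac83r^3$. The remaining terms are $O(1+r+r^2)$ with constants uniform in $\cP_+$, so
\[
\re\Theta^{\rm mod}_{31}\le-\tfrac{8}{3}r^3+C(1+r^2)\le -c r^3+C .
\]
Since $|\beta|\le r+C$, we get $-cr^3\le -c'|\beta|^3+C$. Finally $|\beta|^3=\frac{3t}{4q_0}|z+q_0|^3$, which yields $\e^{-ct|z+q_0|^3}$ up to a constant factor. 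The rays at angle $5\pi/6$ from $\beta_2$, the ray $\X_6^\eps$ from $\beta_0$, and the rays for $\Theta_{13}^{\rm mod}=-\Theta_{31}^{\rm mod}$ on $\X_3^\eps,\X_4^\eps,\X_7^\eps$ are handled the same way, using the conjugate directions.

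For the quadratic estimates \eqref{mm3}--\eqref{mm4444}, $\re(\pm\ii\Lambda_o)=\mp(\frac43q_0)^{2/3}t^{1/3}\im\beta^2$ and $|\re(\pm\ii y\beta)|\le C|\im\beta|\le Cr$. On $\X_5^\eps$, $\beta=\beta_1+r\e^{\ii\pi/6}$ gives $\im\beta^2=\beta_1r+\frac{\sqrt3}{2}r^2\ge\frac{\sqrt3}2r^2$ because $\beta_1\ge0$. On $\X_7^\eps$, $\beta=\beta_0+r\e^{-5\ii\pi/6}$ gives $\im\beta^2=|\beta_0|r+\frac{\sqrt3}2r^2$ because $\beta_0\le0$. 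Hence
\[
\re\Theta^{\rm mod}_{32},\ \re\Theta^{\rm mod}_{12}\le -ct^{1/3}r^2+Cr\le-\tfrac c2t^{1/3}r^2+Ct^{-1/3}.
\]
To convert, note that $t^{1/3}r^2\asymp t|z-\rz_j|^2$ and $|z+q_0|^2\le2|z-\rz_j|^2+Ct^{-4/3}$. So $-t^{1/3}r^2\le -c\,t|z+q_0|^2+C$, which gives \eqref{mm3} and \eqref{mm3333}. The conjugate rays $\X_6^\eps,\X_8^\eps$ for $\Theta_{23}^{\rm mod},\Theta_{21}^{\rm mod}$ follow by the same computation with the sign of $\im\beta^2$ reversed.

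The main obstacle is bookkeeping rather than analysis. One must match each piece $\X_j^\eps$ of Fig.~\ref{XepsFig} with its base point, its direction, and the correct sign of the exponent, so that the leading $r^3$ or $r^2$ term is indeed negative. One must also check that every lower-order term, arising from $y$ and from the $t$-dependent offsets $\beta_j$, stays bounded uniformly on $\cP_+$, so that it can be absorbed into the constant $C$. The sign conditions $\beta_1\ge0\ge\beta_0$, which hold precisely in $\cP_+$, are what make the cross terms $\beta_jr$ harmless.
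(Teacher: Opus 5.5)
Your proposal is correct and follows essentially the same route as the paper: fix the sign of the cubic (resp.\ quadratic) part of the model phase on each ray, absorb the linear $y\beta$ term using the boundedness of $y$, and convert via $|\beta|^3=\frac{3t}{4q_0}|z+q_0|^3$ and $t^{1/3}|\beta|^2\asymp t|z+q_0|^2$. Your parametrization from the base points $\beta_j$, using $\beta_0\le 0\le\beta_1$ and the boundedness of $\beta_{2,3}$, is a slightly more careful version of the paper's argument: the paper states bounds such as $\re(\ii\Lambda_o)\le -c_0t^{1/3}|\beta|^2$ directly, and near the offset base points these hold only up to an additive constant, whereas your argument accounts for that constant explicitly.
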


\begin{proof}
By the construction of the local contour \(\X^\eps\), the cubic and quadratic
parts of the model phases have the required signs on the corresponding rays.
More precisely, there exists \(c_0>0\) such that
\begin{align*}
\mathrm {Re}\bigl(-\frac{8\ii}{3}\beta^3\bigr)\le -c_0|\beta|^3,
\quad z\in\X_3^\eps\cup\X_4^\eps \cup \X_7^\eps; \qquad
\mathrm {Re} \bigl(\frac{8\ii}{3}\beta^3\bigr)\le -c_0|\beta|^3,
\quad z\in\X_1^\eps\cup\X_2^\eps \cup \X_6^\eps.
\end{align*}
Since \(y\) is bounded in \(\cP_+\), we have
\[
|\mathrm {Re} \bigr(\pm2\ii y\beta \bigr)|
\le C|\beta|
\le \frac{c_0}{2}|\beta|^3+C.
\]
Together with
$
|\beta|^3=\frac{3t}{4q_0}|z+q_0|^3
$,
this proves \eqref{mm1} and \eqref{mm2}.

Similarly,  we have
$$
\mathrm {Re} \bigl(\ii\Lambda_o(z) \bigr)
\le -c_0t^{1/3}|\beta|^2, \quad  z \in \X_5^\eps\cup\X_7^\eps; \qquad
\mathrm {Re} \bigl(-\ii\Lambda_o(z)\bigr)
\le -c_0t^{1/3}|\beta|^2, \quad z\in \X_6^\eps\cup\X_8^\eps.
$$
Using again the boundedness of \(y\), we obtain, for sufficiently large \(t\),
\[
|\mathrm {Re}\bigl(\pm\ii y\beta \bigr)| \le C |\beta|
\le \frac{c_0}{2}t^{1/3}|\beta|^2+C.
\]
Since
\[
t^{1/3}|\beta|^2
=
\left(\frac{3}{4q_0}\right)^{2/3}t|z+q_0|^2,
\]
we obtain~\eqref{mm3}, \eqref{mm3333}, \eqref{mm4}, and \eqref{mm4444}.
\end{proof}

\begin{lemma}\label{L:Phaseest}
If \(\eps>0\) is sufficiently small, then for all sufficiently large \(t\) and
\((x,t)\in\cP_+\), we have
\begin{subequations}
\begin{align}
|\e^{t\Phi_{31}(\xi,z)}|
&\le C\e^{-ct|z+q_0|^3},
&& z\in\X_1^\eps\cup\X_2^\eps\cup\X_6^\eps, \label{rePhiIIIa}\\
|\e^{t\Phi_{13}(\xi,z)}|
&\le C\e^{-ct|z+q_0|^3},
&& z\in\X_3^\eps\cup\X_4^\eps\cup\X_7^\eps, \label{rePhiIIIzj}\\
|\e^{t\Phi_{12}(\xi,z)}|
&\le C\e^{-ct|z+q_0|^2},
&& z\in\X_5^\eps\cup\X_7^\eps, \label{rePhiIIIe}\\
|\e^{t\Phi_{32}(\xi,z)}|
&\le C\e^{-ct|z+q_0|^2},
&& z\in\X_5^\eps\cup\X_7^\eps, \label{rePhiIIIehz}\\
|\e^{t\Phi_{21}(\xi,z)}|
&\le C\e^{-ct|z+q_0|^2},
&& z\in\X_6^\eps\cup\X_8^\eps, \label{rePhiIIIb}\\
|\e^{t\Phi_{23}(\xi,z)}|
&\le C\e^{-ct|z+q_0|^2},
&& z\in\X_6^\eps\cup\X_8^\eps. \label{rePhiIIIbhb}
\end{align}
\end{subequations}
\end{lemma}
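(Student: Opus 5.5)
The plan is to reduce Lemma~\ref{L:Phaseest} to Lemma~\ref{L:mmqm} by controlling the difference between the exact phases $t\Phi_{ij}$ and their model counterparts $\Theta^{\rm mod}_{ij}$ inside $\cD_\eps$. Recall $\theta_{ij}=t\Phi_{ij}$. By \eqref{E:betabs} we have $t\Phi_{13}=\Theta^{\rm mod}_{13}+S_1$, and from the Taylor expansions recorded before \eqref{E:S2} we have $t\Phi_{32}-\theta_{32}(x,t,-q_0)=\Theta^{\rm mod}_{32}+S_2$. The analogous identity for $\theta_{12}$ has no remainder, so $\theta_{12}-\theta_{12}(-q_0)=\Theta^{\rm mod}_{12}$ exactly. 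The constants $\theta_{ij}(x,t,-q_0)$ are purely imaginary, since $\lambda(-q_0)=0$ and $k(-q_0)=-q_0$ is real. Hence they do not affect moduli, and
\[
|\e^{t\Phi_{ij}}|=|\e^{\Theta^{\rm mod}_{ij}}|\,\e^{\re S},
\]
with $S\in\{0,\pm S_1,\pm S_2\}$. The cases $(2,1)$ and $(2,3)$ follow by taking reciprocals and negatives of the cases $(1,2)$ and $(3,2)$. The case $(3,1)$ follows from $(1,3)$ with $S\to -S_1$.

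The key step is therefore to show that $\re S_j$ is absorbed by half of the decay rate supplied by Lemma~\ref{L:mmqm}. I will bound each piece of $S_1$ from \eqref{E:S1} separately.
\begin{itemize}[label={}]
\item \emph{Quadratic term.} Its coefficient is $\frac{x}{2q_0}+t=\frac{t}{q_0}(\xi+q_0)=\mathcal O(t^{1/3})$ in $\cP_+$. So this term is $\mathcal O(t^{1/3}|z+q_0|^2)\le \frac{c}{4}t|z+q_0|^3+C$, by Young's inequality with exponents $3/2$ and $3$ in the variable $t^{1/3}|z+q_0|$.
\item \emph{Cubic term.} Its coefficient is $\frac{x+2tq_0}{q_0^2}=\mathcal O(t^{1/3})$, so the term is bounded by $Ct^{1/3}|z+q_0|^3$, which is negligible against $ct|z+q_0|^3$.
\item \emph{Quartic remainder.} $\tilde S_1=\mathcal O(t|z+q_0|^4)\le C\eps\, t|z+q_0|^3$, which is $\le \frac c4 t|z+q_0|^3$ once $\eps$ is small.
\end{itemize}
This gives \eqref{rePhiIIIa} and \eqref{rePhiIIIzj}.

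For $S_2$ in \eqref{E:S2} the target is only the quadratic rate $ct|z+q_0|^2$.
\begin{itemize}[label={}]
\item The quadratic term $\frac{2t}{q_0}(\xi+q_0)(z+q_0)^2=\mathcal O(t^{1/3}|z+q_0|^2)$ is of lower order.
\item The cubic term has coefficient $\mathcal O(t)$, so it is bounded by $Ct|z+q_0|^3\le C\eps\, t|z+q_0|^2$.
\item The quartic remainder satisfies $\tilde S_2=\mathcal O(\eps^2 t|z+q_0|^2)$.
\end{itemize}
Choosing $\eps$ small, these absorb into $\frac c2 t|z+q_0|^2$, which yields \eqref{rePhiIIIe}--\eqref{rePhiIIIbhb} from \eqref{mm3}--\eqref{mm4444}.

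The main subtlety is bookkeeping rather than analysis. The rays $\X_j^\eps$ emanate from the points $\beta_j$, i.e.\ from $\rz_j$, rather than from $-q_0$, and these points differ from $-q_0$ by $\mathcal O(t^{-1/3})$. Lemma~\ref{L:mmqm} is already stated on these contours, so I only need that $|z+q_0|\le\eps$ there; the additive constants $C$ arising from the regions $|z+q_0|\lesssim t^{-1/3}$ are harmless. One must also confirm that the fourth derivatives $\partial^4\Phi_{13}$ and $\partial^4\Phi_{32}$ are uniformly bounded on $\cD_\eps$. This holds because $0\notin\cD_\eps$ and, in the $z$-variable, $\Phi_{13}=2\ii(\xi-k)\cdot 2\lambda$ is rational with poles only at $0$.
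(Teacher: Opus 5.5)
Your proposal is correct and follows essentially the paper's own proof. Both write each phase as its model phase from Lemma~\ref{L:mmqm} plus the remainder $S_1$ or $S_2$, note that $\theta_{12}$ has no remainder and that the constants $\theta_{ij}(x,t,-q_0)$ are purely imaginary, and absorb the remainder into a fraction of the cubic or quadratic decay after shrinking $\eps$ and taking $t$ large. One minor slip: $4\ii\lambda(\xi-k)$ is $\Phi_{31}$, not $\Phi_{13}$, but the sign is irrelevant to the fourth-derivative bound you use it for.
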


\begin{proof}
We prove \eqref{rePhiIIIzj} and \eqref{rePhiIIIehz}; the remaining estimates are
obtained in the same way.  By \eqref{E:betabs},
\[
\theta_{13}(x,t,z)
=
\Theta_{13}^{\rm mod}(z)+S_1(x,t,z).
\]
For \(z\in\X_3^\eps\cup\X_4^\eps\), Lemma~\ref{L:mmqm} gives, for some
\(c_0>0\),
\[
\left|\e^{\Theta_{13}^{\rm mod}(z)}\right|
\le C\e^{-c_0t|z+q_0|^3}.
\]
Moreover, from \eqref{E:S1} and the definition of \(\cP_+\),
\[
|S_1(x,t,z)|
\le
C\left(
t^{1/3}|z+q_0|^2
+t^{1/3}|z+q_0|^3
+t|z+q_0|^4
\right).
\]
After decreasing \(\eps>0\) if necessary and then taking \(t\) sufficiently large,
for any fixed \(\kappa>0\) we have
\[
|S_1(x,t,z)|
\le
\kappa t|z+q_0|^3+C_\kappa 
\]
for some constant \(C_\kappa>0\) depending only on \(\kappa\).
Choosing \(\kappa>0\) sufficiently small compared with \(c_0\), we obtain
\[
|\e^{\theta_{13}(x,t,z)}|
\le
C\e^{-c_0t|z+q_0|^3+\re S_1(x,t,z)}
\le
C\e^{-ct|z+q_0|^3},
\qquad z\in\X_3^\eps\cup\X_4^\eps .
\]
The same estimate on \(\X_7^\eps\) follows from the same local expansion and
the corresponding sign of \(\Re\Phi_{13}\) on that ray.  Hence
\eqref{rePhiIIIzj} follows.

Next, from the local expansion of \(\theta_{32}\),
\[
\theta_{32}(x,t,z)-\theta_{32}(x,t,-q_0)
=
\Theta_{32}^{\rm mod}(z)+S_2(x,t,z).
\]
For \(z\in\X_5^\eps\cup\X_7^\eps\), Lemma~\ref{L:mmqm} implies
\[
\left|\e^{\Theta_{32}^{\rm mod}(z)}\right|
\le C\e^{-c_1t|z+q_0|^2}, \quad \text{for some \(c_1>0\).}
\]
  From \eqref{E:S2},
\[
|S_2(x,t,z)|
\le
C\left(
t^{1/3}|z+q_0|^2
+t|z+q_0|^3
+t|z+q_0|^4
\right).
\]
Hence, after choosing \(\eps>0\) sufficiently small and then \(t\) sufficiently
large,
\[
|S_2(x,t,z)|
\le
\kappa t|z+q_0|^2.
\]
Taking $\kappa>0$ sufficiently small compared with $c_1$, we conclude that
$$
|\e^{\theta_{32}(x,t,z)}|
\le
C\e^{-c_1t|z+q_0|^2+\re S_2(x,t,z)}
\le
C\e^{-ct|z+q_0|^2},
\quad z\in\X_5^\eps\cup\X_7^\eps.
$$
This proves \eqref{rePhiIIIehz}.
The estimate \eqref{rePhiIIIe} follows in
the same way from
$$
\theta_{12}(x,t,z)-\theta_{12}(x,t,-q_0)
=
\Theta_{12}^{\rm mod}(z),
$$
together with the fact that $\theta_{12}(x,t,-q_0)$ is purely imaginary.
The estimates \eqref{rePhiIIIa}, \eqref{rePhiIIIb}, and
\eqref{rePhiIIIbhb} are analogous.
\end{proof}

\begin{lemma}\label{L:S12}
Let \((x,t)\in\cP_+\) and let \(\eps>0\) be sufficiently small.  Then, as
\(t\to+\infty\),
\begin{subequations}
\begin{align}
\left|
F_{13}(x,t,z)
\right|
&\le Ct^{-1/3},
&& z\in\X_3^\eps\cup\X_4^\eps \cup \X_7^\eps, \label{SIS2est}\\
\left|
F_{31}(x,t,z)
\right|
&\le Ct^{-1/3},
&& z\in\X_1^\eps\cup\X_2^\eps \cup \X_6^\eps, \label{SiS21}\\
|F_{32}(x,t,z)|
&\le Ct^{-1/2},
&& z\in\X_5^\eps\cup\X_7^\eps, \label{SiS22}\\
|F_{23}(x,t,z)|
&\le Ct^{-1/2},
&& z\in\X_6^\eps\cup\X_8^\eps. \label{S1S23}
\end{align}
\end{subequations}
\end{lemma}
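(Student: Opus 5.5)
The basic tool is the elementary inequality
\[
|\e^{a}-\e^{b}|\le |a-b|\,\max\bigl(|\e^{a}|,|\e^{b}|\bigr),
\]
valid for any complex $a,b$ (integrate $\e^{b+\tau(a-b)}$ over $\tau\in[0,1]$ and use convexity of $\tau\mapsto\re(b+\tau(a-b))$). We apply it with $a$ the true phase and $b$ the model phase. The difference $a-b$ is always one of the remainders $S_1$ or $S_2$, or its negative. The exponential factor is then controlled by Lemma~\ref{L:Phaseest} for the true phase and Lemma~\ref{L:mmqm} for the model phase, and both give the same type of decay on the same rays. This leaves us with bounds of the form $|F|\le C|S_j|\,\e^{-ct|z+q_0|^m}$, with $m=3$ or $m=2$, and we finish by maximizing over $|z+q_0|\in[0,\eps]$.

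\emph{Step 1: $F_{13}$ and $F_{31}$.} By \eqref{E:betabs}, $\theta_{13}-\Theta^{\rm mod}_{13}=S_1$ and $\theta_{31}-\Theta^{\rm mod}_{31}=-S_1$. To bound \eqref{E:S1} on $\cP_+$, write $u=|z+q_0|$. Since $x+2tq_0=2t(\xi+q_0)=\mathcal O(t^{1/3})$, the coefficient $\frac{x}{2q_0}+t$ equals $\frac{t}{q_0}(\xi+q_0)=\mathcal O(t^{1/3})$. Hence
\[
|S_1|\le C\bigl(t^{1/3}u^2+t^{1/3}u^3+tu^4\bigr).
\]
Combining this with the factor $\e^{-ctu^3}$ and setting $v=t^{1/3}u$, the three terms become $t^{-1/3}v^2\e^{-cv^3}$, $t^{-2/3}v^3\e^{-cv^3}$ and $t^{-1/3}v^4\e^{-cv^3}$. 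Each of these is $\le Ct^{-1/3}$ uniformly in $v\ge 0$, which gives \eqref{SIS2est} and \eqref{SiS21}.

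\emph{Step 2: $F_{32}$ and $F_{23}$.} Here $\widehat\theta_{32}-\Theta^{\rm mod}_{32}=S_2$ and $\widehat\theta_{23}-\Theta^{\rm mod}_{23}=-S_2$, where we have used $\theta_{23}=-\theta_{32}$. Before estimating, we must reconcile the quadratic coefficient. In the expansion of $\theta_{32}$ the exact quadratic coefficient is $\ii t$, and we need this to equal the model's $\ii(\frac43 q_0)^{2/3}t^{1/3}\beta^2$ term, up to the piece $\frac{2\ii t}{q_0}(\xi+q_0)$ that is put into $S_2$. We will check this against the definition of $\Lambda_o$. Granting it, \eqref{E:S2} gives
\[
|S_2|\le C\bigl(t^{1/3}u^2+tu^3+tu^4\bigr)
\]
(the coefficient $x+2q_0t$ is $\mathcal O(t^{1/3})$, which is smaller than $t$). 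The available decay is now only $\e^{-ctu^2}$, so we set $w=t^{1/2}u$. The terms become $t^{-2/3}w^2\e^{-cw^2}$, $t^{-1/2}w^3\e^{-cw^2}$ and $t^{-1}w^4\e^{-cw^2}$, and the dominant one gives \eqref{SiS22} and \eqref{S1S23}.

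\emph{Main obstacle.} The hard part is the max-factor: we need both $|\e^{\theta}|$ and $|\e^{\Theta^{\rm mod}}|$ to be small on the whole segment between them, uniformly in $(x,t)\in\cP_+$. For $F_{32}$ and $F_{23}$ there is an additional issue, because $\widehat\theta$ subtracts $\theta(-q_0)$, so we need the version of Lemma~\ref{L:Phaseest} for $\widehat\theta_{32}$ rather than $\theta_{32}$. Since $\theta_{32}(x,t,-q_0)$ is purely imaginary, the two moduli coincide, and \eqref{rePhiIIIehz} and \eqref{rePhiIIIbhb} apply directly. On the segment $\X_6^\eps$ (real $\beta$ between $\beta_0$ and $\beta_1$) there is essentially no exponential decay from the cubic term, only $|\cdot|\le C$. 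There we instead use that $\X_6^\eps$ has length $\mathcal O(t^{-1/3})$, so $u\le Ct^{-1/3}$ and $|S_j|$ is directly $\mathcal O(t^{-1/3})$ or better. This handles the $F_{31}$ and $F_{23}$ cases on $\X_6^\eps$.
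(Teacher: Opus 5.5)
Your argument is the paper's proof: the same elementary bound for $|\e^{a}-\e^{b}|$ (your max form is equivalent to the paper's sum form), the same bounds $|S_1|\le C(t^{1/3}u^2+t^{1/3}u^3+tu^4)$ and $|S_2|\le C(t^{1/3}u^2+tu^3+tu^4)$, and the same rescalings $t^{1/3}u$ and $t^{1/2}u$. The one slip is in your final paragraph: in the local contour of Fig.~\ref{XepsFig} the central segment is $\X_9^\eps$, while $\X_6^\eps$ is the ray from $\rz_0$ in direction $\e^{5\pi\ii/6}$ (you conflated it with the model contour's $\X_6=[\beta_0,\beta_1]$), so Lemmas~\ref{L:mmqm} and~\ref{L:Phaseest} already cover it and no special treatment is needed; this matters because the bound $u\le Ct^{-1/3}$ you used there only gives $tu^3=\mathcal O(1)$ in $S_2$, not $\mathcal O(t^{-1/2})$, whereas the actual central segment satisfies $u\le Ct^{-2/3}$.
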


\begin{proof}
We use the elementary inequality
\be \label{E:gine}
|\e^{w_1}-\e^{w_2}|
\le
\left(|\e^{w_1}|+|\e^{w_2}|\right)|w_1-w_2|,
\qquad w_1,w_2\in\C .
\ee
By \eqref{E:betabs}, the difference between the two exponents in
\eqref{SIS2est} is \(S_1(x,t,z)\).  Hence, by \eqref{E:gine} and
Lemmas~\ref{L:mmqm}--\ref{L:Phaseest},
\[
\left|
\e^{\theta_{13}(x,t,z)}
-
\e^{\Theta_{13}^{\rm mod}(z)}
\right|
\le
C|S_1(x,t,z)|\e^{-ct|z+q_0|^3},
\]
for \(z\in\X_3^\eps\cup\X_4^\eps \cup \X_7^\eps\).  Since
\[
|S_1(x,t,z)|
\le
C\left(
t^{1/3}|z+q_0|^2
+t^{1/3}|z+q_0|^3
+t|z+q_0|^4
\right),
\]
setting \(u=t^{1/3}|z+q_0|\) gives
\begin{align*}
|S_1(x,t,z)|\e^{-ct|z+q_0|^3}
\le
C\left(
t^{-1/3}u^2
+t^{-2/3}u^3
+t^{-1/3}u^4
\right)\e^{-cu^3} \le Ct^{-1/3}.
\end{align*}
This proves \eqref{SIS2est}.  The proof of \eqref{SiS21} is identical.

It remains to prove \eqref{SiS22}; the proof of \eqref{S1S23} is the same.
Using the definition of \(F_{32}\), the expansion
\[
\theta_{32}(x,t,z)-\theta_{32}(x,t,-q_0)
=
\Theta_{32}^{\rm mod}(z)+S_2(x,t,z),
\]
and \eqref{E:gine}, we obtain
\[
|F_{32}(x,t,z)|
\le
C|S_2(x,t,z)|\e^{-ct|z+q_0|^2},
\qquad z\in\X_5^\eps\cup\X_7^\eps .
\]
Moreover,
\[
|S_2(x,t,z)|
\le
C\left(
t^{1/3}|z+q_0|^2
+t|z+q_0|^3
+t|z+q_0|^4
\right).
\]
Let \(v=t^{1/2}|z+q_0|\).  Then
\begin{align*}
|S_2(x,t,z)|\e^{-ct|z+q_0|^2}
\le
C\left(
t^{-2/3}v^2
+t^{-1/2}v^3
+t^{-1}v^4
\right)\e^{-cv^2}\le Ct^{-1/2}.
\end{align*}
This proves \eqref{SiS22}, and hence the lemma.
\end{proof}

\begin{remark}\label{R:fj}
The corresponding $L^1$-bounds can be obtained in the same way. More
precisely, as $t\to+\infty$,
\be \label{E:Relyy}
\begin{aligned}
\int_{\X_3^\eps\cup\X_4^\eps} \left|F_{13}(x,t,z)\right| \,d|z|
&\le Ct^{-2/3},&
\int_{\X_1^\eps\cup\X_2^\eps}\left| F_{31}(x,t,z) \right| \,d|z|
&\le Ct^{-2/3},\\
\int_{\X_5^\eps\cup\X_7^\eps} \left| F_{32}(x,t,z) \right| \,d|z|
&\le Ct^{-1},&
\int_{\X_6^\eps\cup\X_8^\eps} \left| F_{23}(x,t,z) \right| \,d|z|
&\le Ct^{-1}.
\end{aligned}
\ee
In particular, all four integrals are \(\mathcal O(t^{-2/3})\).
\end{remark}

We are now ready to show that \(\N^{(2)}\) can be approximated by
\(\N^{loc}\) in \(\cD_\eps\).

\begin{lemma}\label{L:Nloc}
For each \((x,t)\), the function \(\N^{loc}(x,t,z)\) defined in~\eqref{E:Nloc}
is analytic and bounded for \(z\in\cD_\eps\setminus\X^\eps\).  Across
\(\X^\eps\), it satisfies
\[
\N^{loc}_+(x,t,z)=\N^{loc}_-(x,t,z)\V^{loc}(x,t,z).
\]
Moreover, for sufficiently large \(t\),
\begin{align}\label{E:estVloc}
\begin{cases}
\|\V^{(2)}-\V^{loc}\|_{L^\infty(\X^\eps)}
\le C t^{-1/3}\log t,\\[1mm]
\|\V^{(2)}-\V^{loc}\|_{L^1(\X^\eps)}
\le C t^{-2/3}\log t,
\end{cases}
\qquad (x,t)\in\cP_+ .
\end{align}
Furthermore, as \(t\to+\infty\),
\begin{align}
\|(\N^{loc})^{-1}-\I\|_{L^\infty(\partial\cD_\eps)}
&=\mathcal O(t^{-1/3}), \label{E:bzsy}\\
(\N^{loc})^{-1}(x,t,z)-\I
&=
-\frac{
\Y\bigl(\cE^\X_1+\M^P_1\bigr)\Y^{-1}
}{
\left(\frac{3}{4q_0}\right)^{1/3}t^{1/3}(z+q_0)
}
+\mathcal O(t^{-2/3}),
\quad z\in\partial\cD_\eps, \label{E:bzsy1}
\end{align}
where \(\cE^\X_1\) and \(\M^P_1\) are given by~\eqref{E:estcEX1} and
\eqref{m1Pexpression}, respectively.
\end{lemma}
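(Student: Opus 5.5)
The analyticity, boundedness and jump relation for \(\N^{loc}\) follow directly from Theorem~\ref{Th:asXi} and the fact that \(z\mapsto\beta(z)\) is a biholomorphism of \(\cD_\eps\) onto a disk of radius \((3t/(4q_0))^{1/3}\eps\). Under this map the local contour \(\X^\eps\) is sent into \(\X\) (after the contour deformation placing \(\beta_2,\beta_3\) appropriately). Hence \(\N^{loc}\) has jump
\[
\V^{loc}=\Y\,\V^{\X}(y,s,t;\beta(z))\,\Y^{-1}
\]
on \(\X^\eps\). Boundedness follows since \(\N^{\X}\) is bounded near \(\beta=0\) and tends to \(\I\) at infinity.

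The main work is the comparison \eqref{E:estVloc}, which I would do entry by entry on each \(\X^\eps_j\). On \(\X_1^\eps\cup\X_2^\eps\), the difference \(\V^{(2)}-\V^{loc}\) has only a \((3,1)\) entry. It is split as
\[
\Bigl(r_2\tfrac{\widetilde T_1}{\widetilde T_3}-\ii\Bigr)\e^{\theta_{31}}+\ii F_{31}.
\]
The first term is bounded by \(C|z+q_0|\,|\log|z+q_0||\,\e^{-ct|z+q_0|^3}\), using \eqref{E:gjT3}, the Lipschitz behaviour of \(r_2\) at \(-q_0\) (analyticity in \(S_\varepsilon\)), and \eqref{rePhiIIIa}. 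With \(u=t^{1/3}|z+q_0|\) this gives \(t^{-1/3}\log t\) in \(L^\infty\) and \(t^{-2/3}\log t\) in \(L^1\). The second term is handled by Lemma~\ref{L:S12} and Remark~\ref{R:fj}. The rays \(\X_3^\eps,\X_4^\eps\) are treated the same way using \eqref{rePhiIIIzj}.

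On \(\X_5^\eps,\dots,\X_8^\eps\) the relevant entries carry \(\frac{r_3^*}{\gamma}\frac{\widetilde T_2}{\widetilde T_3}-s\) or \(\frac{r_1^*}{\gamma}\frac{\widetilde T_2}{\widetilde T_1}+\ii s\), which are bounded by \(C|z+q_0||\log|z+q_0||\) via \eqref{E:gjT2}--\eqref{E:gjT}. These are multiplied by the quadratic-decay exponentials of Lemma~\ref{L:Phaseest}, giving \(t^{-1/2}\log t\) and \(t^{-1}\log t\). The error from replacing the true exponents by \(\Theta^{\rm mod}\) is controlled by \(F_{32},F_{23}\) and the analogous exact identity for \(\theta_{12}\).

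Entries that appear in \(\V^{(2)}\) but not in \(\V^{\X}\) must also be absorbed. Examples are \(R_3\)-type and \((r_2-R_2)\)-type terms on \(\X_6^\eps,\X_7^\eps\), and the \((1,3)\) entries on \(\X_7^\eps\). These vanish at \(-q_0\) to first order because \(r_1,r_3\to0\) there by \eqref{E:asyratzd}, so they are bounded by \(C|z+q_0|\) times a decaying exponential and obey the same estimates.

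The segment \(\X_9^\eps\) is the most delicate piece and the main obstacle. There one must use the boundary-value estimates of Remark~\ref{R:Tsharpboundary} together with the fact that the segment has length \(O(t^{-1/3})\), so the \(L^1\) bound gains a factor \(t^{-1/3}\). I expect this, and correctly matching the \(\widetilde\T_\pm\) conjugation to the \(\X_6\) jump of the model problem, to be the step requiring the most care. The logarithm arises solely from Lemma~\ref{L:dxz}.

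Finally, for \eqref{E:bzsy}--\eqref{E:bzsy1}: on \(\partial\cD_\eps\) we have \(|\beta|\asymp t^{1/3}\). I would apply the expansion \eqref{E:asyPhi}, uniformly in \(\arg\beta\), so that
\[
(\N^{\X})^{-1}=\I-\beta^{-1}\bigl(\M^P_1+\cE^{\X}_1\bigr)+O(\beta^{-2}).
\]
Conjugating by \(\Y\), which is unimodular since \(\theta_2(-q_0)\) is purely imaginary, and substituting \(\beta=(3t/(4q_0))^{1/3}(z+q_0)\) gives \eqref{E:bzsy1} with error \(O(t^{-2/3})\). Because \(\M^P_1\) and \(\cE^{\X}_1\) are bounded, by \eqref{E:estcEX1} and the boundedness of \(y\), this yields \eqref{E:bzsy}.
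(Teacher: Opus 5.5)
Your proposal is correct and follows the paper's proof essentially step for step. The paper uses the same entrywise splitting into two parts: a coefficient error, controlled by Lemma~\ref{L:Tsharpq0} and the vanishing of $r_1,r_3$ at $-q_0$, multiplied by a decaying exponential; and an $F_{ij}$-type phase error, handled by Lemmas~\ref{L:Phaseest}, \ref{L:S12} and Remark~\ref{R:fj}. It then applies the large-$\beta$ expansion \eqref{E:asyPhi} on $\partial\cD_\eps$, as you do.

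One correction concerns the central segment. In the $z$-plane, $\X_9^\eps$ has length $\mathcal O(t^{-2/3})$, not $t^{-1/3}$, and $|z+q_0|\le Ct^{-2/3}$ on it. This segment is therefore the easiest piece rather than the main obstacle: it contributes only $\mathcal O(t^{-2/3}\log t)$ in $L^\infty$ and $\mathcal O(t^{-4/3}\log t)$ in $L^1$.
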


\begin{proof}
The analyticity and boundedness of \(\N^{loc}\) follow directly from the
definition~\eqref{E:Nloc} and from the corresponding properties of the model
solution \(\N^\X\).  The jump relation is inherited from the model RH problem.
Indeed, on \(\X^\eps\),
\[
\V^{(2)}-\V^{loc}
=
\Y\bigl(\widetilde\V-\V^\X\bigr)\Y^{-1}.
\]
Since \(\Y\) is uniformly bounded, it suffices to estimate
\(\widetilde\V-\V^\X\).  We give the details on the representative contours
\(\X_1^\eps\), \(\X_7^\eps\), and \(\X_9^\eps\); the remaining contours are
treated in the same way.

\medskip
\noindent{\bf The contour \(\X_1^\eps\).}
A direct calculation gives
\[
\widetilde\V-\V^\X
=
\begin{pmatrix}
0&0&0\\
0&0&0\\
f(x,t,z)&0&0
\end{pmatrix},
\qquad z\in\X_1^\eps,
\]
where
\[
f(x,t,z)
=
r_2(z)\frac{\widetilde T_1}{\widetilde T_3}(z)\e^{\theta_{31}(x,t,z)}
-\ii\e^{\Theta_{31}^{\rm mod}(z)} .
\]
We decompose
\[
\begin{aligned}
f(x,t,z)
&=
\bigl(
r_2(z)\frac{\widetilde T_1}{\widetilde T_3}(z)-\ii
\bigr)\e^{\theta_{31}(x,t,z)}
+\ii\left(\e^{\theta_{31}(x,t,z)}-\e^{\Theta_{31}^{\rm mod}(z)}\right).
\end{aligned}
\]
By the behavior of \(r_2\) at \(-q_0\) and Lemma~\ref{L:Tsharpq0},
\[
\left|
r_2(z)\frac{\widetilde T_1}{\widetilde T_3}(z)-\ii
\right|
\le
C |z+q_0|\,|\log|z+q_0|| .
\]
Using \eqref{rePhiIIIa} and \eqref{SiS21}, we obtain
\[
|f(x,t,z)|
\le
C |z+q_0|\,|\log |z+q_0||\,\e^{-ct|z+q_0|^3}
+Ct^{-1/3}.
\]
Hence
$
\|f\|_{L^\infty(\X_1^\eps)}
\le
C t^{-1/3}\log t 
$.
Moreover, by Remark~\ref{R:fj},
$$
\begin{aligned}
\|f\|_{L^1(\X_1^\eps)}
\le
C\int_0^\infty \tau |\log \tau|\,\e^{-ct\tau^3}\,d\tau
+Ct^{-2/3}  
\le
Ct^{-2/3}\log t .
\end{aligned}
$$
Thus \eqref{E:estVloc} holds on \(\X_1^\eps\).

\medskip
\noindent{\bf The contour \(\X_7^\eps\).}
For \(z\in\X_7^\eps\), we have
\[
\widetilde\V-\V^\X
=
\begin{pmatrix}
0&g_1(x,t,z)&g_2(x,t,z)\\
0&0&0\\
0&g_3(x,t,z)&0
\end{pmatrix},
\]
where
\begin{align*}
g_1(x,t,z)
&=
\tilde g_1(z)\frac{\widetilde T_2}{\widetilde T_1}(z)
\e^{\Theta_{12}^{\rm mod}(z)}
-\ii s\,\e^{\Theta_{12}^{\rm mod}(z)},\\
\tilde g_1(x,t,z)
&=
-\frac{1}{\gamma(z)}r^*_1(z^*)
+\frac{1}{\gamma^2(z)} r_1^*(z^*)r_3(z)R_3^*(z^*),\\
g_2(x,t,z)
&=
(r_2^*(z^*)-R_2^*(z^*))
\frac{\widetilde T_3}{\widetilde T_1}(z)\e^{\theta_{13}(x,t,z)},\\
g_3(x,t,z)
&=
-\frac{1}{\gamma(z)}R_3^*(z^*)
\frac{\widetilde T_2}{\widetilde T_3}(z)
\e^{\theta_{32}(x,t,z)-\theta_{32}(x,t,-q_0)}
+s\,\e^{\Theta_{32}^{\rm mod}(z)}.
\end{align*}
Here and below the diagonal conjugation by \(\widetilde \T\) is harmless, since the
ratios of its diagonal entries are uniformly bounded on \(\X^\eps\).

By the definitions of \(s\), \(\Theta_{12}^{\rm mod}\), and the estimates in
Lemma~\ref{L:Tsharpq0}, the coefficient in \(g_1\) satisfies
\[
\biggl|
\tilde g_1(z)\frac{\widetilde T_2}{\widetilde T_1}(z)-\ii s
\biggr|
\le
C|z+q_0|\,|\log |z+q_0|| .
\]
Together with \eqref{mm3}, this gives
\[
|g_1(x,t,z)|
\le
C|z+q_0|\,|\log|z+q_0||\,\e^{-ct|z+q_0|^2}.
\]
Consequently,
\[
\|g_1\|_{L^\infty(\X_7^\eps)}
\le Ct^{-1/2}\log  t,
\qquad
\|g_1\|_{L^1(\X_7^\eps)}
\le Ct^{-1}\log t .
\]

For \(g_3\), we write
\[
\begin{aligned}
g_3(x,t,z)
&=
\biggl(
-\frac{1}{\gamma(z)}R_3^*(z^*)
\frac{\widetilde T_2}{\widetilde T_3}(z)
+s
\biggr)\e^{\Theta_{32}^{\rm mod}(z)}  
-\frac{1}{\gamma(z)}R_3^*(z^*)
\frac{\widetilde T_2}{\widetilde T_3}(z)F_{32}(x,t,z).
\end{aligned}
\]
Using Lemma~\ref{L:Tsharpq0}, \eqref{mm3}, and \eqref{SiS22}, we get
\[
|g_3(x,t,z)|
\le
C|z+q_0|\,|\log|z+q_0||\,\e^{-ct|z+q_0|^2}
+Ct^{-1/2}.
\]
Thus
\[
\|g_3\|_{L^\infty(\X_7^\eps)}
\le Ct^{-1/2}\log t,
\]
and, by Remark~\ref{R:fj},
\[
\|g_3\|_{L^1(\X_7^\eps)}
\le
C\int_0^\infty \tau|\log \tau|\e^{-ct\tau^2}\,d\tau
+Ct^{-1}
\le Ct^{-1}\log t .
\]

Finally, for \(g_2\), the local behavior of \(r_2-R_2\), Lemma~\ref{L:Tsharpq0},
and \eqref{rePhiIIIzj} imply
\[
|g_2(x,t,z)|
\le
C|z+q_0|\,|\log|z+q_0||\,\e^{-ct|z+q_0|^3}.
\]
Therefore,
\[
\|g_2\|_{L^\infty(\X_7^\eps)}
\le Ct^{-1/3}\log t,
\qquad
\|g_2\|_{L^1(\X_7^\eps)}
\le Ct^{-2/3}\log t .
\]
Combining the estimates for \(g_1,g_2,g_3\), we obtain \eqref{E:estVloc} on
\(\X_7^\eps\).

\medskip
\noindent{\bf The contour \(\X_9^\eps\).}
On the central segment \(\X_9^\eps\), a direct calculation gives
\[
\widetilde\V-\V^\X
=
\begin{pmatrix}
h_3(x,t,z)
& h_1(x,t,z)&0\\[2mm]
r_1(z)\frac{\widetilde T_{1+} }{\widetilde T_{2-}}(z)
\e^{\Theta_{21}^{\rm mod}(z)}
&
h_4(x,t,z)
&
-r_3(z)\frac{\widetilde T_{3+}}{\widetilde T_{2-}}(z)
\e^{\Theta_{23}^{\rm mod}(z)}
\\[2mm]
0&h_2(x,t,z)&\frac{\widetilde T_{3+}}{\widetilde T_{3-}}(z)
\end{pmatrix},
\]
where
\begin{align*}
h_1(x,t,z)
&=
-\frac{1}{\gamma(z)}r_1^*(z)
\frac{\widetilde T_{2+}}{\widetilde T_{1-}}(z)\e^{\Theta_{12}^{\rm mod}(z)}
-\ii s\,\e^{\Theta_{12}^{\rm mod}(z)},\\
h_2(x,t,z)
&=
-\frac{1}{\gamma(z)}r_3^*(z)
\frac{\widetilde T_{2+}}{\widetilde T_{3-}}(z)\e^{\Theta_{32}^{\rm mod}(z)}
+s\,\e^{\Theta_{32}^{\rm mod}(z)},\\
h_3(x,t,z)
&=\bigl(1-\frac{1}{\gamma(z)}|r_1(z)|^2 \bigr) \frac{\widetilde T_{1+}}{\widetilde T_{1-}}(z)-1,\\
h_4(x,t,z)
&=\bigl(1+\frac{1}{\gamma(z)}|r_3(z)|^2 \bigr) \frac{\widetilde T_{2+}}{\widetilde T_{2-}}(z)-1.
\end{align*}
By the jump relations for \(\widetilde \T\) on \(\X_9^\eps\), together with Lemma~\ref{L:Tsharpq0} and Remark~\ref{R:Tsharpboundary}, and using the boundedness of the model exponentials on \(\X_9^\eps\), we obtain
\[
 \sum_{j=1}^4 |h_j(x,t,z)|
\le
C|z+q_0|\,|\log|z+q_0|| .
\]
The remaining nonzero entries are estimated in the same way, using the fact
that the corresponding coefficients vanish at \(z=-q_0\).  Moreover, on
\(\X_9^\eps\) we have
\[
|z+q_0|\le C|\xi+q_0|\le Ct^{-2/3},
\qquad (x,t)\in\cP_+.
\]
Thus
\[
\|\widetilde\V-\V^\X\|_{L^\infty(\X_9^\eps)}
\le Ct^{-2/3}\log t.
\]
Since the length of \(\X_9^\eps\) is \(\mathcal O(t^{-2/3})\), it also follows that
\[
\|\widetilde\V-\V^\X\|_{L^1(\X_9^\eps)}
\le Ct^{-4/3}\log t.
\]
In particular, \eqref{E:estVloc} holds on \(\X_9^\eps\).  The estimates on the
remaining contours are analogous.  This proves \eqref{E:estVloc}.

It remains to prove \eqref{E:bzsy} and \eqref{E:bzsy1}.  If
\(z\in\partial\cD_\eps\), then
\[
|\beta(z)|
=
\left(\frac{3t}{4q_0}\right)^{1/3}|z+q_0|
\to\infty,
\qquad t\to+\infty.
\]
Hence the large-\(\beta\) expansion~\eqref{E:asyPhi} gives, uniformly for
\(z\in\partial\cD_\eps\) and \((x,t)\in\cP_+\),
\[
\N^\X(y,s,t;\beta(z))
=
\I+
\frac{\cE^\X_1+\M^P_1}
{\left(\frac{3}{4q_0}\right)^{1/3}t^{1/3}(z+q_0)}
+\mathcal O(t^{-2/3}).
\]
Since
$
\N^{loc}=\Y\N^\X \Y^{-1}
$
and \(\Y\), \(\cE^\X_1\), and \(\M^P_1\) are uniformly bounded, we immediately
obtain \eqref{E:bzsy}.  Taking the inverse of the last expansion yields
\[
(\N^{loc})^{-1}(x,t,z)
=
\I-
\frac{
\Y(\cE^\X_1+\M^P_1)\Y^{-1}
}{
\left(\frac{3}{4q_0}\right)^{1/3}t^{1/3}(z+q_0)
}
+\mathcal O(t^{-2/3}),
\]
uniformly for \(z\in\partial\cD_\eps\), which proves \eqref{E:bzsy1}.
\end{proof}
\subsection{Final transformation  and the small norm RH problem}
We define the final transformation to obtain a small-norm RH problem as follows:
\be \label{E:defE}
\cE(x,t,z)=\begin{cases}
\N^{(2)}(x,t,z), & z\in \C \setminus \cD_{\eps},\\
\N^{(2)}(x,t,z)(\N^{loc})^{-1}(x,t,z), &z \in \cD_{\eps}.
\end{cases}
\ee
We will show that $\cE(x,t,z)$ is close to $\I$ for large $t$ and $(x,t) \in \cP_+$.
Let
$$
\Sigma^{\xcE}=\Sigma^{(2)} \cup \partial \cD_{\eps},
$$  and define the matrix-valued function $\V^{\xcE}(x,t,z)$ for $z \in \Sigma^{\xcE}$ as follows:
\be \label{E:VEex}
\V^{\xcE}(x,t,z)=\begin{cases}
 \V^{(2)},  & z \in \Sigma^{\xcE} \setminus \overline{\cD_{\eps}},\\
(\N^{loc})^{-1}, & z\in \partial \cD_{\eps},\\
 \N^{loc}_-\V^{(7)} (\N^{loc}_+)^{-1},& z \in \X^{\eps}.
\end{cases}
\ee
The function $\cE(x,t,z)$ satisfies the following RH problem.
\begin{RHP}\label{rhp:E}
Find a $3 \times 3$ matrix-valued function $\cE(x,t,z)$ with the following properties:
\bi
\item $\cE(x,t,\cdot) : \mathbb{C}\setminus \Sigma^{\xcE}  \to \mathbb{C}^{3 \times 3}$ is analytic.
\item $\cE_+(x,t,z)=\cE_-(x,t,z) \V^{\xcE}(z), \qquad z \in \Sigma^{\xcE}.$
\item  $\cE(x,t,z)$ admits the following asymptotic behavior
$$  \cE(x,t,z)=\I+\mathcal{O}(\frac{1}{z}), \qquad z \to \infty.$$
\ei
\end{RHP}

\begin{lemma}\label{L:estWE}
Let $\w^{\xcE}=\V^{\xcE}-\I$. The following estimates hold uniformly for large $t$ and $(x,t) \in \cP_+$:
\begin{align}
&\|\w^{\xcE} \|_{  L^{\infty}(\Sigma^{\xcE} \setminus \overline{\cD_{\eps}} )} \leq C t^{-1/2}, \label{E:estwE1bc}\\
&\|\w^{\xcE} \|_{L^1 (\Sigma^{\xcE} \setminus \overline{\cD_{\eps}} )} \leq C t^{-1}, \label{E:estwE1bc1}\\
& \|\w^{\xcE} \|_{L^1 \cap L^{\infty}(\partial \cD_{\eps})} \leq C t^{-1/3},\label{E:estwE2} \\
& \|\w^{\xcE} \|_{L^1 (\X^{\epsilon } )} \leq C t^{-2/3}\log t,\label{E:estwE3}\\
& \|\w^{\xcE} \|_{L^{\infty} (\X^{\epsilon })} \leq C t^{-1/3}\log t. \label{E:estwE4}
\end{align}
\end{lemma}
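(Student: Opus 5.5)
The plan is to estimate $\w^{\xcE}$ separately on the three pieces of $\Sigma^{\xcE}$ in \eqref{E:VEex}: the part outside $\overline{\cD_\eps}$, the circle $\partial\cD_\eps$, and the local contour $\X^\eps$. On each piece the bound will be inherited from estimates already proved. Throughout, $\V^{(7)}$ in \eqref{E:VEex} is read as $\V^{(2)}$.

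\emph{Outside the disk.} The set $\Sigma^{\xcE}\setminus\overline{\cD_\eps}$ consists of $\Sigma^{(1)}\setminus\overline{\cD_\eps}$ together with the circles $\partial\cD_{\rm sol}$, and there $\w^{\xcE}=\V^{(2)}-\I$. On $\Sigma^{(1)}\setminus\overline{\cD_\eps}$ we have $\V^{(2)}=\V^{(1)}$, so Lemma~\ref{L:yzx} gives the $L^\infty$ bound $Ct^{-1/2}$ and the $L^1$ bound $Ct^{-1}$. On $\partial\cD_{\rm sol}$, Lemma~\ref{L:third-transformation} gives $\mathcal O(\e^{-ct})$ in $L^1\cap L^\infty$. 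Combining these proves \eqref{E:estwE1bc} and \eqref{E:estwE1bc1}.

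\emph{The circle.} On $\partial\cD_\eps$ we have $\w^{\xcE}=(\N^{loc})^{-1}-\I$. Estimate \eqref{E:bzsy} gives the $L^\infty$ bound $Ct^{-1/3}$. Since $\partial\cD_\eps$ has finite length $2\pi\eps$, the same bound holds in $L^1$, which proves \eqref{E:estwE2}.

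\emph{The local contour.} This is the step that needs care. On $\X^\eps$, the jump of $\N^{loc}$ is $\V^{loc}$, so
\[
\w^{\xcE}=\N^{loc}_-\bigl(\V^{(2)}-\V^{loc}\bigr)(\N^{loc}_+)^{-1}.
\]
It then suffices to show that $\N^{loc}_\pm$ and $(\N^{loc}_\pm)^{-1}$ are bounded on $\X^\eps$, uniformly in $t$ large and $(x,t)\in\cP_+$, and to apply \eqref{E:estVloc}.

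I expect this uniform boundedness to be the main obstacle. Lemma~\ref{L:Nloc} only states boundedness for each fixed $(x,t)$, and the contour passes near the coalescing points $\beta_0,\beta_1$. To obtain it I would use the factorization from Section~\ref{S:prof1}: $\N^\X=\cE^\X\M^P$ outside $\cD_\eps(0)$ and $\N^\X=\cE^\X\widetilde\M^E\M^P$ inside, with each factor controlled separately.
\begin{itemize}
\item $\M^P$ and $(\M^P)^{-1}$ have bounded boundary values, uniformly for $y$ in a compact set. This holds because the Hastings--McLeod solution has no poles on $\R$.
\item $\widetilde\M^E$ is an explicit error-function solution and is bounded.
\item For $\cE^\X$, the bound $\|\hat{\bbu}-\I\|_{L^2}\le Ct^{-1/6}$ from \eqref{E:xljc} gives only $L^2$ control of the boundary values. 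For pointwise control I would deform the jump contour $\hat E$ slightly off $\X^\eps$ and apply the Cauchy integral representation \eqref{E:cEXc}, which then yields uniform boundedness of $\cE^\X_\pm$ on $\X^\eps$.
\end{itemize}
The determinants of all these factors are $1$, so the inverses are bounded as well. Conjugation by $\Y$ is harmless because $\Y$ is unitary diagonal.

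With this boundedness in hand, \eqref{E:estVloc} gives $\|\w^{\xcE}\|_{L^1(\X^\eps)}\le Ct^{-2/3}\log t$ and $\|\w^{\xcE}\|_{L^\infty(\X^\eps)}\le Ct^{-1/3}\log t$. These are \eqref{E:estwE3} and \eqref{E:estwE4}, which completes the proof.
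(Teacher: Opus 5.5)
Your proposal is correct and follows the same route as the paper. The paper's three ingredients are: the bounds outside the disk from \eqref{E:gjV1} (you also handle $\partial\cD_{\rm sol}$ via Lemma~\ref{L:third-transformation}, which the paper leaves implicit); the circle bound from \eqref{E:bzsy}; and the bounds on $\X^\eps$ from $\w^{\xcE}=\N^{loc}_-(\V^{(2)}-\V^{loc})(\N^{loc}_+)^{-1}$ together with \eqref{E:estVloc}. The only difference is that the paper simply asserts the boundedness of $\N^{loc}$, whereas you sketch why it is uniform in $(x,t)\in\cP_+$. A minor slip in that sketch: by \eqref{E:Ebeta} the factorization inside $\cD_\eps(0)$ is $\N^\X=\cE^\X\M^P\widetilde{\M}^E$, not $\cE^\X\widetilde{\M}^E\M^P$, though this does not affect the boundedness argument.
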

\begin{proof}
It follows from~\eqref{E:gjV1} and~\eqref{E:VEex}  that~\eqref{E:estwE1bc} and~\eqref{E:estwE1bc1} hold.  From~\eqref{E:bzsy}, we obtain~\eqref{E:estwE2} directly.  For $z \in \X^{\eps}$, we have
$$
\w^{\xcE}=\N^{loc}_- \left( \V^{(7)}-\V^{loc}  \right)(\N^{loc}_+)^{-1}.
$$
Based on the boundedness of $\N^{loc}$ in $\cD_{\eps}$ and estimate~\eqref{E:estVloc},~\eqref{E:estwE4} follows immediately.
\end{proof}

The estimates in Lemma~\ref{L:estWE} show that
\begin{align*}
\begin{cases}
\|\w^{\xcE} \|_{L^1 (\Sigma^{\xcE})} \leq C t^{-1/3}\\
\|\w^{\xcE} \|_{L^{\infty} (\Sigma^{\xcE})} \leq C t^{-1/3}\log t
\end{cases} \quad (x,t) \in \cP_+, \ \ t>T_0.
\end{align*}
Thus by employing  the general inequality $\|f \|_{L^p} \leq \| f\|_{L^1}^{\frac{1}{p}} \|f \|_{L^{\infty}}^{\frac{p-1}{p}}$, we  immediately get
\begin{align}\label{E:wELp}
\|\w^{\xcE} \|_{L^p (\Sigma^{\xcE})} \leq C t^{-1/3}(\log t)^{(p-1)/p}, \quad \ \text{for} \ (x,t) \in \cP_+ \ \text{and large $t$}.
\end{align}
For the  contour $\Sigma^{\xcE}$ and a function $\h(z) \in L^2(\Sigma^{\xcE})$,  we define the Cauchy transform $\Ca(\h)(z)$ associated with $\Sigma^{\xcE}$ by
$$
\mathcal{C}(\h)(z) := \frac{1}{2\pi \ii}\int_{\Sigma^{\xcE}}\frac{\h(z')dz'}{z'-z}.
$$
Furthermore, we define the operator $\Ca_{\w^{\xcE}}$ by $\Ca_{\w^{\xcE}}(\h)=\Ca_-(\h \w^{\xcE})$. From the preceding analysis, we have known that $\|\w^{\xcE}\|_{L^2(\Sigma^{\xcE})} \to 0$ as $t \to \infty$. Consequently, there exists a $T_{*}>0$ such that the operator $I-\Ca_{\w^{\xcE}}$ is invertible whenever $t>T_{*}$ and $(x,t) \in \cP_+$. Therefore, we can  define a function $\bu^{\xcE}(x,t,z)$ for $z \in \Sigma^{\xcE}$ and $t>T_*$ by
\begin{align}\label{E:uE}
\bu^{\xcE}=\I+(I-\Ca_{\w^{\xcE}})^{-1}\Ca_{\w^{\xcE}}\I \ \in \I + L^2(\Sigma^{\xcE}).
\end{align}
We  need the estimate of $\|\bu^{\xcE} -\I\|_{L^2(\Sigma^{\xcE})}$. From \eqref{E:wELp} and \eqref{E:uE}, it follows that
\begin{equation}\label{E:yxjs11}
\begin{aligned}
\|\bu^{\xcE} - \I\|_{L^2(\Sigma^{\xcE})}&\leq \|(I-\Ca_{\w^{\xcE}})^{-1}\Ca_{\w^{\xcE}}\I \|_{L^2(\Sigma^{\xcE})} \leq \sum_{j=0}^{\infty}\| \Ca_{\w^{\xcE}}\|^j_{\mathcal{B}(L^2(\Sigma^{\xcE}))}\|\Ca_{\w^{\xcE}} \I \|_{L^2(\Sigma^{\xcE})}\\
&\leq  \frac{\|\Ca_- \|_{\mathcal{B}(L^2(\Sigma^{\xcE}))} \| \w^{\xcE}\|_{L^2(\Sigma^{\xcE})}}{1-\|\Ca_- \|_{\mathcal{B}(L^2(\Sigma^{\xcE}))} \|\w^{\xcE} \|_{L^{\infty}(\Sigma^{\xcE})}} \leq Ct^{-1/3}(\log t)^{1/2}, \qquad t>T_*.
\end{aligned}
\end{equation}
Then  $\cE(x,t,z)$ can be expressed as
\be \label{E:Ejfbs}
\cE(x,t,z)=\I+\frac{1}{2\pi \ii}\int_{\Sigma^{\xcE}}\frac{\bu^{\xcE}(z') \w^{\xcE} (z')dz'}{z'-z}, \qquad z \in \C \setminus \Sigma^{\xcE}.
\ee
From~\eqref{E:Ejfbs}, we know the following nontangential limit is well-defined:
\begin{align}\label{E:L}
\bL(x,t):=\lim^{\angle}_{z \to \infty}z (\cE(x,t,z)-\I)=-\frac{1}{2\pi \ii}\int_{\Sigma^{\xcE}}\bu^{\xcE} (x,t,\zeta)\w^{\xcE}(x,t,\zeta) \mathrm{d}\zeta.
\end{align}
\begin{lemma}\label{L:estofL}
As $t \to \infty$,
\begin{align}\label{E:estofL-1}
\bL(x,t)=-\frac{1}{2\pi \ii}\int_{\partial \mathcal{D}_{\eps}} \w^{\xcE}(x,t,\zeta) \mathrm{d}\zeta+\mathcal{O}(t^{-2/3}\log t).
\end{align}
\end{lemma}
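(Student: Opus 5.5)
We decompose \(\bL\) exactly as in the proof of Lemma~\ref{L:estofcEX}, namely
\[
\bL(x,t)=-\frac{1}{2\pi\ii}\int_{\partial\cD_\eps}\w^{\xcE}(\zeta)\,\mathrm d\zeta+\mathbf Q_1+\mathbf Q_2+\mathbf Q_3,
\]
where
\[
\mathbf Q_1=-\frac{1}{2\pi\ii}\int_{\Sigma^{\xcE}\setminus\overline{\cD_\eps}}\w^{\xcE}\,\mathrm d\zeta,\qquad
\mathbf Q_2=-\frac{1}{2\pi\ii}\int_{\X^\eps}\w^{\xcE}\,\mathrm d\zeta,\qquad
\mathbf Q_3=-\frac{1}{2\pi\ii}\int_{\Sigma^{\xcE}}(\bu^{\xcE}-\I)\w^{\xcE}\,\mathrm d\zeta.
\]
Each of the three error terms will then be bounded separately using Lemma~\ref{L:estWE}, the bound \eqref{E:yxjs11}, and the Cauchy--Schwarz inequality.

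The first two terms are direct. By \eqref{E:estwE1bc1} we have \(|\mathbf Q_1|\le Ct^{-1}\). By \eqref{E:estwE3} we have \(|\mathbf Q_2|\le Ct^{-2/3}\log t\). The second bound is exactly the claimed error order, which is why the \(\log t\) factor appears in \eqref{E:estofL-1}.

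The main obstacle is \(\mathbf Q_3\). The crude estimate \(\|\bu^{\xcE}-\I\|_{L^2}\|\w^{\xcE}\|_{L^2}\lesssim t^{-1/3}(\log t)^{1/2}\cdot t^{-1/3}(\log t)^{1/2}\) gives precisely \(Ct^{-2/3}\log t\), which already suffices. If one wants to be careful, the contour can be split into pieces. On \(\partial\cD_\eps\), \(\|\w^{\xcE}\|_{L^2}\le Ct^{-1/3}\). On \(\X^\eps\), interpolating \eqref{E:estwE3}--\eqref{E:estwE4} gives \(\|\w^{\xcE}\|_{L^2(\X^\eps)}\le (t^{-2/3}\log t\cdot t^{-1/3}\log t)^{1/2}=t^{-1/2}\log t\). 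On the remaining contour, \(\|\w^{\xcE}\|_{L^2}\le Ct^{-3/4}\). The dominant piece is therefore \(\partial\cD_\eps\). There the product gives \(t^{-1/3}(\log t)^{1/2}\cdot t^{-1/3}\), which is at most \(Ct^{-2/3}\log t\). The key input is \eqref{E:yxjs11}, which rests on the uniform boundedness of \(\Ca_-\) on \(L^2(\Sigma^{\xcE})\) for the fixed (\(t\)-independent up to scaling of \(\X^\eps\)) contour. The local pieces of \(\X^\eps\) shrink, but the Cauchy operator norm on unions of rays and segments is scale-invariant, so that bound holds uniformly. Collecting the three estimates yields \eqref{E:estofL-1}.
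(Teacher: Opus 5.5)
Your proposal is correct and essentially matches the paper's argument. The paper uses the same split of $\bL$: the $\partial\cD_\eps$ integral; the integral of $\w^{\xcE}$ over the rest of $\Sigma^{\xcE}$, bounded by the $L^1$ estimates of Lemma~\ref{L:estWE}, with $\X^\eps$ producing the $t^{-2/3}\log t$ term; and the $(\bu^{\xcE}-\I)\w^{\xcE}$ term, handled by Cauchy--Schwarz and \eqref{E:yxjs11}. Your piecewise refinement of that last term is unnecessary. Your scale-invariance remark is only a heuristic: the contour is not a dilate of a fixed one, because its local part shrinks while the rest stays fixed. The paper likewise takes the uniform boundedness of $\Ca_-$ for granted, so this does not affect the proof.
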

\begin{proof}
The function $\bL(x,t)$ can be rewritten as
$$
\bL(x,t) = -\frac{1}{2\pi \ii}\int_{\partial \mathcal{D}_{\eps}} \w^{\xcE}(x,t,z) dz + \bL_1(x,t) + \bL_2(x,t),
$$
where
\begin{align*}
\bL_1(x,t) = -\frac{1}{2\pi \ii}\int_{\Sigma^{\xcE}\setminus \partial \mathcal{D}_{\eps}} \w^{\xcE} (x,t,z) \mathrm{d} z , \quad
 \bL_2(x,t) = -\frac{1}{2\pi \ii}\int_{\Sigma^{\xcE}} (\bu^{\xcE}(x,t,z)-\I) \w^{\xcE}(x,t,z) \mathrm{d} z.
\end{align*}
Then the lemma follows from Lemma~\ref{L:estWE} and Eq.~\eqref{E:yxjs11} and straightforward estimates.
\end{proof}
Recall that when $z \in \partial \cD_{\eps}$, we have
$$\w^{\xcE}(x,t,z)=(\N^{loc})^{-1}(x,t,z)-\I,$$
thus by~\eqref{E:bzsy} and Cauchy's formula,  we obtain
\begin{align*}
 -\frac{1}{2\pi \ii}\int_{\partial \mathcal{D}_{\eps}} \w^{\xcE}(x,t,z) dz=
\frac{\Y (\cE^{\X}_1  + \M^P_1) \Y^{-1}}{(\frac{3}{4q_0})^{1/3}t^{1/3}}+\mathcal{O}(t^{-2/3}).
\end{align*}
This implies
\be \label{E:Lasyt}
\bL(x,t) =\frac{\Y (\cE^{\X}_1  + \M^P_1) \Y^{-1}}{(\frac{3}{4q_0})^{1/3}t^{1/3}}+\mathcal{O}(t^{-2/3} \ln t).
\ee

Note that near $z=0$, we have  $\w^{\xcE}(x,t,z)=\V^{(1)}-\I$, and  $\V^{(1)}-\I$  is known to be $0$ at $z=0$ from~\eqref{E:estinear0}; thus  $\w^{\xcE}(x,t,z)$ is zero at the origin. This observation shows that $\cE(x,t,0)$  is well-defined and we have
\be \label{E:cExt0}
\cE(x,t,0)=\I+\frac{1}{2\pi \ii}\int_{\Sigma^{\xcE}}\frac{\bu^{\xcE}(x,t,z') \w^{\xcE} (x,t,z')}{z'} \mathrm{d} z'.
\ee
The following lemma will be crucial in the subsequent computations.
\begin{lemma}\label{L:glyl}
The following estimates hold uniformly for large $t$ and $(x,t) \in \cP_+$:
\be \label{E:cEjest}
\cE_j(x,t,0)=\ce_j+\frac{1}{2\pi \ii}\int_{\partial \mathcal{D}_{\eps}} \frac{\w^{\xcE}_j(x,t,\zeta) }{\zeta}\mathrm{d}\zeta+\mathcal{O}(t^{-2/3}\log t), \quad
j=2,3,
\ee
where $\cE_j(x,t,0)$, $\ce_j$ and   $\w^{\xcE}_j(x,t,\zeta)$
denote the j‑th columns of matrices $\cE(x,t,0)$, $\I$ and $\w^{\xcE}(x,t,\zeta)$, respectively.
\end{lemma}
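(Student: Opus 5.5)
Starting from the representation \eqref{E:cExt0}, I take its $j$-th column, $j=2,3$, and split it into the leading term, the contribution of $\partial\cD_\eps$, and remainders:
\[
\cE_j(x,t,0)-\ce_j-\frac{1}{2\pi\ii}\int_{\partial\cD_\eps}\frac{\w^{\xcE}_j(\zeta)}{\zeta}\,\mathrm d\zeta
=\frac{1}{2\pi\ii}\int_{\Sigma^{\xcE}\setminus\partial\cD_\eps}\frac{\w^{\xcE}_j(\zeta)}{\zeta}\,\mathrm d\zeta
+\frac{1}{2\pi\ii}\int_{\Sigma^{\xcE}}\frac{\bigl((\bu^{\xcE}-\I)\w^{\xcE}\bigr)_j(\zeta)}{\zeta}\,\mathrm d\zeta .
\]
I will show that both terms on the right are $\mathcal O(t^{-2/3}\log t)$.

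\emph{First term.} I split $\Sigma^{\xcE}\setminus\partial\cD_\eps$ into three pieces. Away from both $0$ and $-q_0$, the factor $1/\zeta$ is bounded and \eqref{E:estwE1bc1} gives $\mathcal O(t^{-1})$. On $\X^\eps$ the factor $1/\zeta$ is again bounded, since $|\zeta|\approx q_0$ there, so \eqref{E:estwE3} gives $\mathcal O(t^{-2/3}\log t)$. On $\Sigma^{(1)}_{10,11,12,13}$ near the origin, the $j$-th column of $\w^{\xcE}$ with $j=2,3$ contains no $(2,1)$-entry. Therefore Remark~\ref{R:zsxy}, i.e.\ \eqref{E:zsxy}, applies and gives $\|\zeta^{-1}\w^{\xcE}_j\|_{L^1}\le Ct^{-1}$. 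This is the reason the lemma is restricted to $j=2,3$: for the first column the $(2,1)$-entry only satisfies $|\w|\le C|\zeta|\e^{t\re\Phi_{21}}$, so $\w/\zeta$ is merely bounded and integrates to $\mathcal O(t^{-1/2})$. The circles $\partial\cD_{\rm sol}$ contribute $\mathcal O(\e^{-ct})$ by Lemma~\ref{L:third-transformation}.

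\emph{Second term.} Write $\bigl((\bu^{\xcE}-\I)\w^{\xcE}\bigr)_j=(\bu^{\xcE}-\I)\w^{\xcE}_j$ and apply Cauchy--Schwarz on each piece of the contour. On the pieces away from the origin, $1/\zeta$ is bounded, so
\[
\int \frac{|\bu^{\xcE}-\I|\,|\w^{\xcE}_j|}{|\zeta|}\,|\mathrm d\zeta|\le C\|\bu^{\xcE}-\I\|_{L^2}\|\w^{\xcE}\|_{L^2}\le C t^{-1/3}(\log t)^{1/2}\cdot t^{-1/3}(\log t)^{1/2}=Ct^{-2/3}\log t ,
\]
using \eqref{E:yxjs11} and \eqref{E:wELp} with $p=2$. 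Near the origin I need the weighted bound $\|\zeta^{-1}\w^{\xcE}_j\|_{L^2(\Sigma^{(1)}_{10,\dots,13})}\le Ct^{-3/4}$. This follows from \eqref{E:estinear011}: it gives $|\zeta^{-1}\w_j|\le C|\zeta|\e^{-ct|\zeta|^2}$, and squaring and integrating along the contour yields order $t^{-3/4}$. Combined with \eqref{E:yxjs11}, this piece is bounded by $t^{-1/3}(\log t)^{1/2}\,t^{-3/4}$, which is much smaller than $t^{-2/3}$.

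The main obstacle is the origin. There I must check that $\bu^{\xcE}-\I$ is genuinely in $L^2$ with the stated bound, and that no hidden singularity of $1/\zeta$ persists at $\zeta=0$. Regularity of $\w^{\xcE}$ at $0$ follows from \eqref{E:estinear0}. If the $L^2$ pairing is insufficient, my fallback is to use the $L^\infty$ smallness of $\bu^{\xcE}-\I$ near $0$, obtained from the integral equation for $\bu^{\xcE}$, together with the $L^1$ bound \eqref{E:zsxy}. Collecting the estimates proves \eqref{E:cEjest}; uniformity in $\cP_+$ holds because every constant used above is uniform there.
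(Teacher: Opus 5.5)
Your proposal is correct and follows the paper's proof. The paper uses exactly the same splitting of \eqref{E:cExt0} into the $\partial\cD_\eps$ contribution plus the two remainders $\boldsymbol{Q}_1,\boldsymbol{Q}_2$, bounds them via Remark~\ref{R:zsxy}, Lemma~\ref{L:estWE} and \eqref{E:yxjs11}, and leaves the rest as ``straightforward estimates.'' You make those estimates explicit, namely the Cauchy--Schwarz product $t^{-1/3}(\log t)^{1/2}\cdot t^{-1/3}(\log t)^{1/2}$ that produces the $t^{-2/3}\log t$ error, and the weighted bound $\|\zeta^{-1}\w^{\xcE}_j\|_{L^2}\le Ct^{-3/4}$ near the origin.
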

\begin{proof}
By~\eqref{E:cExt0}, the function $\cE_j(x,t,0)$ can be rewritten as
$$
\cE_j(x,t,0)=\ce_j+\frac{1}{2\pi \ii}\int_{\partial \mathcal{D}_{\eps}} \frac{\w^{\xcE}_j(x,t,\zeta) }{\zeta}\mathrm{d}\zeta+ \boldsymbol{Q}_1(x,t)+\boldsymbol{Q}_2(x,t),
$$
where
$$
\boldsymbol{Q}_1(x,t)= \frac{1}{2\pi \ii}\int_{\Sigma^{\xcE}\setminus \partial \mathcal{D}_{\eps}} \frac{\w^{\xcE}_j (x,t,\zeta)}{\zeta} \mathrm{d} \zeta , \quad
 \boldsymbol{Q}_2(x,t)= \frac{1}{2\pi \ii}\int_{\Sigma^{\xcE}} (\bu^{\xcE}(x,t,\zeta)-\I) \frac{\w^{\xcE}_j(x,t,\zeta)}{\zeta} \mathrm{d} \zeta.
$$
Then the lemma follows from Remark~\ref{R:zsxy}, Lemma~\ref{L:estWE}, Eq.~\eqref{E:yxjs11} and straightforward estimates.
\end{proof}

\subsection{Proof of Theorem~\ref{Th:main} } \label{S:pthmain}
From Lemma~\ref{Th:rel}, we obtain
\be \label{E:wmxy}
\M(x,t,z)=\mathbf{\Delta}_{\infty} \left( \I + \frac{\A(x,t)}{z}\right) \N(x,t,z) \mathbf{\Delta}^{-1}(z),
\ee
where $\A(x,t)=\sig_1 \N_+^{-1}(x,t,0)= \sig_1 \N_-^{-1}(x,t,0)$.  Recalling reconstruction formula~\eqref{E:cggs}, we therefore need to study the large-$z$ behavior of $\M$. Suppose $\M$ and $ \N$ admit the following expansions as  $z \to \infty$  respectively:
\begin{align*}
\M(x,t,z)=\I+\frac{\M^{(1)}}{z}+\mathcal{O}(\frac{1}{z^2}), \quad
\N(x,t,z)=\I+\frac{\N^{(1)}}{z}+\mathcal{O}(\frac{1}{z^2}).
\end{align*}
In this section, for a matrix $\mathbf{C}$, we define the notation $(\mathbf{C})_{rc}$ as the two-dimensional column vector formed by the $(2,1)$ and $(3,1)$ entries of matrix $\mathbf{C}$. Then by~\eqref{E:wmxy}, we have
\be \label{E:newccg}
\begin{aligned}
(\M^{(1)})_{rc}&=\left(\mathbf{\Delta}_{\infty} \A \mathbf{\Delta}_{\infty}^{-1} \right)_{rc}+\left(\mathbf{\Delta}_{\infty} \N^{(1)} \mathbf{\Delta}_{\infty}^{-1} \right)_{rc}\\
&=\left(\sig_1 \mathbf{\Delta}(0) \N_+^{-1}(x,t,0) \mathbf{\Delta}_{\infty}^{-1} \right)_{rc}  + \left(\mathbf{\Delta}_{\infty} \N^{(1)} \mathbf{\Delta}_{\infty}^{-1} \right)_{rc},
\end{aligned}
\ee
where $\mathbf{\Delta}(0)=\diag\left(\delta_1(0), 1/\delta_1(0), 1  \right)$.
Recall the transformations introduced in Section~\ref{S:dza}. In the neighborhoods of  $z = 0$  and  $z = \infty$, we have
$$
\N(x,t,z)=\T_{\infty} \cE(x,t,z)  \T^{-1}(z) \G^{-1}(x,t,z).
$$
Let the region $D^c=\C \setminus \cup_{j=1}^9 \overline{D_j}$, then when $z \in D^c$, we have $\G(x,t,z)=\I(z)$. Thus, as  $z$  tends to infinity and to zero respectively from inside region  $D^c$, we obtain
\begin{align}
&\N_+(x,t,0)=\T_{\infty} \cE(x,t,0) \T^{-1}(x,t,0),  \label{E:lgds11}\\
&\N^{(1)}(x,t)=\T_{\infty} \bL(x,t) \T_{\infty}^{-1}.  \label{E:lgds}
\end{align}
For an invertible $3 \times 3$ matrix $\mathbf{C}=\left( \mathbf{C}_1,\mathbf{C}_2,\mathbf{C}_3 \right)$, we have
$$
\mathbf{C}^{-1}=\frac{1}{\det \mathbf{C}}
\bpm
 \left(\mathbf{C}_2 \times  \mathbf{C}_3 \right)^{\top}\\
\left( \mathbf{C}_3 \times  \mathbf{C}_1 \right)^{\top}\\
\left(  \mathbf{C}_1 \times    \mathbf{C}_2 \right)^{\top}
\epm,
$$
where $``\times "$ denotes the usual cross product.
Using this identity, we immediately obtain
\be \label{E:Nrc}
\begin{aligned}
\left(\sig_1 \mathbf{\Delta}(0) \N_+^{-1}(x,t,0) \mathbf{\Delta}_{\infty}^{-1} \right)_{rc}=\bpm
0\\
\ii q_0 \delta_1(0) \big( \N_{+,2}(x,t,0) \times \N_{+,3}(x,t,0) \big)_{11}
\epm,
\end{aligned}
\ee
where  $\N_{+,j}(x,t,0)$  denotes the  $j$-th column of  $\N_{+}(x,t,0)$, and $(\star)_{11}$ denotes the first entry of the corresponding vector.
Combining~\eqref{E:lgds},~\eqref{E:Nrc} with~\eqref{E:newccg}, we conclude that
\be \label{E:qgwb}
(\M^{(1)})_{rc}(x,t)=\bpm
0\\
\ii q_0 \delta_1(0) \big( \N_{+,2}(x,t,0) \times \N_{+,3}(x,t,0) \big)_{11}
\epm+
\bpm
\frac{1}{\delta_1(0) (\delta(0))^2 \rP_1(0)}  \bL_{21}(x,t)\\
\frac{\delta_1(0)  \rP_1(0)}{\delta(0)}  \bL_{31}(x,t)
\epm.
\ee
Firstly, let us consider the first term on the right-hand side of the above equation.
From~\eqref{E:lgds11}, we know that we need to estimate $\cE_j(x,t,0)$. Recall that when $z \in \partial \cD_{\eps}$, we have
$$\w^{\xcE}(x,t,z)=(\N^{loc})^{-1}(x,t,z)-\I.$$
Thus by~\eqref{E:bzsy1} and Cauchy's formula,  we obtain
\begin{align}
 \frac{1}{2\pi \ii}\int_{\partial \mathcal{D}_{\eps}} \frac{[\w^{\xcE}]_2(x,t,\zeta)}{\zeta} d \zeta=
\frac{1}{q_0} \frac{\left[ \Y \M^P_1 \Y^{-1} \right]_2}{(\frac{3t}{4q_0})^{1/3}}+\frac{1}{q_0} \frac{\left[ \Y \cE^{\X}_1 \Y^{-1} \right]_2}{(\frac{3t}{4q_0})^{1/3}}+\mathcal{O}(t^{-2/3}), \label{E:gdgs1}\\
 \frac{1}{2\pi \ii}\int_{\partial \mathcal{D}_{\eps}} \frac{[\w^{\xcE}]_3(x,t,\zeta)}{\zeta} d \zeta=
\frac{1}{q_0} \frac{\left[ \Y \M^P_1 \Y^{-1} \right]_3}{(\frac{3t}{4q_0})^{1/3}}+\frac{1}{q_0} \frac{\left[ \Y \cE^{\X}_1 \Y^{-1} \right]_3}{(\frac{3t}{4q_0})^{1/3}}+\mathcal{O}(t^{-2/3}). \label{E:gdgs2}
\end{align}
In the expressions above, $[\mathbf{A}]_j$ denotes the $j$-th column of matrix $\mathbf{A}$.
Then, from the expression for $\M^P_1$, Eq.~\eqref{E:estcEX1}, and Lemma~\ref{L:glyl}, a straightforward calculation yields
\begin{align}
&[\cE]_2(x,t,0)=\bpm 0\\ 1\\ 0 \epm - \frac{\e^{\frac{\pi }{4} \ii } s \e^{U_{HM}(y)} \e^{-\theta_2(x,t,-q_0)} }{2 q_0 \sqrt{\pi}   t^{1/2}}
\bpm
1\\
0\\
\ii
\epm +\mathcal{O}(t^{-2/3} \log t), \label{E:cE2estin1}\\
&[\cE]_3(x,t,0)=\bpm 0\\ 0\\ 1 \epm+
\frac{1}{q_0 (\frac{3 t}{4 q_0})^{1/3}} \bpm
-\frac{u_{HM}(y)}{2}\\
0\\
-\frac{\ii}{2} \int_{\infty}^y u^2_{HM}(y')\mathrm{d}y'
\epm +\mathcal{O}(t^{-2/3} \log t).\label{E:cE2estin2}
\end{align}
Thus we have
\be  \label{E:cE2estin3}
\begin{aligned}
\ii q_0 \big( \N_{+,2}(x,t,0) & \times \N_{+,3}(x,t,0) \big)_{11}\\
&=
\ii q_0 \delta_1(0) \left(  \left[  \T_{\infty} \cE(x,t,0) \T^{-1}(x,t,0) \right]_2  \times  \left[  \T_{\infty} \cE(x,t,0) \T^{-1}(x,t,0) \right]_3  \right)_{11}\\
&= \frac{\delta_1(0) \rP_1(0)}{\delta(0)} \left[
\ii q_0-
\frac{1}{2  (\frac{3 t}{4 q_0})^{1/3}} \int_{y}^{\infty} u^2_{HM}(y') \mathrm{d}y'
\right] +\mathcal{O}(t^{-2/3} \log t).
\end{aligned}
\ee
We have completed the estimation of the first term on the right‑hand side of~\eqref{E:qgwb}. Next, we proceed to analyze its second term.
Note that the first column of $\cE^{\X}_1$ satisfies the estimate
$$
[\cE^{\X}_1]_1(x,t)=\mathcal{O}(t^{-1/3}), \quad t \to \infty,
$$
and then, combining the expression for $\M^P_1$ and~\eqref{E:Lasyt}, we obtain
\begin{align}\label{E:cE2estin4}
\begin{cases}
\bL_{21}(x,t) =\mathcal{O}(t^{-2/3} \log t), \\
\bL_{31}(x,t)=-\frac{u_{HM}(y)}{2 (\frac{3 t}{4 q_0})^{1/3}} +\mathcal{O}(t^{-2/3} \log t),
\end{cases}\quad  t \to \infty.
\end{align}
Thus, from~\eqref{E:qgwb},~\eqref{E:cE2estin3}, and~\eqref{E:cE2estin4}, it follows that
\be \label{E:qgwbgma}
(\M^{(1)})_{rc}(x,t)=
\bpm
0\\
\ii q_0 \frac{\delta_1(0) \rP_1(0)}{\delta(0)}-\frac{\delta_1(0) \rP_1(0)}{2\delta(0) ( \frac{3 t}{4 q_0})^{1/3}} \left(
u_{HM}(y)+\int_{y}^{\infty} u^2_{HM}(y') \mathrm{d}y'
\right)
\epm+\mathcal{O}(t^{-2/3} \log t).
\ee
Substituting the asymptotic estimate~\eqref{E:qgwbgma} of $(\M^{(1)})_{rc}(x,t)$ into reconstruction formula~\eqref{E:cggs}, a straightforward calculation yields asymptotic formula~\eqref{E:asygs}.

\section{Concluding remarks}

In this work, we studied the Painlev\'e-type transition asymptotics for the
defocusing Manakov system under parallel NZBCs.  By applying the Deift--Zhou
steepest descent method to the associated \(3\times3\) matrix RH problem, we
derived a uniform asymptotic formula in the transition region.  The result
shows that, apart from the modulated background, the first correction is of
order \(t^{-1/3}\) and is governed by the Hastings--McLeod solution of the
Painlev\'e II equation.

A notable feature of the analysis is the appearance of a coupled
Painlev\'e--error-function local model near the branch point.  A similar model
was recently introduced by Charlier and Lenells in their steepest descent
analysis of the ``bad'' Boussinesq equation~\cite{CL2024main}.  The present
work shows that this type of coupled local construction also arises naturally
in the finite-density vector NLS setting.  We expect that such a model is not
limited to these two examples, but may also appear in the Painlev\'e-type
transition analysis of other coupled integrable systems with NZBCs, such as
coupled mKdV equations~\cite{IH1997,GZD2014}, coupled Hirota
equations~\cite{BMP2001}, and coupled Gerdjikov--Ivanov
equations~\cite{MZ-2023}.

Another natural direction is to extend the present analysis to \(N\)-component
NLS system~\cite{LSG2025}.  In that case, one may expect a higher-dimensional
version of the coupled local model to emerge.  It would also be interesting to
develop the corresponding asymptotic theory under Sobolev-type assumptions on
the initial data.  For this purpose, a \(\bar\partial\)-generalization of the
nonlinear steepest descent method~\cite{CuJe2016,DM2008,BJM2018} should be a
useful tool, although several technical difficulties remain.  Finally, it
would be worthwhile to investigate whether other transition mechanisms occur
between the Painlev\'e region and the soliton region.  We leave these questions
for future work.

\appendix
\section{Proof of Lemma~\ref{Th:rel}}\label{App:AAAA1}

It follows from~\eqref{E:jumpM1ex} and~\eqref{E:JumpN} that
\be \label{E:asess}
\N_+(x,t,0)=\N_-(x,t,0)
\bpm
1&0&0\\
\star&1&0\\
0&0&1
\epm,
\ee
where \(\star\) denotes an unspecified entry. Since the residue matrices are nilpotent and act only on one column, the possible singularities of \(\det\N\) at the discrete spectral points are removable. Moreover, \(\det\bV^{(1)}=1\) on \(\R\), and the normalization at infinity gives
$
\det\N\equiv 1
$.
In particular, \(\N_\pm^{-1}(x,t,0)\) are well defined. From~\eqref{E:asess}, one obtains
\[
\sig_1\N_+^{-1}(x,t,0)=\sig_1\N_-^{-1}(x,t,0).
\]

We now prove~\eqref{E:gfbh}. Let \(\N(x,t,z)\) be the unique solution of RH problem~\ref{RHP:re}, and define
\[
\widetilde{\M}(x,t,z)
=
\left(\I+\frac{\A(x,t)}{z}\right)\N(x,t,z),
\qquad
\A(x,t)=\sig_1\N_+^{-1}(x,t,0).
\]
It is immediate from the definition that \(\widetilde{\M}\) satisfies the jump condition~\eqref{E:JumpM1}, the normalization~\eqref{E:M1astj}, and the same residue conditions as \(\M^{(1)}\). It remains to verify the branch-point conditions~\eqref{E:M1gcc} and the two symmetries~\eqref{E:M1RHP11}.

\noindent{\bf Step 1. The \(z\mapsto\hat z\) symmetry.}
By Remark~\ref{R:syV1},
\[
\bV^{(1)}(x,t,z)
=
\bPi^{-1}(z)\left(\bV^{(1)}(x,t,\hat z)\right)^{-1}\bPi(z).
\]
Hence \(\N(x,t,z)\) and \(\N(x,t,\hat z)\bPi(z)\) have the same jump across \(\R\). Define
\[
\mathbf F(x,t,z)
=
\N(x,t,\hat z)\bPi(z)\N^{-1}(x,t,z).
\]
Then \(\mathbf F\) has no jump across \(\R\). The possible singularities of \(\mathbf F\) at the discrete spectral points are removable. Thus the only possible singularities of \(\mathbf F\) are at \(z=0\) and \(z=\infty\).

Write
\[
\bPi(z)=\sig_2+\frac{1}{z}\sig_1,
\qquad
\sig_2=
\bpm
0&0&0\\
0&1&0\\
0&0&0
\epm.
\]
As \(z\to\infty\), we have \(\hat z\to0\), and hence
\[
\mathbf F(x,t,z)=\mathcal O(1),\qquad
\mathbf F(x,t,\infty)=\N_+(x,t,0)\sig_2=\N_-(x,t,0)\sig_2.
\]
As \(z\to0\), we have \(\hat z\to\infty\), and therefore
\[
\mathbf F(x,t,z)=\frac{1}{z}\sig_1\N_+^{-1}(x,t,0)+\mathcal O(1).
\]
Liouville's theorem then gives
\be \label{E:yzbd}
\mathbf F(x,t,z)
=
\frac{1}{z}\sig_1\N_+^{-1}(x,t,0)+\N_+(x,t,0)\sig_2
=
\frac{1}{z}\sig_1\N_-^{-1}(x,t,0)+\N_-(x,t,0)\sig_2.
\ee
Set
$
A:=\N_+(x,t,0)
$
and
$B:=\N_+^{-1}(x,t,0)
$.
Then~\eqref{E:yzbd} gives
\[
\N(x,t,\hat z)\bPi(z)\N^{-1}(x,t,z)
=
\frac{1}{z}\sig_1B+A\sig_2.
\]
Multiplying this identity on the left by
$
\I+\frac{z}{q_0^2}\sig_1B
$
and on the right by \(\N(x,t,z)\), we obtain
\[
\begin{aligned}
\widetilde{\M}(x,t,\hat z)\bPi(z)
=
\left[
A\sig_2
+
\frac{1}{q_0^2}\sig_1B\sig_1B
+
\frac{1}{z}\sig_1B
\right]\N(x,t,z).
\end{aligned}
\]
Consequently, to prove
$
\widetilde{\M}(x,t,z)=\widetilde{\M}(x,t,\hat z)\bPi(z)
$,
it suffices to show that
\be \label{E:gj}
A\sig_2+\frac{1}{q_0^2}\sig_1B\sig_1B=\I.
\ee

We first derive an algebraic constraint. Let \(\bX_1\) and \(\bX_2\) be defined by
\[
\N^{-1}(x,t,z)=\I-\frac{\bX_1}{z}+\mathcal O(z^{-2}),
\qquad z\to\infty,
\]
and
\[
\N(x,t,\hat z)=A+\frac{\bX_2}{z}+\mathcal O(z^{-2}),
\qquad \C_-\ni z\to\infty.
\]
The coefficient of \(z^{-1}\) in the expansion of \(\mathbf F\) as \(\C_-\ni z\to\infty\) is
\[
A\sig_1+\bX_2\sig_2-A\sig_2\bX_1.
\]
Comparing this with~\eqref{E:yzbd}, we get
\[
A\sig_1+\bX_2\sig_2-A\sig_2\bX_1=\sig_1B.
\]
Multiplying this identity from the left by \(\sig_1B\), we obtain
\[
\sig_1^2+\sig_1B\bX_2\sig_2=\sig_1B\sig_1B.
\]
Then the above expression can be rewritten as
\be \label{E:msjy}
\sig_1B\sig_1B
=
q_0^2
\bpm
1&\star&0\\
0&0&0\\
0&\star&1
\epm,
\ee
where $\star$ denotes an unspecified entry.
It remains to determine the two unspecified entries. Write
$
A=(A_{ij})_{1\leq i,j\leq3}
$, and
$
B=(B_{ij})_{1\leq i,j\leq3}
$.
We first determine \(A_{22}\). Expanding \(\mathbf F\) as \(z\to0\), write
\[
\N(x,t,\hat z)=\I+\frac{z}{q_0^2}\bX_3+\mathcal O(z^2),
\qquad
\N^{-1}(x,t,z)=B+z\bX_4+\mathcal O(z^2).
\]
Using \(\bPi(z)=\sig_2+z^{-1}\sig_1\), comparison of the constant terms in~\eqref{E:yzbd} gives
\[
A\sig_2=\sig_2B+\sig_1\bX_4+\frac{1}{q_0^2}\bX_3\sig_1B.
\]
Multiplying this identity from the right by \(A\), we obtain
\[
A\sig_2A
=
\sig_2+\sig_1\bX_4A+\frac{1}{q_0^2}\bX_3\sig_1.
\]
Taking the \((2,2)\)-entry and using that the second row and the second column of \(\sig_1\) are zero, we get
$
A_{22}^2=1
$.

We now fix the sign of $A_{22}$. From the small-norm analysis of the auxiliary RH problem for \(\N\), in particular from~\eqref{E:lgds11} and~\eqref{E:cE2estin1}, one has, uniformly for \((x,t)\in\cP_+\),
\[
A_{22}=(\N_+(x,t,0))_{22}=1+o(1),
\qquad t\to\infty.
\]
Indeed, \eqref{E:lgds11} gives
\[
\N_+(x,t,0)=\T_\infty  \cE(x,t,0) \T^{-1}(0),
\]
and the \((2,2)\)-entries of \(\T_\infty\) and \(\T(0)\) coincide, so that
$
(\N_+(x,t,0))_{22}=\cE_{22}(x,t,0)
$.
Equation~\eqref{E:cE2estin1}, applied to the second column, yields \( \cE_{22}(x,t,0)=1+o(1)\). Hence \(A_{22}=1\) for all sufficiently large \(t\) in the region \(\cP_+\). Since \(A_{22}\) depends continuously on \((x,t)\) and takes values only in the discrete set \(\{1,-1\}\), we conclude that
$
A_{22}=1
$.

We next prove that
$
B_{13}=B_{31}
$.
Taking the \((1,3)\)- and \((3,1)\)-entries in~\eqref{E:msjy}, respectively, gives
\[
q_0^2B_{33}(B_{13}-B_{31})=0,
\qquad
q_0^2B_{11}(B_{31}-B_{13})=0.
\]
Thus \(B_{13}=B_{31}\) unless \(B_{11}=B_{33}=0\). Suppose \(B_{11}=B_{33}=0\). Then the \((1,1)\)- and \((3,3)\)-entries of~\eqref{E:msjy} yield
\[
B_{31}^2=-1,
\qquad
B_{13}^2=-1.
\]
On the other hand, since \(A=B^{-1}\) and \(\det B=1\),  we get
$
B_{13}B_{31}=-1
$.
Together with \(B_{31}^2=-1\), this implies \(B_{13}=B_{31}\). Hence \(B_{13}=B_{31}\) in all cases.

Now the \((1,2)\)- and \((3,2)\)-entries of \(\sig_1B\sig_1B\) are
\[
(\sig_1B\sig_1B)_{12}
=
q_0^2(B_{33}B_{12}-B_{13}B_{32}),
\qquad
(\sig_1B\sig_1B)_{32}
=
q_0^2(B_{11}B_{32}-B_{12}B_{31}).
\]
Using \(B_{13}=B_{31}\) and the cofactor formula for \(A=B^{-1}\), we obtain
\[
(\sig_1B\sig_1B)_{12}=-q_0^2A_{12},
\qquad
(\sig_1B\sig_1B)_{32}=-q_0^2A_{32}.
\]
Combining these identities with~\eqref{E:msjy} and using \(A_{22}=1\), we obtain
\[
A\sig_2+\frac{1}{q_0^2}\sig_1B\sig_1B
=
\bpm
0&A_{12}&0\\
0&1&0\\
0&A_{32}&0
\epm
+
\bpm
1&-A_{12}&0\\
0&0&0\\
0&-A_{32}&1
\epm
=\I.
\]
This proves~\eqref{E:gj}, and hence the first symmetry follows.

\noindent{\bf Step 2. The growth condition at the branch points.}
We verify~\eqref{E:M1gcc} at \(q_0\); the argument at \(-q_0\) is analogous. By the endpoint behavior of the reflection coefficients,
\[
\lim_{z\to q_0}\tilde r_2(z)=\ii,
\qquad
\lim_{z\to q_0}\tilde r_1(z)=\lim_{z\to q_0}\tilde r_3(z)=0.
\]
Hence
\[
\widetilde{\M}_+(x,t,q_0)
=
\widetilde{\M}_-(x,t,q_0)
\bpm
1&\star&\ii\\
0&1&0\\
\ii&\star&0
\epm.
\]
Taking the third column gives
\[
\widetilde{\M}_{+3}(x,t,q_0)=\ii\widetilde{\M}_{-1}(x,t,q_0).
\]
On the other hand, the symmetry proved in Step~1 gives, by taking the third column at \(q_0\),
\[
\widetilde{\M}_{+3}(x,t,q_0)=-\ii\widetilde{\M}_{-1}(x,t,q_0).
\]
Therefore
\[
\widetilde{\M}_{+3}(x,t,q_0)=\widetilde{\M}_{-1}(x,t,q_0)=0.
\]
Since the relevant columns are analytic in the corresponding half-neighborhoods of \(q_0\) and admit continuous extensions to the branch point, this vanishing implies the required first-order behavior in~\eqref{E:M1gcc}.

\noindent{\bf Step 3. The \(z\mapsto z^*\) symmetry.}
Let
$
\mathbf D(z)=-\gamma(z)\mathbf\Gamma^{-1}(z)
$.
By Remark~\ref{R:syV1},
\[
\mathbf D(z)\bV^{(1)}(x,t,z)\mathbf D^{-1}(z)
=
\left(\bV^{(1)}(x,t,z^*)\right)^\dagger.
\]
Thus \(\widetilde{\M}(x,t,z)\mathbf D^{-1}(z)\) and
\([\widetilde{\M}^{\dagger}(x,t,z^*)]^{-1}\) satisfy the same jump on \(\R\). Define
\[
\mathbf K(x,t,z)
=
\widetilde{\M}(x,t,z)\mathbf D^{-1}(z)
\widetilde{\M}^{\dagger}(x,t,z^*).
\]
Then \(\mathbf K\) has no jump on \(\R\), and it is analytic in
$
\C\setminus\bigl(\{0,\pm q_0\}\cup\mathcal Z\bigr)
$.
We now show that all possible singularities are removable.

First, as \(z\to0\),
\[
\mathbf D^{-1}(z)=
\diag \left(\frac{1}{\gamma(z)},-1, -\frac{1}{\gamma(z)}  \right)
= \diag \left(\mathcal O(z^2),-1, \mathcal O(z^2)  \right).
\]
Moreover,
\[
\widetilde{\M}(x,t,z)\times  \diag \left(z,1,z \right)
=\mathcal O(1),
\qquad
 \diag \left(z,1,z \right) \times
\widetilde{\M}^{\dagger}(x,t,z^*)
=\mathcal O(1).
\]
Hence \(\mathbf K(x,t,z)=\mathcal O(1)\) as \(z\to0\).
Next, let \(\widetilde{\M}_j\) denote the \(j\)-th column of \(\widetilde{\M}\). Then
\[
\begin{aligned}
\mathbf K(x,t,z)
&=
\bpm
\frac{1}{\gamma(z)}\widetilde{\M}_1(x,t,z)&
-\widetilde{\M}_2(x,t,z)&
\widetilde{\M}_3(x,t,z)
\epm
\bpm
\widetilde{\M}_1^\dagger(x,t,z^*)\\
\widetilde{\M}_2^\dagger(x,t,z^*)\\
-\frac{1}{\gamma(z)}\widetilde{\M}_3^\dagger(x,t,z^*)
\epm\\
&=
\bpm
\widetilde{\M}_1(x,t,z)&
-\widetilde{\M}_2(x,t,z)&
-\frac{1}{\gamma(z)}\widetilde{\M}_3(x,t,z)
\epm
\bpm
\frac{1}{\gamma(z)}\widetilde{\M}_1^\dagger(x,t,z^*)\\
\widetilde{\M}_2^\dagger(x,t,z^*)\\
\widetilde{\M}_3^\dagger(x,t,z^*)
\epm.
\end{aligned}
\]
Together with the branch-point behavior proved in Step~2, these two representations show that \(\mathbf K\) is bounded near \(\pm q_0\).

It remains to consider the discrete spectrum. Since the arguments for the other points are analogous, it suffices to verify the condition at \(z=z_j\).
  From the residue conditions, as \(z\to z_j\),
\[
\widetilde{\M}(x,t,z)
=
\frac{\tilde\kappa_j\e^{\theta_{21}(x,t,z_j)}}{z-z_j}
\bpm
a_1&0&0\\
a_2&0&0\\
a_3&0&0
\epm
+
\bpm
\star&a_1&\star\\
\star&a_2&\star\\
\star&a_3&\star
\epm
+\mathcal O(z-z_j),
\]
and
\[
\widetilde{\M}^{\dagger}(x,t,z^*)
=
\frac{\frac{\tilde\kappa_j}{\gamma(z_j)}\e^{\theta_{21}(x,t,z_j)}}{z-z_j}
\bpm
0&0&0\\
A_1&A_2&A_3\\
0&0&0
\epm
+
\bpm
A_1&A_2&A_3\\
\star&\star&\star\\
\star&\star&\star
\epm
+\mathcal O(z-z_j).
\]
Therefore,
\[
\widetilde{\M}(x,t,z)\mathbf D^{-1}(z)
=
\frac{\frac{\tilde\kappa_j}{\gamma(z_j)}\e^{\theta_{21}(x,t,z_j)}}{z-z_j}
\bpm
a_1&0&0\\
a_2&0&0\\
a_3&0&0
\epm
+
\bpm
\star&-a_1&\star\\
\star&-a_2&\star\\
\star&-a_3&\star
\epm
+\mathcal O(z-z_j).
\]
Multiplying the last two expansions, a straightforward calculation shows that
\(\mathbf K\) contains no negative powers as \(z\to z_j\). Hence \(z_j\) is a removable
singularity of \(\mathbf K\).

Thus \(\mathbf K\) is entire. Since
\[
\mathbf K(x,t,z)=\bJ+\mathcal O(z^{-1}),
\qquad z\to\infty,
\]
Liouville's theorem gives
\[
\mathbf K(x,t,z)
=
\widetilde{\M}(x,t,z)\mathbf D^{-1}(z)
\widetilde{\M}^{\dagger}(x,t,z^*)
=
\bJ.
\]
Equivalently,
\[
(\widetilde{\M}^{-1})^\top(x,t,z)
=
-\frac{1}{\gamma(z)}
\bJ\widetilde{\M}^{*}(x,t,z^*)\mathbf\Gamma(z),
\]
which is the second symmetry in~\eqref{E:M1RHP11}.

We have shown that \(\widetilde{\M}\) satisfies RH problem~\ref{RHP:M1}. By uniqueness,
$
\widetilde{\M}=\M^{(1)}
$.
This proves~\eqref{E:gfbh} and completes the proof of Lemma~\ref{Th:rel}.

\section{Painlev\'e II model problem}\label{App:AAA1}
\begin{figure}
\centering
\begin{tikzpicture}[
    scale=0.8,
    line cap=round,
    line join=round,
    midarrow/.style={
        line width=0.7mm,
        postaction={decorate},
        decoration={markings, mark=at position 0.55 with {\arrow{latex}}}
    },
    midarrowdashed/.style={
        dashed,
        line width=0.5mm,
        postaction={decorate},
        decoration={markings, mark=at position 0.55 with {\arrow{latex}}}
    }
]

\draw[midarrow] (3.46,-3) -- (0,-1);
\draw[midarrow] (-3.46,-3) -- (0,-1);
\draw[midarrow] (0,1) -- (3.46,3);
\draw[midarrow] (0,1) -- (-3.46,3);

\draw[midarrowdashed] (-4,0) -- (4,0);

\node at (1.7,2.35) {\small $1$};
\node at (-1.7,2.35) {\small $2$};
\node at (-1.7,-2.4) {\small $3$};
\node at (1.7,-2.4) {\small $4$};
\node at (0,0.7) {\small $\ii$};
\node at (0,-0.7) {\small $-\ii$};

\end{tikzpicture}
\caption{The jump contour $P=\cup_{j=1}^4 \X_j$ for the RH problem for $\M^P$.}
\label{Pfig}
\end{figure}

Consider the subcontour $P=\cup_{j=1}^4 \X_j$ of $\X$, see Figure~\ref{Pfig}. Define the jump matrix $\V^P(y,\beta)$ by
\begin{align}\nonumber
& \V_1^P =
\begin{pmatrix}
1 & 0 & 0 \\ 0 & 1 & 0 \\
\ii \e^{2\ii (y\beta + \frac{4\beta^3}{3})}  & 0 & 1 \end{pmatrix},
&&
 \V_2^P =
\begin{pmatrix}
1 & 0 & 0 \\ 0 & 1 & 0 \\ -\ii \e^{2\ii (y\beta + \frac{4\beta^3}{3})}  & 0 & 1 \end{pmatrix},
	\\\label{vPdef}
& \V_3^P =
\begin{pmatrix}
1 & 0 & \ii \e^{-2\ii (y\beta + \frac{4\beta^3}{3})}  \\ 0 & 1 & 0 \\ 0 & 0 & 1 \end{pmatrix},
 && \V_4^P =
\begin{pmatrix}
1 & 0 & -\ii \e^{-2\ii (y\beta + \frac{4\beta^3}{3})} \\ 0 & 1 & 0 \\
0 & 0 & 1 \end{pmatrix},
\end{align}
where $\V_j^P$ denotes the restriction of $\V^P$ to $\X_j$.

\begin{RHP}[RH problem for $\M^P$]\label{RHmP}
Find a $3 \times 3$-matrix valued function $\M^P(y, \beta)$ with the following properties:
\bi
\item $\M^P(y,\cdot) : \C \setminus \cup_{j=1}^4 \X_j \to \mathbb{C}^{3 \times 3}$ is analytic.

\item $\M^P(y,\cdot)$ has continuous boundary values on $\cup_{j=1}^4 \X_j \setminus \{\pm \ii\}$ satisfying the jump relation
\begin{align*}
  \M^P_+(y,\beta) = \M^P_-(y,\beta) \V^P(y,\beta), \qquad \beta \in \cup_{j=1}^4 \X_j \setminus \{\pm \ii\}.
\end{align*}

\item $\M^P(y,\beta) = \I + O(\beta^{-1})$ as $\beta \to \infty$ and $\M^P(y,\beta) = \mathcal{O}(1)$ as $\beta \to \pm \ii$.
\ei
\end{RHP}

\begin{lemma}\label{mPlemma}
For each $y \in \R$, RH problem \ref{RHmP} has a unique solution $\M^P(y,\beta)$ with the following properties:
\bi
\item
There are smooth functions $\{\M_j^P(y)\}_{j=1}^\infty$ of $y \in \R$ such that, for each integer $N \geq 0$,
\begin{align}\label{mPasymptotics}
\M^P(y, \beta) = \I + \sum_{j=1}^N \frac{\M_j^P(y)}{\beta^j} + O(\beta^{-N-1}), \qquad \beta \to \infty,
\end{align}
uniformly for $y$ in compact subsets of $\R$ and for $\arg \beta  \in [0,2\pi]$.

\item
The leading term is given by
\begin{align}\label{m1Pexpression}
\M_1^{P}(y) = \begin{pmatrix} \frac{\ii}{2} \int_{\infty}^y u_{HM}(y')^2 dy'
& 0 & -\frac{u_{HM}(y)}{2} \\
0 & 0 & 0 \\
-\frac{u_{HM}(y)}{2} & 0 & -\frac{\ii}{2}\int_{\infty}^y u_{HM}(y')^2 dy'  \end{pmatrix},
\end{align}
where $u_{HM}$ is the Hasting--McLeod solution of Painlev\'e II, that is, the classical solution of the equation satisfying the  boundary conditions~\eqref{E:uHMbjtj}.

\item At $\beta= 0$, we have
\begin{align}\label{mPat0expression}
\M^P(y,0) = &\; \begin{pmatrix} \cosh{U_{HM}(y)} & 0 & -\ii \sinh{U_{HM}(y)} \\
0 & 1 & 0 \\
\ii \sinh{U_{HM}(y)}  & 0 & \cosh{U_{HM}(y)}
\end{pmatrix},
\end{align}
where $U_{HM}(y) := \int_{\infty}^y u_{HM}(y') dy'$.
\ei
\end{lemma}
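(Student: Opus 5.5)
The plan is to reduce RH problem~\ref{RHmP} to the classical $2\times2$ Flaschka--Newell Painlev\'e II RH problem. The jumps $\V^P_j$ act only on the $(1,3)$ block, so they are the identity in the second row and column. I will therefore look for the solution in the block form $\M^P=\I$ in the second row and column, with the $\{1,3\}\times\{1,3\}$ block given by a $2\times2$ function $\Psi(y,\beta)$. On $\X_1,\X_2$ the jumps of $\Psi$ are lower triangular, with entries $\pm\ii\e^{2\ii(y\beta+\frac43\beta^3)}$. On $\X_3,\X_4$ they are upper triangular, with entries $\pm\ii\e^{-2\ii(y\beta+\frac43\beta^3)}$. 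After the standard collapse of the rays emanating from $\pm\ii$ onto the six rays $\arg\beta=\pi/6+k\pi/3$, these are exactly the Painlev\'e II jumps with Stokes multipliers $s_1=-s_3=\ii$ (up to the paper's sign convention) and $s_2=0$. This is the Hastings--McLeod choice, since the condition $s_1s_3=1$ for HM corresponds to $s_1=\ii$, $s_3=-\ii$. Existence and uniqueness for every real $y$ then follow from the known solvability of the HM RH problem: there are no poles of $u_{HM}$ on $\R$, which is the theorem of Hastings--McLeod together with the vanishing lemma of Fokas--Its--Kapaev--Novokshenov. Uniqueness in the $3\times3$ class also follows from $\det\M^P\equiv1$ and Liouville's theorem, as in Appendix~\ref{App:AAAA1}.

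For the expansion \eqref{mPasymptotics}, I will use that the jumps decay super-exponentially along the rays. The Cauchy-integral representation of $\Psi$ then yields a full asymptotic series in $\beta^{-1}$, uniform in $\arg\beta$ and in $y$ on compacts. Smoothness in $y$ comes from differentiating the singular integral equation. The coefficient $\M^P_1$ is identified through the Lax pair. Writing $\Psi\e^{-\ii(\frac43\beta^3+y\beta)\sigma_3}$, this function satisfies $\partial_y(\cdot)=(-\ii\beta\sigma_3+\mathbf{U})(\cdot)$ with $\mathbf{U}$ off-diagonal. Matching the $\beta^0$ terms gives $\mathbf{U}=\ii[\sigma_3,\Psi_1]$, and hence $(\Psi_1)_{12}=(\Psi_1)_{21}=-u/2$. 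The $\beta^{-1}$ term gives $\partial_y(\Psi_1)_{11}=\frac{\ii}{2}u^2$, and since the diagonal entries vanish as $y\to+\infty$, this produces the integral $\int_\infty^y u^2$. The symmetry $\Psi(-\beta)=\sigma_1\Psi(\beta)\sigma_1$, inherited from the jumps, forces the antidiagonal entries to coincide. Finally, the $\beta$-equation combined with compatibility gives $u''=yu+2u^3$, and the Stokes data select $u_{HM}$ with the boundary behavior \eqref{E:uHMbjtj}.

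For \eqref{mPat0expression}, the point $\beta=0$ lies in the sector between $\X_1,\X_2$ and $\X_3,\X_4$, where $\Psi$ is analytic. I will differentiate $\Psi(y,0)$ in $y$ using the Lax equation at $\beta=0$, which gives $\partial_y\Psi(y,0)=\mathbf{U}(y)\Psi(y,0)$, a linear ODE with off-diagonal coefficient proportional to $u$. The entries of $\mathbf{U}$ are both $\ii u$ and $-\ii u$ up to sign, so the coefficient commutes with itself at different $y$. The solution is therefore an exponential, namely $\exp(U_{HM}\,\sigma)$ with $\sigma=\begin{pmatrix}0&-\ii\\ \ii&0\end{pmatrix}$, which yields the cosh/sinh form. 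The initial condition $\Psi(y,0)\to\I$ as $y\to+\infty$ follows from the small-norm argument, because the jumps are exponentially small when $y\to+\infty$.

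The main obstacle is this last normalization, together with the uniformity of $\Psi(y,0)\to\I$ as $y\to+\infty$: one needs a deformation through the saddles $\pm\frac{\ii}{2}\sqrt{y}$ of the phase. I would adapt the corresponding argument from \cite[Appendix~A]{CL2024main}, where the identical $3\times3$ embedding was treated.
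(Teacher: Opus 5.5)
Your route is the paper's. The paper collapses the rays to the origin, identifies RH problem~\ref{RHmP} with the Hastings--McLeod problem, and cites \cite{FIKN2006} for all three items. You do the same reduction and in addition derive $\M^P_1$ and $\M^P(y,0)$ from the Lax pair.

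That derivation has one step that is wrong as written: the symmetry $\Psi(-\beta)=\sigma_1\Psi(\beta)\sigma_1$ does not hold. With the orientations in the figure, $-\X_1=\X_3$ with reversed orientation. So for $\beta\in\X_3$ the function $\sigma_1\Psi(-\beta)\sigma_1$ has jump $\sigma_1\bigl(\V^P_1(-\beta)\bigr)^{-1}\sigma_1$. In the $\{1,3\}$ block this equals $\begin{pmatrix}1&-\ii\e^{-2\ii\theta}\\0&1\end{pmatrix}=(\V^P_3)^{-1}$, not $\V^P_3$, where $\theta=y\beta+\frac43\beta^3$. The symmetry that actually holds is
\[
\Psi(-\beta)=\sigma_2\Psi(\beta)\sigma_2,\qquad \sigma_2=\begin{pmatrix}0&-\ii\\ \ii&0\end{pmatrix},
\]
because $\sigma_2\begin{pmatrix}1&0\\c&1\end{pmatrix}\sigma_2=\begin{pmatrix}1&-c\\0&1\end{pmatrix}$.

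This is not cosmetic. At order $\beta^{-1}$ your $\sigma_1$ version would force $(\Psi_1)_{21}=-(\Psi_1)_{12}$, hence $\mathbf U=\ii[\sigma_3,\Psi_1]=-\ii u\sigma_1$. The ODE at $\beta=0$ would then give $\cos U$, $\sin U$ in place of $\cosh U$, $\sinh U$. Indeed, \eqref{mPat0expression} is invariant under $\sigma_2$-conjugation but not under $\sigma_1$-conjugation. With $\sigma_2$ you get $(\Psi_1)_{22}=-(\Psi_1)_{11}$ and $(\Psi_1)_{12}=(\Psi_1)_{21}$. Then $\mathbf U=u\sigma_2$ and $\Psi(y,0)=\exp\bigl(U_{HM}(y)\sigma_2\bigr)$, and the rest of your argument goes through.

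Two smaller points.

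First, the sign of $u$ is not a matter of convention, since $-u_{HM}$ also solves Painlev\'e II. It is fixed by the first Born term as $y\to+\infty$:
\[
(\M^P_1)_{13}\approx-\frac{1}{2\pi\ii}\int_{\X_3\cup\X_4}(\V^P-\I)_{13}\,d\zeta=-\tfrac12\Ai(y),
\]
which agrees with \eqref{m1Pexpression}. To show this term dominates, you need exactly the deformation of the vertices to the saddles $\pm\frac{\ii}{2}\sqrt y$ that you mention. There the jumps are $\mathcal O(\e^{-\frac23 y^{3/2}})$, so the quadratic remainder is negligible compared with $\Ai(y)$.

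Second, the normalization $\Psi(y,0)\to\I$, $\Psi_1\to0$ as $y\to+\infty$, which you single out as the main obstacle, needs no deformation. With the vertices left at $\pm\ii$ one already has $|\e^{\pm2\ii\theta}|\le C\e^{-2y}$ on the corresponding rays, and a standard small-norm argument applies.
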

\begin{proof}
Since the jump contour can be arranged to intersect at the origin through a simple contour deformation, RH problem~\ref{RHmP}  is reduced to  the RH problem associated with the Hastings-McLeod solution of the Painlev\'e II equation. Thus all the assertions above are well-known results. For details of the proofs, we refer the reader to~\cite[Proposition 5.2 \&  Theorem 11.7]{FIKN2006}.
\end{proof}

\begin{figure}
\centering
\begin{tikzpicture}[
    scale=0.8,
    line cap=round,
    line join=round,
    thickline/.style={line width=0.7mm},
    thickdashed/.style={dashed, line width=0.9mm},
    midarrow/.style={
        thickline,
        postaction={decorate},
        decoration={markings, mark=at position 0.55 with {\arrow{latex}}}
    },
    midarrowdashed/.style={
        thickdashed,
        postaction={decorate},
        decoration={markings, mark=at position 0.5 with {\arrow{latex}}}
    },
    bluethick/.style={line width=0.7mm, blue},
    bluemidarrow/.style={
        line width=0.7mm,
        blue,
        postaction={decorate},
        decoration={markings, mark=at position 0.55 with {\arrow{latex}}}
    }
]

\draw[midarrow] (-4,-1.74) -- (-2.5,-0.87);
\draw[thickline] (-2.5,-0.87) -- (-1,0);
\draw[midarrow] (-1,0) -- (0,0);
\draw[thickline] (0,0) -- (1,0);
\draw[midarrow] (1,0) -- (2.5,0.87);
\draw[midarrow] (2.5,0.87) -- (4,1.74);


\draw[bluethick, rotate=45] (-4,0) -- (2,0);
\draw[bluemidarrow, rotate=45] (2,0) -- (4,0);
\draw[bluemidarrow, rotate=45] (-4,0) -- (-2,0);

\node[right] at (-2.9,-1.6) {$\Omega$};
\node[left]  at ( 2.9, 1.6) {$\Omega$};

\end{tikzpicture}
\caption{\small
The black solid contour is $E$, and the blue solid contour is 
$\widetilde E=\mathbb R \e^{\pi\ii/4}$. The region between them is denoted by $\Omega$.
}
\label{Efig}
\end{figure}

\section{Error function model problem}\label{App:AAA2}
Let the countor $E$ is shown in Fig.~\ref{Efig}. Consider the following RH problem.
\begin{RHP}[RH problem for $\M^E$]\label{RHME}
Find a $3 \times 3$-matrix valued function $\M^E(s,\alpha)$ with the following properties:
\bi
\item $\M^E(s,\cdot) : \C \setminus E \to \mathbb{C}^{3 \times 3}$ is analytic.

\item $\M^E(s,\cdot)$ has continuous boundary values on $E$ satisfying the jump relation
\begin{align*}
  \M^E_+(s,\alpha) = \M^E_-(s, \alpha) \V^E(s, \alpha), \qquad \alpha \in  E,
\end{align*}
where
$$
\V^E=\bpm
1& \ii s \e^{\ii \alpha^2} &0\\
0&1&0\\
0&-s \e^{\ii \alpha^2}&1
\epm.
$$

\item $\M^E(s, \alpha) = \I + O(\alpha^{-1})$ as $\alpha  \to \infty$.

\ei
\end{RHP}
The following lemma contains the results we need about $\M^E(s,\alpha)$.

\begin{lemma}\label{MElemma}
RH problem \ref{RHME} has a unique solution $\M^E(s,\alpha)$ . For each integer $N \geq 0$,
\begin{align}\label{MEasymptotics}
\M^E(s, \alpha) = \I + \sum_{j=0}^N \frac{\M_{2j+1}^E(s)}{\alpha^{2j+1}} + O(\alpha^{-2N-3}), \qquad \alpha \to \infty,
\end{align}
uniformly for $\mathrm{arg} \alpha \in [0,2\pi]$, where the leading coefficient is given by
\begin{align}\label{M1Eexpression}
\M_1^E(s) =- \frac{s e^{\frac{\pi i}{4}}}{2 \sqrt{\pi}}  \begin{pmatrix}
 0 & 1 & 0 \\
 0 & 0 & 0 \\
 0 & \ii & 0
\end{pmatrix}.
\end{align}
\end{lemma}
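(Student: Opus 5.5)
The plan is to solve RH problem~\ref{RHME} explicitly. The jump matrix $\V^E$ is unipotent and differs from $\I$ only in its second column. So the problem reduces to two scalar additive jump problems, and both the existence claim and~\eqref{MEasymptotics} follow from explicit Cauchy integrals.

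\emph{Step 1: reduction and explicit solution.} Writing $\M^E=(\M^E_1,\M^E_2,\M^E_3)$ in columns, the jump relation reads $\M^E_{1,+}=\M^E_{1,-}$, $\M^E_{3,+}=\M^E_{3,-}$ and
\[
\M^E_{2,+}=\M^E_{2,-}+\ii s\e^{\ii\alpha^2}\M^E_{1,-}-s\e^{\ii\alpha^2}\M^E_{3,-}.
\]
Columns one and three have no jump. If they are bounded, they are entire, and the normalization at infinity forces $\M^E_1\equiv\ce_1$ and $\M^E_3\equiv\ce_3$. Then $\M^E=\I+f\,\ce_1\ce_2^\top+g\,\ce_3\ce_2^\top$, where $f,g$ vanish at infinity and satisfy $f_+-f_-=\ii s\e^{\ii\alpha^2}$ and $g_+-g_-=-s\e^{\ii\alpha^2}$ on $E$. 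I take
\[
f(\alpha)=\frac{1}{2\pi\ii}\int_E\frac{\ii s\e^{\ii\zeta^2}}{\zeta-\alpha}\,d\zeta,\qquad g(\alpha)=\frac{1}{2\pi\ii}\int_E\frac{-s\e^{\ii\zeta^2}}{\zeta-\alpha}\,d\zeta .
\]
These integrals converge absolutely: on the rays of $E$ in the directions $\e^{\ii\pi/6}$ and $\e^{-5\ii\pi/6}$ we have $\re(\ii\zeta^2)=-|\zeta|^2\sin(\pi/3)<0$, and the middle segment is compact. By Plemelj, this gives a solution. For uniqueness, note $\det\V^E=1$, so $\det\M^E\equiv1$. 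If $\widetilde\M$ is a second solution, then $\widetilde\M(\M^E)^{-1}$ has no jump, tends to $\I$, and is therefore $\I$ by Liouville.

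\emph{Step 2: expansion and the leading coefficient.} I deform $E$ to $\widetilde E=\R\e^{\ii\pi/4}$ across the region $\Omega$. Throughout $\Omega$, $\arg\zeta^2\in(0,\pi/2)$ up to the compact middle piece, so $\e^{\ii\zeta^2}$ decays there and the deformation is justified. For $\alpha\notin\Omega$ the integrals over $E$ and $\widetilde E$ coincide. For $\alpha\in\Omega$ they differ by a residue term $\pm\ii s\e^{\ii\alpha^2}$ (respectively $\mp s\e^{\ii\alpha^2}$), which is $\mathcal O(\e^{-c|\alpha|^2})$ as long as $\alpha$ stays a fixed angle away from $\widetilde E$. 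On $\widetilde E$ I expand
\[
\frac{1}{\zeta-\alpha}=-\sum_{k=0}^{2N+1}\frac{\zeta^k}{\alpha^{k+1}}+\frac{\zeta^{2N+2}}{\alpha^{2N+2}(\zeta-\alpha)}.
\]
The moments $\int_{\widetilde E}\zeta^k\e^{\ii\zeta^2}\,d\zeta$ vanish for odd $k$ by the symmetry $\zeta\mapsto-\zeta$, so only odd powers $\alpha^{-(2j+1)}$ survive. This is the structure claimed in~\eqref{MEasymptotics}. For $k=0$, the substitution $\zeta=\e^{\ii\pi/4}u$ gives $\int_{\widetilde E}\e^{\ii\zeta^2}\,d\zeta=\e^{\ii\pi/4}\sqrt\pi$. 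Hence
\[
f=-\frac{s\e^{\ii\pi/4}}{2\sqrt\pi\,\alpha}+\cdots,\qquad g=\frac{s\sqrt\pi\,\e^{\ii\pi/4}}{2\pi\ii\,\alpha}+\cdots=-\frac{\ii s\e^{\ii\pi/4}}{2\sqrt\pi\,\alpha}+\cdots,
\]
which is exactly~\eqref{M1Eexpression}.

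\emph{Main obstacle: uniformity in $\arg\alpha$.} The remainder term $\alpha^{-2N-2}\int\zeta^{2N+2}\e^{\ii\zeta^2}(\zeta-\alpha)^{-1}\,d\zeta$ must be shown to be $\mathcal O(\alpha^{-2N-3})$ uniformly in $\arg\alpha\in[0,2\pi]$. The difficulty is when $\alpha$ approaches the integration contour. I would handle this by using two admissible contours: $\widetilde E$ itself, and a second line $\R\e^{\ii\phi}$ with some fixed $\phi\in(\pi/6,\pi/4)$, on which $\e^{\ii\zeta^2}$ still decays. Which line is used depends on which one $\alpha$ stays a fixed angle away from, and the residue picked up when passing through $\alpha$ is exponentially small in the relevant sectors. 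For $\alpha$ at a fixed angular distance from the chosen line, $|\zeta-\alpha|\ge c(|\zeta|+|\alpha|)$. This bounds the remainder integral by $C|\alpha|^{-1}$, which gives the uniform error $\mathcal O(\alpha^{-2N-3})$ and completes the proof.
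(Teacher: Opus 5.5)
Your proposal is correct. Structurally it follows the paper's proof: reduce to scalar additive problems for the $(1,2)$ and $(3,2)$ entries, then move the contour to $\widetilde E=\R\e^{\ii\pi/4}$. The genuine difference is in how you extract the expansion.

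\emph{The paper's route.} It evaluates the Cauchy integral over $\widetilde E$ in closed form through $\mathrm{erf}$ and quotes the classical asymptotic series of $\mathrm{erf}$. That series is uniform in sectors, and the two erf formulas on either side of $\widetilde E$ already encode the behaviour near the contour. So uniformity in $\arg\alpha$ comes essentially for free.

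\emph{Your route.} You expand the Cauchy kernel geometrically, so the coefficients are Gaussian moments, $\int_{\widetilde E}\zeta^{2j}\e^{\ii\zeta^2}\,d\zeta=\e^{\ii\pi(2j+1)/4}\,\Gamma(j+\tfrac12)$. Parity gives the odd-power structure at once, every $\M^E_{2j+1}$ becomes explicit, and no special-function input is needed. The price is that you must prove uniformity near the contours by hand. Your two-line argument does this correctly:
\begin{itemize}
\item every direction lies a fixed angle away from $\widetilde E$ or from $\R\e^{\ii\phi}$;
\item every residue you pick up is a multiple of $\alpha^{m}\e^{\ii\alpha^2}$, and it arises only when $\alpha$ lies between two of the contours, i.e.\ asymptotically in the sectors $\arg\alpha\in[\pi/6,\pi/4]\cup[7\pi/6,5\pi/4]$;
\item in those sectors $\e^{\ii\alpha^2}=\mathcal O(\e^{-c|\alpha|^2})$.
\end{itemize}
You also treat existence and uniqueness explicitly, which the paper only asserts.

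Two minor remarks.
\begin{enumerate}
\item Your qualifier ``as long as $\alpha$ stays a fixed angle away from $\widetilde E$'' on the residue term is unnecessary. The factor $\e^{\ii\alpha^2}$ is exponentially small throughout $\Omega$ outside a compact set, and it is in fact smallest on $\widetilde E$ itself.
\item The residue term does change sign between the two components of $\Omega$, so your $\pm$ bookkeeping is right. Correspondingly, the lower-left component of $\Omega$ lies on the minus side of $E$. There the paper's $\widetilde{\G}$ should be $\V^E$ rather than $(\V^E)^{-1}$ if the new jump on $\widetilde E$ is to be $\V^E$.
\end{enumerate}
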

\begin{proof}
First, we transform the jump contour from $E$ to $\widetilde{E}$ by introducing the transformation
$$
\M^{\widetilde{E}}=\M^E \widetilde{\G}, \quad
\widetilde{\G}=\begin{cases}
(\V^{E})^{-1}, & \alpha \in \Omega,\\
\I, & elsewhere.
\end{cases}
$$
The contour $\widetilde{E}=\R \e^{\frac{\pi}{4} \ii}$ and the region $\Omega$ are shown in Figure~\ref{Efig}.
It is easy to verify that $\M^{\widetilde{E}}$ has a jump across $\widetilde{E}$, and its jump matrix $\V^{\widetilde{E}}$ is equal to $\V^E$. Moreover, from the expression of $\V^E$, it can be seen that $\widetilde{\G}$ exhibits the following asymptotic behavior:
\be \label{tildeGasyp}
\widetilde{\G}(s,\alpha)=
\bpm
1& \mathcal{O}(\e^{-c |\alpha| })&0\\
0&1&0\\
0&\mathcal{O}(\e^{-c |\alpha| })&1
\epm,  \quad \alpha
\to \infty.
\ee
Therefore, it suffices to study the asymptotic behavior of $\M^{\widetilde{E}}$ as $\alpha \to \infty$.

First, it is easy to see that  the first and third columns of $\M^{\widetilde{E} }$ are constant and coincide with the corresponding columns of the $3\times 3$ identity matrix, respectively. Furthermore, we find that $\M^{\widetilde{E} }$ has the following form:
\begin{align}\label{MEexplicit}
\M^E(s,\alpha)=
\bpm
1&X_{12}&0\\
0&1&0\\
0&X_{32}&1
\epm,
\end{align}
where $X_{12}(s,\alpha)$ and $X_{32}(s,\alpha)$ are functions to be determined. In fact, the jump condition $\M^{\widetilde{E}}_+=\M^{\widetilde{E}}_- \V^{\widetilde{E}}$ is equivalent to
\be
\begin{cases}
(X_{12})_+-(X_{12})_-=\ii s  \e^{\ii \alpha^2}, & \alpha \in \R \e^{\frac{\pi}{4} \ii}
,\\
X_{12} \to 0, & \alpha \to \infty,
\end{cases}\quad
\begin{cases}
(X_{32})_+-(X_{32})_-=-s  \e^{\ii \alpha^2}, & \alpha \in \R \e^{\frac{\pi}{4} \ii}
,\\
X_{32} \to 0, & \alpha \to \infty.
\end{cases}
\ee
Then, from the Plemelj's formula, it follows that
\be
\begin{aligned}
&X_{12}(s,\alpha)=\frac{1}{2 \pi \ii} \int_{\R \e^{\frac{\pi}{4} \ii}}\frac{\ii s  \e^{\ii \zeta^2}}{\zeta-\alpha} \mathrm{d}\zeta
=\frac{\ii s}{2} \left[
\frac{1}{\pi \ii} \int_{-\infty}^{\infty} \frac{\e^{- \tau^2}}{\tau - \e^{-\frac{\pi}{4} \ii}  \alpha} \mathrm{d} \tau
\right],
\\
&X_{32}(s,\alpha)=\ii X_{12}(s,\alpha).
\end{aligned}
\ee
Using the following integral representation for the error function  $\mathrm{erf}(z)$ (see~\cite[Eq.(7.7.2)]{NIST2022} ):
$$\e^{-z^2}(1 \mp \mathrm{erf}(-\ii z)) = \pm \frac{1}{\pi \ii}\int_{-\infty}^\infty \frac{\e^{-t^2} dt}{t-z}, \qquad  \Im z \gtrless 0,$$
we find
\begin{align}\label{X12explicit}
X_{12}
=
\begin{cases}
\frac{\ii s}{2} \e^{\ii \alpha^2}  \left( 1-\mathrm{erf}(\alpha \e^{-\frac{3\pi}{4}\ii}) \right), &  \mathrm{arg} \alpha \in (\frac{\pi}{4}, \frac{5 \pi}{4}),\\
\frac{\ii s}{2} \e^{\ii \alpha^2}  \left( \mathrm{erf}(\alpha \e^{\frac{\pi}{4}\ii})-1 \right), &  \mathrm{arg} \alpha \in (-\frac{3 \pi}{4}, \frac{ \pi}{4}).
 \end{cases}
\end{align}
Furthermore, for each $\varepsilon > 0$, the error function $\mathrm{erf}{z}$ satisfies the asymptotic formula
\begin{align}\label{erfasymptotics}
\mathrm{erf}(z) & \sim 1- \frac{\e^{-z^2}}{\sqrt{\pi}} \sum_{j=0}^\infty \frac{(\frac{1}{2}-1)(\frac{1}{2}-2) \cdots (\frac{1}{2}-j)}{z^{2j+1}}
= 1- \frac{\e^{-z^2}}{\sqrt{\pi}} \bigg(\frac{1}{z} - \frac{1}{2z^3} + O(z^{-5})\bigg)
\end{align}
as $z \to \infty$ uniformly for $\mathrm{arg} z \in [-\frac{3\pi}{4} + \varepsilon, \frac{3\pi}{4} - \varepsilon]$.
Using (\ref{X12explicit}), the asymptotics (\ref{erfasymptotics}) for $\mathrm{erf} (z)$, and the relation $\mathrm{erf}(-z) = -\mathrm{erf}(z)$, we find the asymptotic expansion  of $\M^{\widetilde{E}}$ as $\alpha \to \infty$.
Asymptotic expansion~\ref{MEasymptotics} can be obtained by combining the asymptotic behavior of $\M^{\widetilde{E}}$ with that of $\widetilde{\G}$(see~\ref{tildeGasyp}).
\end{proof}

\section*{Acknowledgments}
This work is supported by National Natural Science Foundation of China (Grant Nos. 12471234, 12471240, 12571268, 12171439) and Science Foundation of Henan Academy of Sciences (Grant No. 20252319002).\\

\noindent{\bf Data Availability Statement} Data sharing not applicable to this article as no datasets were generated or analyzed during the current study.

\subsection*{Declarations}

{\bf Conflict of interest statement} We declare that there is no conflict of interests.

\end{document}